\def\phfthm@proofrefstyle@sec@setup{%
  \phfthm@proofrefstyle@default@setup
  \def\phfthm@proofref@impl@fmt##1##2{
  \parfillskip=0pt\relax%
    \hfil\null\hfil\null\hfil%
    \hbox{\proofrefsize{(Proof in \cref{sec:proof:##1})}}\par
  }%
}
\DeclareMathOperator{\obs}{obs}
\DeclareMathOperator{\state}{state}
\DeclareMathOperator{\Proj}{Proj}
\DeclareMathOperator{\U}{U}
\DeclareMathOperator{\Id}{I}
\DeclareMathOperator{\Herm}{Herm}
\DeclareMathOperator{\Lin}{Lin}
\DeclareMathOperator{\Isom}{Isom}
\DeclareMathOperator{\gap}{gap}
\DeclareMathOperator{\ad}{ad}
\DeclareMathOperator{\diff}{d}
\title{Going beyond gadgets: The importance of scalability for analogue quantum simulators}
\author
{Dylan Harley,$^{1,\ast}$ Ishaun Datta,$^{2}$ Frederik Ravn Klausen,$^{1}$ Andreas Bluhm,$^{3}$\\ Daniel Stilck Fran\c{c}a,$^{4}$ Albert H. Werner,$^{1}$ Matthias Christandl$^{1}$\\
\\
\normalsize{$^{1}$Department of Mathematical Sciences, University of Copenhagen,}\\
\normalsize{Universitetsparken 5, 2100 Copenhagen, Denmark}\\
\normalsize{$^{2}$Stanford University,}\\
\normalsize{450 Serra Mall, Stanford, CA 94305, USA}\\
\normalsize{$^{3}$Univ. Grenoble Alpes, CNRS, Grenoble INP, LIG,}\\
\normalsize{38000 Grenoble, France}\\
\normalsize{$^{4}$Univ. Lyon, ENS Lyon, UCBL, CNRS, Inria, LIP}\\
\normalsize{Lyon Cedex 07, France}\\
\normalsize{$^\ast$email: dh@math.ku.dk }
}
\date{}
\begin{document}
\maketitle
\begin{abstract}
Quantum hardware has the potential to efficiently solve computationally difficult problems in physics and chemistry to reap enormous practical rewards. Analogue quantum simulation accomplishes this by using the dynamics of a controlled many-body system to mimic those of another system; such a method is feasible on near-term devices. We show that previous theoretical approaches to analogue quantum simulation suffer from fundamental barriers which prohibit scalable experimental implementation. By introducing a new mathematical framework and going beyond the usual toolbox of Hamiltonian complexity theory with an additional resource of engineered dissipation, we show that these barriers can be overcome. This provides a powerful new perspective for the rigorous study of analogue quantum simulators.
\end{abstract}

\section*{Introduction}

The simulation of quantum systems has long been identified as a potential application for quantum technologies \cite{feynman1982simulating}, for which long-term benefits may range from condensed matter physics to quantum chemistry and the life sciences \cite{wecker2015solving,baiardi2023quantum}. This problem is classically intractable, owing to exponential growth in the number of parameters required to describe the state of a many-body system, whereas the advantage of quantum hardware for this purpose is obvious: one merely has to prepare the required many-body state. On a universal quantum computer, time evolution can then be discretised and approximated by a quantum circuit, through a series of quantum gates \cite{lloyd1996universal}. This approach, known as digital quantum simulation \cite{georgescu2014quantum}, has seen extensive theoretical development \cite{berry2015hamiltonian,low2019hamiltonian} and remains a promising route towards attaining quantum advantage \cite{childs2018toward}. However, useful and scalable simulations remain out of reach for near-term technology \cite{preskill2018quantum} due to the requirement of a large universal fault-tolerant quantum computer. In this work, we focus on an alternative approach: analogue quantum simulation.

Broadly speaking, an analogue quantum simulator consists of an engineered and well-controlled many-body system with adjustable interactions, with the capability to prepare initial states and perform measurements \cite{cirac2012goals}. By tuning such a system, one aims to mimic a different target system; in this way, computing the dynamics of the target system can be accomplished through the native time evolution of the simulator, without requiring the application of a universal set of gates. These more modest physical requirements promise near-term potential for analogue quantum simulation, despite the inherent limitations fixed by a given experimental apparatus.

Characterisation of analogue quantum simulators is, unlike the digital case, relatively under-explored from a theoretical perspective. Existing work in this direction includes that of Cubitt et al.\cite{cubitt2018universal}, in which the authors define a notion of Hamiltonian simulation in terms of low-energy encodings: a low-energy subspace of the simulator Hamiltonian is required to approximate the spectrum of the target Hamiltonian. This notion has been extraordinarily successful in making complexity-theoretic reductions between various Hamiltonians, leading to the classification of many so-called universal families \cite{piddock2020universal,piddock2021universal,kohler2020translationally,kohler2022general,zhou2021strongly} which have the power to simulate all of many-body physics. Such reductions do not necessarily aim to capture experimental possibilities, however: as we prove, the relatively simple task of encoding a system of $n$ non-interacting qutrits into a linear number of qubits in this regime ends up requiring a simulator system whose individual interactions scale as $\Omega(n)$. This scaling arises due to the dimension mismatch when one encodes a qutrit into a set of qubits, resulting in unwanted local configurations which must be prohibited in the low-energy subspace by a large energy penalty (see \cref{fig:qutritqubit}). Similar scalings are observed to arise through the use of Hamiltonian gadgets \cite{oliveira2008complexity,bravyi2017complexity}, a tool used for many Hamiltonian reductions. Although the qutrit-to-qubit result does not extend to the case where the $n$ qutrits are simulated by $\Omega(n^2)$ qubits, we also note that blowing up the system size may in some cases necessitate strong interactions in order for correlations to spread fast enough through the enlarged system.

\begin{figure}
    \centering
    \includegraphics{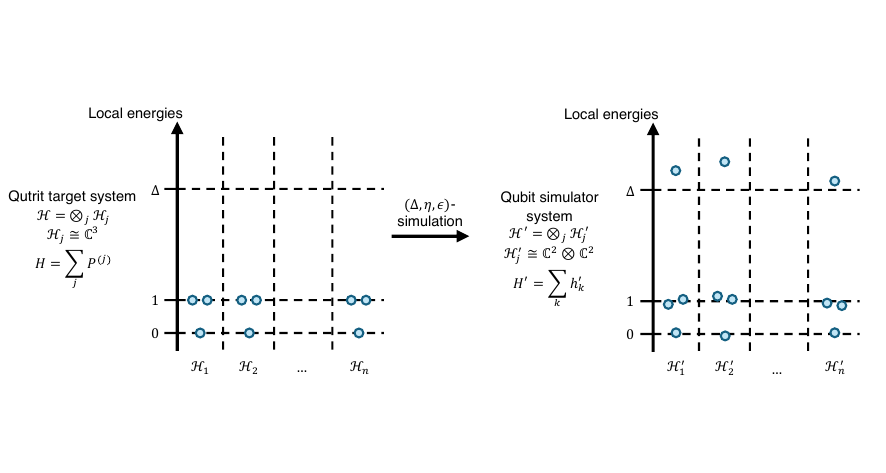}
    \caption{Qutrit-to-qubit encoding energies. A sketch of the on-site energies for a system of non-interacting qutrits under a low-energy encoding, before and after simulation in the sense introduced by Cubitt et al.\cite{cubitt2018universal}. In this example, each qutrit is mapped to a system of two qubits. The original Hamiltonian consists of a sum $\sum_j P^{(j)}$ of rank two projectors $P^{(j)}$ applied to each qutrit, resulting in energy levels 0,1,1. In order to simulate the qutrit Hamiltonian, an energy penalty of at least $\Delta$ must be given to one of the four local states at each site. This is a simplification: in general the simulator sites $\mathcal{H}_j^\prime$ may interact and hence local energies are not well-defined.}
    \label{fig:qutritqubit}
\end{figure}

We argue that for practical applications to large-scale many-body simulations such as for quantum chemistry, the simulation of an $n$-site many-body system should not require the implementation of individual interactions whose magnitude scales with $n$. The necessity of this requirement is clear from a logistical perspective, since an experimental device will only be able to implement a bounded range of energy scales on for a single interaction. However, there is also philosophical motivation to be suspicious of such scalings: a many-body Hamiltonian is inherently a modular object, and an analogue quantum simulation should reflect this. The addition of a few qubits and local interactions to one end of the physical system should require an analogous action on the simulator --- it should not require the adjustment of every other interaction in the system.

Despite this additional requirement of scalability, there is also a sense in which the framework of \cite{cubitt2018universal} can be relaxed: the requirement to simulate the full physics of a target Hamiltonian in the low-energy subspace of a simulator is unnecessary in many cases. For example, one may only wish to simulate a specific set of local observables, or exploit symmetries to restrict to an invariant subspace under the Hamiltonian (a regime explored, in the case of a low-energy subspace, by Aharonov et al.\cite{aharonov2018hamiltonian}). Furthermore, as the experimental distinction between analogue and digital devices becomes increasingly blurred \cite{preskill2018quantum,bluvstein2022quantum}, it is important to consider a range of experimental possibilities beyond pure Hamiltonian evolution, such as intermediate unitary pulses and open-system dynamics.

In this work, we propose a mathematical framework for analogue quantum simulators to address the above points and capture the full scope of experimentally realisable systems. We additionally develop a general characterisation for Hamiltonian gadgets, and find rigorous no-go results for their scalable use for locality reduction. Finally, we construct a new dissipative gadget which circumvents the restrictions we find in the pure Hamiltonian case. In the regime of scalable quantum simulators we do not expect to talk about a simple class of universal Hamiltonians which can simulate all others in any sense resembling previous results. On the other hand, the more general notion of simulation which we outline in this work gives rise to a new notion of universality, not phrased in terms of Hamiltonian classification but rather the dynamics of observables. We expect that here a resource theory of simulation should arise, with the power of simulators related by a partial order in analogy to the theory of multipartite states and tensor networks \cite{dur2000three,walter2013entanglement,christandl2023resource}.

\section*{Results}

\paragraph*{The analogue quantum simulator.} In this section, we describe our mathematical framework for analogue quantum simulation, for which further details can be found in the Methods section. The capabilities of a simulator are characterised by a target Hamiltonian $H$, a set of states $\Omega_{\state}$, and a set of observables $\Omega_{\obs}$; the goal of the simulator is then to approximate the evolution of the observables in $\Omega_{\obs}$ under $H$, starting from initial states in $\Omega_{\state}$. Restricting the set of states $\Omega_{\state}$ may offer practical and theoretical benefits: for example, one could reduce to those that can be reliably prepared on an experimental device, or take advantage of the symmetries of $H$ to restrict $\Omega_{\state}$ to a specific invariant subspace and simulate only the reduced Hamiltonian. Likewise, $\Omega_{\obs}$ may reflect the capabilities of the measurement apparatus, or for a many-body system one might take advantage of a highly localised set of observables to only simulate their Lieb-Robinson light cone \cite{lieb1972finite}, significantly reducing the hardware overhead necessary to simulate for small times. Such techniques have been studied in the context of many-body state exploration \cite{kim2017robust,borregaard2021noise}, and more recently in the realm of analogue simulation \cite{trivedi2022quantum}.

An analogue quantum simulator can be mathematically described in terms of three components, which we illustrate in \cref{fig:generalsimcomparison}. Firstly, a state encoding $\mathcal{E}_{\state}$, which maps initial states from the target set $\rho \in \Omega_{\state}$ into the simulator system. This is defined in terms of a quantum channel, allowing one to interpret the target state as a quantum input to the simulator, in contrast to the regime of fault-tolerant digital quantum computers whose input is ultimately classical. Next, the simulator's time evolution is specified by a family of quantum channels $\{T_t\}_{t\in [0,t_{\max}]}$. These describe the dynamics of the simulator, for example the evolution under a simulator Hamiltonian $H^\prime$. However, one could also consider $T_t$ accounting for interactions with a bath (such capability is required by the criteria for analogue simulators given by Cirac et al.\cite{cirac2012goals}), modelling errors, or capturing other engineered controls reflecting the possibilities of the experimental apparatus. The final component of the simulator is an encoding for observables $\mathcal{E}_{\obs}$, a unital and completely positive map which sends an observable of interest $O\in \Omega_{\obs}$ to the relevant observable to be measured on the simulator system.

\begin{figure}
    \centering
    \includegraphics{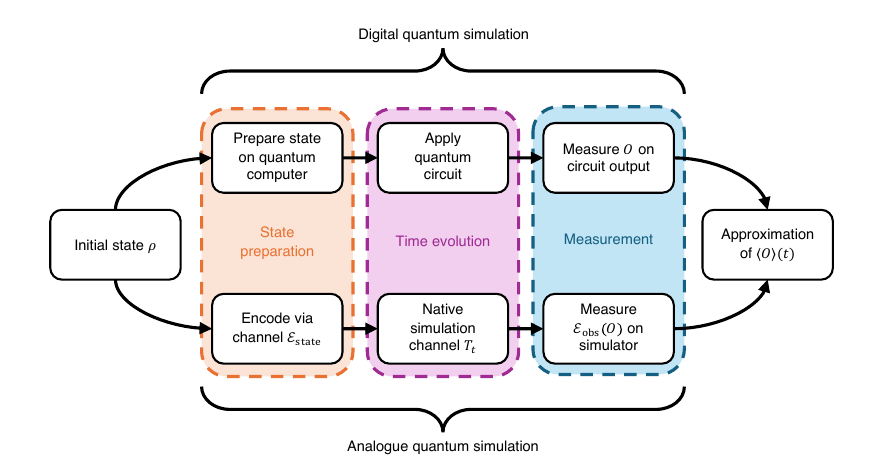}
    \caption{Analogue and digital simulation. A schematic description of our framework for analogue quantum simulation, in contrast with the digital approach. Both approaches aim to compute the observable expectation value $\langle O\rangle (t) = \tr[Oe^{-itH} \rho e^{itH}]$, given an initial state $\rho$. The analogue simulator, which uses state encodings $\mathcal{E}_{\state}$ and $\mathcal{E}_{\obs}$ respectively, has accuracy $\epsilon$ if $| \tr[O e^{-itH} \rho e^{itH} ] - \tr[\mathcal{E}_{\obs}(O) (T_t \circ \mathcal{E}_{\state} ) (\rho) ]| \leq \epsilon$.}
    \label{fig:generalsimcomparison}
\end{figure}

After the encoding and time evolution steps for a given initial state $\rho \in \Omega_{\state}$ and time $t \in [0,t_{\max}]$, the simulator system lies in the state $(T_t \circ \mathcal{E}_{\state})(\rho)$, upon which one measures the encoded observable $\mathcal{E}_{\obs}(O)$ for a chosen $O \in \Omega_{\obs}$. The simulation has accuracy $\epsilon$ if the expectation value of this measurement is within $\epsilon$ of its target value --- that is, the expected value of measuring $O$ on $e^{-itH} \rho e^{itH}$. Note that rather than a Hamiltonian $H$, one could just as easily consider the simulation of an open target system, for instance described by a quantum dynamical semigroup \cite{gorini1976completely,lindblad1976generators}.

For practical simulators, some constraints must be placed on the maps used in this definition. We generally assume that both $\mathcal{E}_{\state}$ and $\mathcal{E}_{\obs}$ are local in a sense we define, to ensure that local errors correspond to local noise on the target system, and that local observables can be measured locally on the simulator system. Moreover, the time evolution channel $T_t$ should be implementable without the need for feed-forward measurement results for adaptive control: the lack of error correction is an important and characteristic feature of analogue simulators.

\paragraph*{Generalising Hamiltonian gadgets.}\label{sec:gadgets} A ubiquitous technique for Hamiltonian reductions in complexity theory is the use of so-called Hamiltonian gadgets \cite{kempe20033,oliveira2008complexity,bravyi2008quantum,bravyi2017complexity}. These provide a recipe to simulate complicated many-body interactions from a more restrictive family, for example to simulate a 3-body interaction using 2-body interactions. In this section we arrive at a general formalism for such constructions, in order to prove that they are associated with unavoidable energy scalings which pose a significant challenge for experimental realisations. The usual procedure, as formulated by Bravyi et al.\cite{bravyi2017complexity} for example, is as follows: starting from a target Hamiltonian $H$ (which might be a single interaction in a far larger system), one first adjoins an ancillary qubit $m$. On this enlarged system one defines the gadget Hamiltonian by $H^\prime = \Delta \proj{1}_m + V$, where $\Delta \gg 1$ is a large parameter used to define a low-energy subspace approximately in terms of the $\ket{0}$ state of the mediator qubit, and $V$ is a relatively small term which, via a perturbative approximation, effectively simulates the target $H$ in this low-energy subspace.

It is not surprising that this method of construction generically requires strong interactions corresponding to the large value of $\Delta$ needed to provide a sufficiently high energy penalty, but it is not immediately clear that there is no way around this cost (possibly outside of the perturbative regime). Indeed, several works \cite{cao2015hamiltonian,cao2015perturbative,cichy2022perturbative} have explored the optimisation of Hamiltonian gadgets for practical implementation, though generally the problem scaling interactions is not eliminated entirely. In this work we produce a generalised framework for gadgets in order to prove a lower bound for such scalings, suggesting that such techniques may be unsuitable for experiments on large systems. Our results are summarised in \cref{fig:gadgetbitstructure}(b), and full mathematical details can be found in the Methods section.

\begin{figure}
    \centering
    \includegraphics{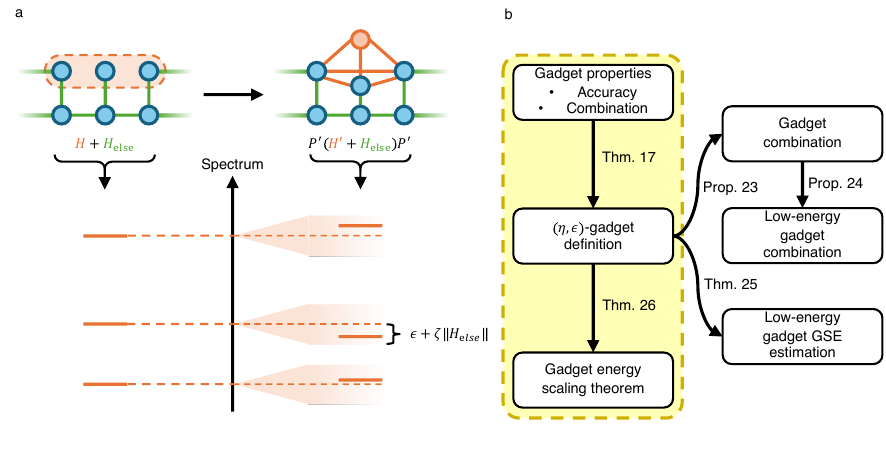}
    \caption{Hamiltonian gadget characterisation. (a) The interaction hypergraph of a Hamiltonian containing a 3-local interaction which is replaced by a 2-local gadget. The gadget property requires that the spectrum is unchanged up to $\epsilon + \zeta \|H_{\text{else}}\|$, for $\epsilon,\zeta > 0$, when restricted by a projector $P^\prime$. (b) Structure of gadget results; boxes highlighted in yellow indicate the central argument for the energy scaling no-go. We first formalise the desirable properties of gadgets and show that they imply a general definition, from which we can prove the energy scaling theorem along with various combination properties, including a generalisation of a result of Bravyi et al.\cite{bravyi2008quantum} for ground state energy (GSE) estimation.}
    \label{fig:gadgetbitstructure}
\end{figure}

Let $H$ be the target Hamiltonian on a Hilbert space $\mathcal{H}$, and let the gadget Hamiltonian $H^\prime$ act on the space $\mathcal{H}\otimes\mathcal{A}$, for $\mathcal{A}$ some ancillary system. We require the following two properties of $H^\prime$, illustrated in \cref{fig:gadgetbitstructure}(a):
\begin{itemize}
    \item Accuracy: The spectrum of $H$ should be approximated by that of $H^\prime$, in some subspace defined by a projector $P^\prime \in \Proj(\mathcal{H}\otimes \mathcal{A})$, up to error $\epsilon\geq 0$.
    \item Combination: The above property should hold even when any additional Hamiltonian $H_{\text{else}}$ is added to both $H$ and $H^\prime$, at the expense of an additional spectral error $\zeta \|H_{\text{else}}\|$, where $\|\cdot\|$ denotes the operator norm.
\end{itemize}

For small error parameters $\epsilon$ and $\zeta$, these properties are --- non-trivially --- sufficient to force $H^\prime$ to satisfy the following definition, which resembles previously studied notions of simulation \cite{cubitt2018universal,bravyi2017complexity}. We say that $H^\prime$ is an $(\eta,\epsilon)$-gadget (where $\eta$ is a new parameter related to $\zeta$, also measuring the ability of the gadget to combine with other terms) for $H$ if there exists a projector $P \in \Proj(\mathcal{A})\setminus\{0\}$ and a unitary operator $U \in \U(\mathcal{H}\otimes A)$ satisfying two conditions. Firstly, $U$ must be $\eta$-close to the identity: $\|U - \Id \| \leq \eta$. Then, defining the projector $P^\prime$ by $P^\prime = U(\Id\otimes P)U^\dagger$ (so that $\eta$ in some sense quantifies how close $P^\prime$ is to a pure projection on the ancillary system), the second condition ensures that the spectrum $P^\prime H^\prime P^\prime$ should approximate that of $H$, up to some multiplicity: $\|P^\prime H^\prime P^\prime - U (H\otimes P) U^\dagger \| \leq \epsilon$.

This gadget definition expresses the quality of a gadget through two parameters: $\epsilon$ can be thought of as the absolute error of the gadget, whilst $\eta$ bounds the error incurred when the gadget is combined with other interactions in a Hamiltonian. In particular, when $\|H_{\text{else}}\| \sim n$ grows with the size of the system, $\eta$ must correspondingly shrink to hold the error constant.

Despite the generality of this definition, it is sufficient to guarantee that such gadgets can be combined in parallel. That is, given a many-body Hamiltonian $H = \sum_i H_i$ and sufficiently good gadgets $H_i^\prime$ for each of the individual terms $H_i$, the Hamiltonian $H^\prime = \sum_i H_i^\prime$ constitutes a good gadget for $H$. A similar result holds for low-energy gadgets (for which the projector $P^\prime$ is replaced by a projector onto the low-energy subspace of $H^\prime$), and also leads to a generalisation of the ground state energy estimation result of Bravyi et al.\cite{bravyi2008quantum}.

On the other hand we show that, when used for certain types of reduction, gadgets come at an unavoidable energy cost. In particular, any attempt to simulate a $k$-body interaction $H$ via a gadget $H^\prime$ consisting of $k^\prime$-body interactions for $k^\prime < k$ necessarily requires interaction strengths scaling as $\Omega(\eta^{-1})$. In order to control the absolute error of a many-body system, $\eta^{-1}$ must grow with the size of the system, leading to unfeasible energy scalings and constituting a significant barrier for Hamiltonian reductions in the regime of experimentally realisable analogue quantum simulators.

\paragraph*{Gadgets from the quantum Zeno effect.} To circumvent our no-go result for scalable Hamiltonian locality reduction, in this section we exhibit a new kind of gadget, taking advantage of the non-unitary possibilities afforded by a general simulation channel $T_t$. This works by restricting the mediator qubit to its $\ket{0}$ state not with a strong interaction but through inertia induced by the quantum Zeno effect. This powerful resource can be used to build a direct $k$-to-$(\lceil k/3 \rceil + 1)$-local gadget, without interactions scaling with the size of the system.

The recipe for this construction is qualitatively similar to that for usual Hamiltonian gadgets: starting from a target interaction $H$, a mediator qubit $m$ is adjoined to the system, and evolves under a simulator Hamiltonian $H^\prime$. In this case, however, $H^\prime$ need not contain an interaction $\Delta\proj{1}_m$, with $\Delta$ scaling with the size of the system. Instead, a dissipative channel is applied to the qubit $m$ at regular intervals separated by time $\delta t$. Provided that $\delta t$ is small enough, the quantum Zeno effect keeps $m$ effectively fixed in its $\ket{0}$ state with high probability, whilst the remainder of the system evolves as though under the target Hamiltonian $H$. This is illustrated by \cref{fig:zenogadgetcircuit}, and a rigorous description can be found in the Methods.

\begin{figure}
    \centering
    \includegraphics{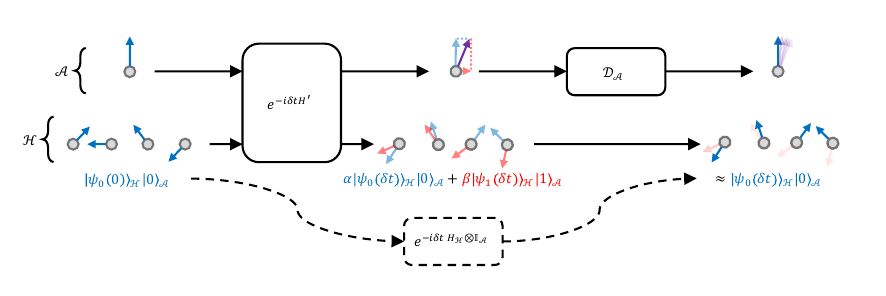}
    \caption{Dissipative gadget evolution. Circuit representation of the non-unitary gadget procedure for a single timestep. The initial state $\ket{\psi_0(0)}_\mathcal{H}\ket{0}_\mathcal{A}$ is evolved under a Hamiltonian $H^\prime$ for time $\delta t$, resulting in a superposition of states with the ancillary qubit in the $\ket{0}_\mathcal{A}$ and $\ket{1}_\mathcal{A}$ positions. After applying the dissipative channel to the $\mathcal{A}$ system, the system collapses to its $\ket{0}_\mathcal{A}$ state with high probability due to the quantum Zeno effect. Meanwhile, the resulting state on the $\mathcal{H}$ system approximately corresponds to evolution under a different Hamiltonian, $H_\mathcal{H}$.}
    \label{fig:zenogadgetcircuit}
\end{figure}

These non-unitary gadgets may be combined with other terms in a Hamiltonian at no extra cost (effectively corresponding to a gadget with combination error parameter $\eta = 0$), yielding an improvement on any possible pure Hamiltonian gadget. On the other hand, the construction has various caveats: strong interactions (though not scaling with system size) are still necessary for high accuracy of a single gadget, and we expect that combining multiple such gadgets will require strong interactions, to suppress the probability of any ancillary qubit transitioning into the $\ket{1}$ state. Moreover, the precisely engineered stroboscopic dissipation channel constitutes a new experimental challenge.

Nevertheless, this construction provides insight into the ways in which non-unitary dynamics might be exploited for practical analogue quantum simulation problems --- indeed, similar tools have already found applications in theory\cite{lewalle2023multi,ball2024zeno} and in practice\cite{blumenthal2022demonstration} for digital quantum computing. In light of our theorems implying extensive interaction scaling for qutrit-to-qubit mappings and gadget locality reduction, which effectively serve as no-go theorems for practical universal simulators built from pure Hamiltonian dynamics, we anticipate that similar hybrid techniques will constitute a powerful tool for attaining useful quantum advantage with quantum simulators.

\section*{Methods}

\subsection*{Criteria for quantum computation and simulation}

In a review of the prospective possibilities of quantum computing \cite{divincenzo2000physical} the author provided a set of requirements, now known as the DiVincenzo criteria, designed to serve as a full specification for implementations of universal quantum computers. These are summarised in \cref{fig:criteria}.

\begin{figure}
    \centering
    \includegraphics{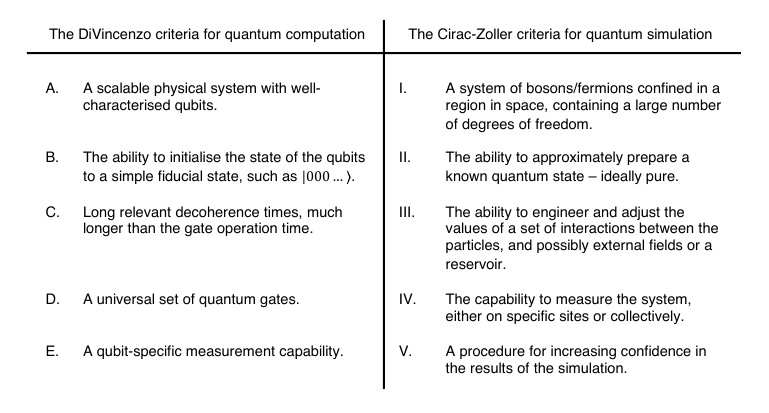}
    \caption{A summary of the DiVincenzo\cite{divincenzo2000physical} and Cirac-Zoller\cite{cirac2012goals} criteria. The DiVincenzo criteria provide necessary and sufficient requirements for universal digital quantum computers. Similarly, the Cirac-Zoller criteria offer a set of requirements for analogue quantum simulation, for which universality may not be available.}
    \label{fig:criteria}
\end{figure}

As well as concretely providing the experimentalist with a necessary set of criteria to aim towards, the sufficiency of the DiVincenzo criteria provides the theorist with a canonical yardstick to judge the applicability of their protocol to idealised quantum hardware. It is therefore important that such requirements reflect exactly what can be expected from quantum technology in the long term, neither excluding feasible technologies nor including unfeasible procedures.

A similar set of criteria for analogue quantum simulators is discussed by Cirac et al.\cite{cirac2012goals}, also summarised in \cref{fig:criteria}. These are all natural requirements to ask of a quantum simulator, but it is noteworthy that criterion III does not provide any restriction on the interactions that one should expect the simulator to include. This leads to a problem which does not arise for the DiVincenzo criteria: whereas a quantum computer can approximate arbitrary $k$-qubit gates from the compact set $\U((\mathbb{C}^2)^{\otimes k})$ of unitary transformations relatively cheaply due to the Solovay-Kitaev theorem \cite{kitaev1997quantum}, the task of an analogue quantum simulator is to implement $k$-qudit interactions from the unbounded set of possible Hamiltonians $\text{Herm}((\mathbb{C}^d)^{\otimes k})$. The ability to realise arbitrarily strong interactions on a physical device is clearly an impossibility.

Thus, the key extra criterion which we demand of an analogue quantum simulator is that the encoding of the target Hamiltonian should be size-independent. Concretely, if the Hamiltonian $H$ to be encoded consists of local interactions $(h_i)_{i=1}^m$ on $n$ sites then the encoding of individual terms should not depend, for instance by polynomial scaling of interaction strengths, on the size of the physical system $n$. In particular, we argue that methods for practical analogue quantum simulation must respect a limit on the interaction strengths of the simulator Hamiltonian. The strongest interactions should be bounded by some constant fixed by physical limitations, and the weakest interactions should be similarly bounded from below (since sufficiently weak interactions will be overwhelmed by noise in the simulator). In addition, in order to ensure the local and size-independent encoding of each site into the simulator, we argue that the the simulator should grow no faster than linearly with the size of the target system. If each site is encoded into more than $O(1)$ simulator sites, it will be impossible to encode the full system into a simulator of the same dimension while preserving geometric locality (without introducing scaling interactions). We summarise these requirements with the following qualitative definition:

\begin{definitionnc}[Size-independent simulation]\label{def:sizeindependence}
We say that an analogue quantum simulation is size-independent if the simulation of a $n$-site Hamiltonian can be implemented scalably with $n$. By this, we mean that the number of qubits used in the simulation should grow no faster than linearly in $n$, and the interaction strengths necessary should remain $\Theta(1)$.
\end{definitionnc}
It is worth noting that further formalisation is required to make this definition robust. For example, suppose we are given a Hamiltonian $H = h_1 + h_2$ where $\|h_1\|,\|h_2\| = O(n^{-1})$, which violates the size-independence requirement. One could simply define $h_1^\prime = h_1 + K$, $h_2^\prime = h_2 - K$, for some $K = \Theta(1)$, and then $H = h_1^\prime + h_2^\prime$ can be written in a form which does not obviously violate \cref{def:sizeindependence}. To exclude such possibilities, we could impose an additional requirement that $H$ is given in a canonical form, such as that described by Wilming et al.\cite{wilming2022lieb}.

As well as being experimentally and qualitatively desirable, encoding interactions independently has quantitative benefits; as noted by Cubitt et al.\cite{cubitt2018universal}, for a suitably local Hamiltonian encoding, local errors on the simulator system will correspond to local errors on the target system. For NISQ hardware, this represents an extremely useful way to mitigate the negative effects of a noisy simulation: rather than random scrambling, noise can be viewed as the manifestation of physically reasonable noisy effects on the target system.

Finally, studying the power of Hamiltonians subject to interaction energies that are constant in system size is well-motivated in its own right, from the perspective of Hamiltonian complexity. For example, Aharonov et al.\cite{aharonov2018hamiltonian} show that restriction to such Hamiltonians will necessarily sacrifice some sense of universality of the simulator. Earlier results in Hamiltonian complexity theory\cite{bravyi2008quantum}, however, show that in many cases it is still possible to simulate ground state energies up to an extensive error.

\subsection*{Hamiltonian complexity theory}

We say that a Hamiltonian $H$ on the space of $n$ qubits $\mathcal{H} = (\mathbb{C}^2)^{\otimes n}$ is $k$-local if it can be written as $H = \sum_{j=1}^N h_j$, where each of the terms $h_j$ acts on at most $k$ of the qubit sites. We consider the $h_j$ individual interactions in the Hamiltonian and make reference to the interaction hypergraph, whose vertices are qubits and whose (hyper)edges are interactions (joining the qubits on which they act), illustrated in \cref{fig:hypergraph}.

\begin{figure}
    \centering
    \includegraphics{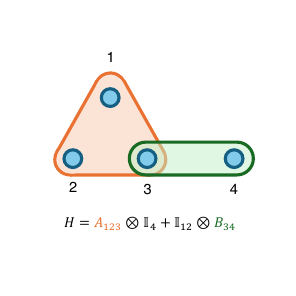}
    \caption{Example Hamiltonian interaction hypergraph. A Hamiltonian $H$ on 4 qubits, and its associated interaction hypergraph. The Hamiltonian consists of a 3-local (orange) and a 2-local (red) term, so we say that $H$ is 3-local.}
    \label{fig:hypergraph}
\end{figure}
Informally, the $k$-local Hamiltonian problem asks whether the ground state energy of a $k$-local Hamiltonian is less than $a$, or greater than $b$, for some real numbers $a < b$ separated by a suitably large gap. This problem lies in the \textsf{QMA} complexity class: the natural quantum analogue to the classical \textsf{NP}, containing problems whose solutions can be efficiently verified (but not necessarily found) on a quantum computer.

\begin{definitionnc}[$k$-local Hamiltonian problem]
The $k$-local Hamiltonian problem is the promise problem which takes as its input a $k$-local Hamiltonian $H = \sum_{j=1}^N h_j$ on the space of $n$ qubits $\mathcal{H} = (\mathbb{C}^2)^{\otimes n}$, where $N = \poly(n)$, and for each $j$ we have $\|h_j\| \leq \poly(n)$ and $h_j$ is specified by $O(\poly(n))$ bits.

Given $a < b$ with $b-a > 1/\poly(n)$, let $\lambda_0(H)$ denote the lowest eigenvalue of $H$. Then the output should distinguish between the cases
\begin{itemize}
    \item Output 0: The ground state energy of $H$ has $\lambda_0(H) \leq a$.
    \item Output 1: The ground state energy of $H$ has $\lambda_0(H) \geq b$.
\end{itemize}
\end{definitionnc}

Through the Feynman-Kitaev circuit-to-Hamiltonian construction \cite{kitaev2002classical}, it was established that the $5$-local Hamiltonian problem is \textsf{QMA}-complete, and subsequent works optimising the construction \cite{kempe20033} and using gadget techniques \cite{kempe2006complexity} reduced this further to show the \textsf{QMA}-completeness of the $2$-local Hamiltonian problem. Various further optimisations have been found to refine the problem and further restrict the family of allowed Hamiltonians (see for example \cite{hallgren2013local,cubitt2016complexity}); indeed hardness results have been shown to hold even under the significant restriction to 1-dimensional translationally invariant systems \cite{gottesman2009quantum}. \textsf{QMA}-completeness is closely related to a notion of universality for simulators; an equivalence was proved by Kohler et al.\cite{kohler2022general}.

The constructions involved in the aforementioned results contain Hamiltonian interaction strengths which scale polynomially, or exponentially, with system size --- such Hamiltonians are infeasible for an analogue simulator. A notable exception to this is the work of Bravyi et al.\cite{bravyi2008quantum}, in which the authors use the Schrieffer-Wolff transformation to show that bounded-strength interactions are sufficient for one to reproduce the ground-state energy of the original Hamiltonian up to an extensive error.

As much of this Hamiltonian simulation literature focuses on specific complexity-theoretic problems, comparatively little work has been done to actually define a mathematical framework for analogue quantum simulation to be used in experiment. Notable recent work in this direction includes that of Cubitt et al.\cite{cubitt2018universal}, in which the authors study methods of encoding Hamiltonians via a map $\mathcal{E}_{\obs} : \Herm(\mathcal{H}) \rightarrow \Herm(\mathcal{H}^\prime)$, which satisfy the natural requirement of preserving the spectrum of observables. Additionally, in the case that $\mathcal{H} = \otimes_{i=1}^n \mathcal{H}_i$ is a space of many sites, they introduce the further notion of local encodings, which map local observables in $\mathcal{H}$ to local observables in $\mathcal{H}^\prime = \otimes_{i=1}^{n^\prime} \mathcal{H}_i^\prime$. By deriving the most general possible form of a spectrum-preserving Hamiltonian encoding, and then imposing natural locality conditions, the authors arrive at the following definition.
\begin{definitionnc}[Local Hamiltonian encoding \cite{cubitt2018universal}]\label{def:cubittencoding}
A local Hamiltonian encoding is a map $\mathcal{E}_{\obs} : \Lin(\otimes_{i=1}^n \mathcal{H}_i) \rightarrow \Lin(\otimes_{i=1}^n \mathcal{H}_i^\prime)$ of the form
\begin{equation}
\mathcal{E}_{\obs}(M) = V (M\otimes P + \bar{M} \otimes Q ) V^\dagger\ ,
\end{equation}
where $P$ and $Q$ are locally distinguishable orthogonal projectors on an ancillary space $\mathcal{A} = \otimes_{i=1}^n \mathcal{A}_i$, and $V = \otimes_{i=1}^n V_i$ where $V_i \in \Isom (\mathcal{H}_i\otimes \mathcal{A}_i , \mathcal{H}_i^\prime)$ for all $i$. Here $\bar{M}$ denotes the complex conjugate of the matrix $M$.
\end{definitionnc}
Projectors $P,Q \in \Proj(\otimes_i \mathcal{A}_i)$ are locally distinguishable if, for all $i$, there exist orthogonal projectors $P_i,Q_i \in \Proj(\mathcal{A}_i)$ such that $(P_i\otimes \Id) P = P$ and $(Q_i\otimes \Id)Q = Q$. Generally, we consider the case of $\rank(P) > 0$ (referred to as standard\cite{cubitt2018universal}), for which one can define a corresponding state encoding
\begin{equation}\label{eq:cubittstateencoding}
\mathcal{E}_{\state}(\rho) = V(\rho \otimes \tau ) V^\dagger\ ,
\end{equation}
where $\tau$ is a state on $\mathcal{A}$ satisfying $P\tau = \tau$.

Moreover, the authors define the following notion of simulation, which relaxes the requirements of locality and allows for some error in the simulated eigenvalues.
\begin{definitionnc}[$(\Delta,\eta,\epsilon)$-simulation \cite{cubitt2018universal}]\label{def:cubittsimulation}
A Hamiltonian $H^\prime \in \Herm(\mathcal{H}^\prime) = \Herm(\otimes_{i=1}^n \mathcal{H}_i^\prime )$ is said to $(\Delta,\eta,\epsilon)$-simulate a Hamiltonian $H \in \Herm(\mathcal{H}) = \Herm(\otimes_{i=1}^n \mathcal{H}_i)$ if there exists a local encoding (\cref{def:cubittencoding}) $\mathcal{E}_{\obs}(M) = V (M\otimes P + \bar{M} \otimes Q) V^\dagger$ such that
\begin{enumerate}[(i)]
    \item There exists an encoding $\tilde{\mathcal{E}}_{\obs}(M) = \tilde{V} (M \otimes P + \bar{M} \otimes Q ) \tilde{V}^\dagger$ (where $\tilde{V}\in \Isom(\mathcal{H}\otimes \mathcal{A}, \mathcal{H}^\prime)$ need not have a tensor product structure as in \cref{def:cubittencoding}) such that $\| V - \tilde{V} \| \leq \eta$ and $\tilde{\mathcal{E}}_{\obs}(\Id) = P_{\leq \Delta(H^\prime)}$ is the projection onto the low-energy ($\leq \Delta$) subspace of $H^\prime$, and
    \item $\| P_{\leq \Delta(H^\prime)} H^\prime P_{\leq \Delta(H^\prime)} - \tilde{\mathcal{E}}_{\obs} (H) \| \leq \epsilon$.
\end{enumerate}
\end{definitionnc}

This approach (later generalised by Apel et al.\cite{apel2022mathematical} and refined with resource constraints by Zhou et al.\cite{zhou2021strongly}) provides an elegant framework to capture a notion of one Hamiltonian fully simulating another. However, we believe that this regime does not capture the scope of possibilities for analogue quantum simulation experiments. On one hand, the formalism requires the entire physics of the target system to be encoded into the low-energy subspace of a simulator --- this rules out simulators which only simulate part of the target system, or in a different subspace. On the other hand, the formalism is too broad in the sense that it does not prohibit unrealistically scaling interaction strengths in violation of \cref{def:sizeindependence}.

\subsection*{Framework}\label{sec:sframework}

The generic task of an analogue quantum simulator is to estimate the dynamics of observables in a system $\mathcal{H}$ under the evolution of a target Hamiltonian $H$, up to some maximum time $t_{\max}$. In particular, it is not always necessary to simulate the entire target system in arbitrary configurations: it may be convenient to restrict to a particular subset of initial states $\Omega_{\state}$, for example lying in a subspace invariant under the Hamiltonian or corresponding to the states which can be reliably prepared by the simulator, and similarly to a particular subset of observables of interest $\Omega_{\obs}$. We denote by $\mathcal{H}^\prime$ the Hilbert space corresponding to the simulator system, and for $t\in [0,t_{\max}]$ we write $T_t : D(\mathcal{H}^\prime) \rightarrow D(\mathcal{H}^\prime)$ for the family of time evolution quantum channels implemented by the simulator, where $D(\mathcal{H}^\prime)$ is the set of density matrices on $\mathcal{H}^\prime$. This approach, in which we view simulations in terms of individual observables rather than the entire Hamiltonian, has been considered in earlier work \cite{kim2017robust,borregaard2021noise,trivedi2022quantum}.

The minimal requirement for a simulator is that it should approximate the expectation values of the elements of $\Omega_{\obs}$. That is, $\tr[Oe^{-iHt}\rho e^{iHt}]$ should be close to $\tr[O^\prime T_t (\rho^\prime)]$ for all $\rho \in \Omega_{\state}$ and $O \in \Omega_{\obs}$, where $\rho^\prime$ and $O^\prime$ are some encoded versions of the states and operators respectively. Notice that, in principle, the experimentalist could be using a completely different simulator for each choice of $\rho$ and $O$, with $\mathcal{H}^\prime$ a space large enough to contain all of them and by encoding $\rho$ into several copies. However, this would violate the size-independence requirement of \cref{def:sizeindependence} if $\Omega_{\obs}$ and $\Omega_{\state}$ both do not only contain $O(1)$ elements. Furthermore, it is natural to consider analogue quantum simulators as machines taking quantum, rather than classical, input --- possibly prepared by another experiment --- which cannot be cloned. For this reason, we assume that the state encoding takes the form of a quantum channel $\mathcal{E}_{\state} : D(\mathcal{H}) \rightarrow D(\mathcal{H}^\prime)$. Correspondingly, to accommodate for quantum outputs, we require the observable encoding $O \mapsto O^\prime$ to be a unital and completely positive map $\mathcal{E}_{\obs} : \Herm(\mathcal{H}) \rightarrow \Herm(\mathcal{H}^\prime)$. This ensures that the Hilbert-Schmidt dual operator $\mathcal{E}_{\obs}^\ast$ is a quantum channel, so measurement of $\mathcal{E}_{\obs}(O)$ on $\rho^\prime$ can equivalently be thought of as a measurement of $O$ on a decoded state $\mathcal{E}_{\obs}^\ast(\rho)$. This perspective sets analogue quantum simulators apart from the framework of digital quantum computation, for which fault-tolerant architectures require both inputs and outputs to be classical.

This definition is still sufficiently versatile to capture the simulation of global observables that are a sum of local parts $O = \sum_k O_k$ (a task, for example, useful for variational quantum algorithms \cite{cerezo2021variational}), in the following way. Often the $O_k$ cannot be simultaneously measured due to non-commutativity relations or experimental limitations. The simplest approach to estimating $O$ is to run many simulations, measuring one of the $O_k$ each time (this process can be sped up by combining simultaneously measurable terms \cite{mcclean2016theory}), and summing the average results.

The above discussion leads us to the following definition, which is illustrated by \cref{fig:generalsimcomparison}.

\begin{definitionnc}[Analogue quantum simulation]\label{def:analogsimulation}
Given a set of states $\Omega_{\state}$ on a Hilbert space $\mathcal{H}$, a normalised set of observables $\Omega_{\obs}$ (i.e. $\|O\| = 1$ for all $O \in \Omega_{\obs}$, where $\|\cdot\|$ denotes the operator norm), a time $t_{\max}>0$, a Hamiltonian $H \in \Herm(\mathcal{H})$, and $\epsilon>0$, we say that a family of quantum channels $T_t : D(\mathcal{H}^\prime) \rightarrow D(\mathcal{H}^\prime)$, for $t \in [0,t_{\max}]$ simulates $H$ with respect to $\Omega_{\state}$ and $\Omega_{\obs}$ with accuracy $\epsilon$ if there exists
\begin{enumerate}
    \item A state encoding quantum channel $\mathcal{E}_{\state} : D(\mathcal{H}) \rightarrow D(\mathcal{H^\prime})$ which maps states to the simulator Hilbert space $\mathcal{H}^\prime$,
    \item An observable encoding, given by a unital and completely positive map $\mathcal{E}_{\obs} : \text{Herm}(\mathcal{H}) \rightarrow \text{Herm}(\mathcal{H^\prime})$,
\end{enumerate}
such that
\begin{equation}\label{eq:simulationdefinition}
\big| \tr [\mathcal{E}_{\obs}(O)(T_t\circ \mathcal{E}_{\state})(\rho)] - \tr [O(e^{-itH}\rho e^{itH})]] \big| \leq \epsilon\ ,
\end{equation}
for all $\rho \in \Omega_{\state}$, $O \in \Omega_{\obs}$, and $t\in [0,t_{\max}]$.
\end{definitionnc}

Our use of a Hamiltonian $H$ for the target system is mostly for simplicity; the simulation of more general dynamics, of open quantum systems for example, can be defined analogously, with the target Hamiltonian $H$ replaced by any generator of a quantum dynamical semigroup \cite{gorini1976completely,lindblad1976generators}. It should be noted also that \cref{def:analogsimulation} could equivalently have been phrased in terms of a set of POVMs rather than observables $\Omega_{\obs}$. We use the latter for convenience in relating our work to other results. It is plausible that one could engineer a time-dependent observable observable encoding $\mathcal{E}_{\obs}$, but here we restrict our focus to the time-independent case to avoid the complexity of the simulation task being hidden in this step.

By the triangle inequality, \eqref{eq:simulationdefinition} holds for any convex combination of the states and observables in $\Omega_{\state}$ and $\Omega_{\obs}$ respectively, so we could without loss of generality assume that the two sets are convex to begin with.

Often the simulation channels $T_t$ in \cref{def:analogsimulation} are taken simply as time evolution under some simulator Hamiltonian $H^\prime \in \Herm(\mathcal{H}^\prime)$, but it is useful to consider a more general case. Firstly, this allows one to directly account for, and possibly exploit, dissipative errors in the experimental setup \cite{verstraete2009quantum}. Secondly, it enables the possibility of a more complicated simulation experiment, for example involving intermediate measurements. Moreover, it is important to allow the simulation of open quantum systems for our definition to be consistent with criterion III of \cref{fig:criteria}. Despite the generality afforded by \cref{def:analogsimulation}, we emphasise that experimentally practical simulations should be size-independent as in \cref{def:sizeindependence}. That is, the implementation of $T_t$ should not require engineering a system of size which grows more than linearly in $n$, or boundlessly scaling interaction energies. Another important constraint is that $T_t$ should not include the use of adaptive channels based on feed-forward measurements --- hence distinguishing the process from digital quantum computation.

We note that Hamiltonian models of quantum computation such as quantum walks\cite{childs2013universal} and previous notions of dynamical Hamiltonian simulation\cite{bohdanowicz2017universal} are not consistent with our definition of analogue simulation: such constructions also incur scalings in both the system size and in necessary evolution time (corresponding to scalings in interaction strength, if time is normalised) which violate the size-independence conditions of \cref{def:sizeindependence}.

\subsection*{Local encodings} \label{sec:localencodings}

Although \cref{def:analogsimulation} is phrased in terms of general encoding maps, it is practically useful to ensure that states and observables are encoded in a way which is both practical to implement and behaves favourably with respect to noise. In this section, we present such a notion of local encodings and state some basic properties; proofs are contained in \cref{sec:supnot1}. A similar discussion is presented for the stronger case of local Hamiltonian encodings of Cubitt et al.\cite{cubitt2018universal}, and a discussion of the stability of local observable measurements to local noise is given by Trivedi et al.\cite{trivedi2022quantum}.

\begin{definitionnc}[Local state encoding]\label{def:localstateenc}
Let $\mathcal{H} = \otimes_{i=1}^n \mathcal{H}_i$ and $\mathcal{H}^\prime = \otimes_{j=1}^{n^\prime} \mathcal{H}_j^\prime$. We say that a state encoding $\mathcal{E}_{\state} : D(\mathcal{H}) \rightarrow D(\mathcal{H}^\prime)$ is local if it has a Stinespring representation of the form
\begin{equation}
\mathcal{E}_{\state} (\rho) = \tr_E[U(\rho \otimes \proj{0}_F)U^\dagger ]\ ,
\end{equation}
where $F = \otimes_k F_k$ and $E = \otimes_l E_l$ are ancillary systems and $U \in \U(\mathcal{H}\otimes F,\mathcal{H}^\prime\otimes E)$ is a constant-depth quantum circuit.
\end{definitionnc}

It is immediate that constant-depth quantum circuits (built from one-qubit and two-qubit gates) preserve locality. That is, given a local operator $A$ on $\mathcal{H}\otimes F$, the operator $UAU^\dagger$ is local (acting on the forward light cone of the support of $A$) on $\mathcal{H}^\prime \otimes E$, and similarly for the inverse $U^\dagger$. In fact, it is known in the theory of quantum cellular automata that this constraint is equivalent to representability as a constant-depth quantum circuit\cite{farrelly2020review}.

For simulating physical systems, one particularly desirable feature of a simulator is local error back-propagation. That is, local noise on the simulator system should correspond in some way to local (perhaps realistic) noise on the target system. Ideally, we would like to prove that for any state $\rho \in D(\mathcal{H})$ and local error channel $\mathcal{N}^\prime : D(\mathcal{H}^\prime) \rightarrow D(\mathcal{H}^\prime)$ on the simulator, there exists a corresponding local error channel $\mathcal{N} : D(\mathcal{H}) \rightarrow D(\mathcal{H})$ on the target system satisfying
\begin{equation}
\mathcal{N}^\prime \circ \mathcal{E}_{\state} (\rho) \stackrel{?}{=} \mathcal{E}_{\state} \circ \mathcal{N} (\rho)\ .
\end{equation}
However, we cannot hope to prove this in general, since the noise operator $\mathcal{N}^\prime$ may take the simulator system outside the image of $\mathcal{E}_{\state}$. Instead we have a slightly weaker version of this statement, which is a direct consequence of the causal structure of local state encodings. 

\begin{propositionsec}[Local error back-propogation]\label{prop:errorbackprop}\noproofref
Let $\mathcal{E}_{\state} : D(\mathcal{H}) \rightarrow D(\mathcal{H}^\prime)$ be a local state encoding as in \cref{def:localstateenc}, and let $\mathcal{N}^\prime : D(\mathcal{H}^\prime) \rightarrow D(\mathcal{H}^\prime)$ be a channel whose Kraus operators $\{X_k^\prime\}$ each act on $O(1)$ sites in $\mathcal{H}^\prime$. Then there exists a channel $\mathcal{N} : D(\mathcal{H}\otimes F) \rightarrow D(\mathcal{H} \otimes F)$ whose Kraus operators $\{X_k\}$ each act on $O(1)$ sites in $\mathcal{H}\otimes F$, and such that for all $\rho \in \mathcal{H}$,
\begin{equation}
\mathcal{N}^\prime \circ \mathcal{E}_{\state} (\rho) = \tr_E[U \mathcal{N}(\rho \otimes \proj{0}_F ) U^\dagger]\ .
\end{equation}
\end{propositionsec}

In other words, local noise on the simulator corresponds to local noise on the target system and ancillary encoding system. The corresponding result with locality replaced by geometric locality holds in the case when the light cones of $U$ are local with respect to the underlying geometry of the simulator and target systems.

Similarly, we have local forward-propogation under such an encoding, in the sense that local operations on a site $\mathcal{H}_i$ to $\rho \in D(\mathcal{H})$ will not affect the reduced density matrix $\tr_A[\mathcal{E}_{\state}(\rho)]$, where $A$ is the forward light cone of $\mathcal{H}_i$ under $U$ in $\mathcal{H}^\prime$.

We define local observable encodings analogously to the state encoding case.

\begin{definitionnc}[Local observable encoding]\label{def:localobsenc}
Let $\mathcal{H} = \otimes_{i=1}^n \mathcal{H}_i$ and $\mathcal{H}^\prime = \otimes_{j=1}^{n^\prime} \mathcal{H}_j^\prime$. We say that an observable encoding $\mathcal{E}_{\obs} : \Herm(\mathcal{H}) \rightarrow \Herm(\mathcal{H}^\prime)$ is local if it is the adjoint (with respect to the Hilbert-Schmidt inner product) of a local state encoding.
\end{definitionnc}

It is immediate from this definition that one can measure the encoded observable $\mathcal{E}_{\obs}(O)$ by first applying the constant-depth quantum circuit $\mathcal{E}_{\obs}^\ast : D(\mathcal{H}^\prime) \rightarrow D(\mathcal{H})$ to $\rho^\prime \in D(\mathcal{H}^\prime)$, and then measuring $O$. When $O$ is local, we can alternatively implement the measurement via a local POVM directly on the simulator system. 

\begin{propositionsec}[Encoded measurements]\label{prop:encmeasurement}\noproofref
Let $\mathcal{E}_{\obs}$ be a local observable encoding as in \cref{def:localobsenc}, and let $O$ be a local operator on $\mathcal{H}$. Then $\mathcal{E}_{\obs}(O)$ can be measured using a local POVM on $\mathcal{H}^\prime$.
\end{propositionsec}

\subsection*{Applications of the framework}

In this section, we discuss some basic applications of our notion of analogue quantum simulation in the sense we have introduced in \cref{def:analogsimulation}. Firstly, we give an example of a trivial but illustrative situation in which encoding qudits into qubits incurs an unavoidable cost for low-energy encodings, but which is not an issue in our framework. We then demonstrate the robustness of the definition under noise, and show that it is consistent with the existing notion of simulation given in \cref{def:cubittsimulation}. Finally, we note how Lieb-Robinson bounds can be used to reduce the overhead of simulating local observables.

\paragraph*{Qudits to qubits}

To motivate this example, we first notice that the requirement of Cubitt et al.\cite{cubitt2018universal} (\cref{def:cubittsimulation}) that the simulator Hamiltonian should reproduce the target dynamics in its low-energy subspace is too strong for some practical situations. As observed by the authors, this can require the simulator to use strong interactions to push unwanted states out of the low-energy subspace. \cref{prop:approxextensive} provides a formal statement of this fact (proved in \cref{sec:supnot2}) in the context of encoding a simple qutrit Hamiltonian into qubits.

Here we consider qutrits with individual state spaces $\mathbb{C}^3$ spanned by a basis $\{\ket{\downarrow},\ket{0},\ket{\uparrow}\}$. We write $P_0^{(j)} = \proj{0}$ and $P_\uparrow^{(j)} = \proj{\uparrow}$, where the superscript indicates that the projectors act on the $j$th qutrit. 

\begin{propositionsec}\label{prop:approxextensive}\noproofref
Let $\mathcal{H} = (\mathbb{C}^3)^{\otimes n}$ be the space of $n$ qutrits acted on by the Hamiltonian 
\begin{equation}
H_n = \sum_{j=1}^n (P_0^{(j)} + P_\uparrow^{(j)})\ .
\end{equation}
Suppose $H_n^\prime = \sum_{j=1}^K h_j^\prime$ is a $k$-local Hamiltonian on $\mathcal{H}^\prime = (\mathbb{C}^2)^{\otimes m}$, where $m = O(n^{1+\alpha})$, for $\alpha \geq 0$ and $k=O(1)$. Assume the interaction hypergraph of $H_n^\prime$ has degree bounded by $d=O(1)$.

If $H_n^\prime$ is a $(\Delta,\eta,\epsilon)$-simulation for $H_n$ in the sense of \cref{def:cubittsimulation}, for $\eta \in [0,1)$ and $\epsilon \geq 0$, then
\begin{equation}\label{eq:extensiveinteractions}
    \max_j \|h_j^\prime \| = \Omega(n^{1-\alpha} (1-\eta^2) )\ .
\end{equation}
\end{propositionsec}

From \eqref{eq:extensiveinteractions} we see that simulating this simple system with a low-energy encoding, an interaction hypergraph of bounded degree, and bounded locality, requires either the qubit count or interaction energy (or a mixture) to scale unfeasibly with $n$. This constitutes a violation of the requirements of \cref{def:sizeindependence} and imposes an unnecessary experimental requirement for the task of simulating non-interacting qutrits. The proof of this fact follows from a dimension-counting argument, since the state space of the qutrits cannot be surjectively encoded into the qubit simulator, see \cref{fig:qutritqubit}. In contrast, the simulation task is trivial in our framework given in \cref{def:analogsimulation} because the low-energy encoding requirement is relaxed.

Letting $H_n = \sum_{j=1}^n (P_0^{(j)} + P_\uparrow^{(j)})$ as in \cref{prop:approxextensive}, we can simulate all observables under $H_n$ on $\mathcal{H}^\prime = \otimes_{j=1}^n (\mathbb{C}^2 \otimes \mathbb{C}^2)$ via any isometry
\begin{equation}
V : \mathbb{C}^3 \rightarrow \mathbb{C}^2 \otimes \mathbb{C}^2\ ,
\end{equation}
encoding each qutrit into two qubits. To realise a simulator in the sense of \cref{def:analogsimulation}, we let 
\begin{equation}
\mathcal{E}_{\state} : \rho \mapsto V^{\otimes n} \rho (V^{\otimes n})^\dagger\ ,\quad \mathcal{E}_{\obs} : O \mapsto V^{\otimes n} O (V^{\otimes n})^\dagger\ ,
\end{equation}
and
\begin{equation}
T_t = e^{-it \mathcal{E}_{\obs} (H_n)} (\cdot ) e^{it \mathcal{E}_{\obs} (H_n)}\ ,
\end{equation}
which is just time evolution under a 2-local Hamiltonian with bounded strength interactions.

Although \cref{prop:approxextensive} does not necessarily rule out simulations in which the $n$ qutrits are encoded into $\Omega(n^2)$ qubits, such approaches suffer from a different problem. Generally, if each qudit in a $D$-dimensional system is encoded into $\Omega(n^\alpha)$ qudits for $\alpha > 0$, whilst keeping the dimension fixed, then the inflated system size will necessarily cause the distances between encoded sites to grow with $n$. In a system of interacting qutrits (for which the proof of \cref{prop:approxextensive} still holds), this means that scaling interactions can be necessary to overcome Lieb-Robinson bounds and ensure that correlations can spread sufficiently fast through the enlarged system. The following simple geometric lemma provides some intuition for a quantitative lower bound on the growing length scales in such situations.
\begin{lemmasec}\label{lem:growingsystem}\noproofref
    Let $\{x_i\}_{i=1}^n$ be the points in a hypercube of side length $L \sim n^{1/D}$ in the square lattice $x_i \in \mathbb{Z}^D$. Let $\mathcal{E} : x_i \mapsto X_i \subseteq \mathbb{Z}^D$ be a map which encodes each point $x_i$ into a connected set of points in $\mathbb{Z}^D$ such that $|X_i| = \Omega(n^{\alpha})$ and $X_i\cap X_j = \emptyset$. Let $d(x,y) : \mathbb{Z}^D \times \mathbb{Z}^D \rightarrow \mathbb{Z}$ be the taxicab metric on $\mathbb{Z}^D$. 
    
    For a radius $R = O(L)$, and any $y\in \mathbb{Z}^D$, the number of encoded points intersecting with the ball of radius $R$ centred at $y$ is bounded by
    \begin{equation}
    |B_R(y)| := |\{ X_i : \text{$\exists x \in X_i$ with $d(x,y) \leq R$}\} | = O\big( n^{1-\min\{\alpha,1/D\}}\big)\ .
    \end{equation}
\end{lemmasec}

Letting $\lambda = \min\{\alpha,1/D\}$, we see that there are at most $O(n^{1-\lambda})$ sites $X_j$ within radius $R = O(L) = O(n^{1/D})$ of any $X_i$. On the other hand, there are at least $\Omega(n^{1-\lambda})$ of the $x_j$ within radius $O(n^{(1-\lambda)/D})$ of $x_i$ in the original lattice. In particular, this implies that there exist a pair of sites $x_i,x_j$ with $d(x_i,x_j) = O(n^{(1-\lambda)/D})$ whose encodings have $d(X_i,X_j) = \Omega(n^{1/D})$ --- in the encoded system, the distance is increased by a factor of $n^{\lambda/ D}$.

The scalings here apply as a result of the requirement that analogue quantum simulators reproduce the dynamics of a target system. In other situations, such as adiabatic quantum simulation in which an approximately simulated ground state is the only requirement, encodings with superlinear qubit overhead are possible\cite{lechner2015quantum,nguyen2023quantum}.

\paragraph*{Noisy analogue simulators}

Suppose we have quantum channels $T_t$, for $t \in [0,t_{\max}]$ which simulate some $H \in \Herm(\mathcal{H})$ with respect to $\Omega_{\state}$ and $\Omega_{\obs}$ up to accuracy $\epsilon$ as in \cref{def:analogsimulation}, corresponding to encoding maps $\mathcal{E}_{\state}$ and $\mathcal{E}_{\obs}$. 

In practice, the experimental setup will suffer from noise in the steps of state preparation, evolution, and measurement. This will correspond to noisy versions of the above maps, which we denote by $\tilde{T}_t$, $\tilde{\mathcal{E}}_{\state}$, and $\tilde{\mathcal{E}}_{\obs}$. For any $O \in \Omega_{\obs}$, $\rho \in \Omega_{\state}$, we may bound the additional error in observable expectation values incurred by the noisy maps by

\begin{align}
& |\tr[\mathcal{E}_{\obs}(O) (T_t \circ \mathcal{E}_{\state}) (\rho)] - \tr[ \tilde{\mathcal{E}}_{\obs}(O) (\tilde{T}_t \circ \tilde{\mathcal{E}}_{\state} (\rho)] | \notag\\
&\quad\quad = | \tr[O \big( \mathcal{E}_{\obs}^\ast \circ T_t \circ \mathcal{E}_{\state} - \tilde{\mathcal{E}}_{\obs}^\ast \circ \tilde{T}_t \circ \tilde{\mathcal{E}}_{\state} \big) (\rho)] \notag\\
&\quad\quad \leq \| \mathcal{E}_{\obs}^\ast \circ T_t \circ \mathcal{E}_{\state} - \tilde{\mathcal{E}}_{\obs}^\ast \circ \tilde{T}_t \circ \tilde{\mathcal{E}}_{\state} \|_{1\rightarrow 1} \notag\\
&\quad\quad \leq \| \mathcal{E}_{\obs}^\ast - \tilde{\mathcal{E}}_{\obs}^\ast \|_{1\rightarrow 1} + \| T_t - \tilde{T}_t \|_{1\rightarrow 1} + \| \mathcal{E}_{\state} - \tilde{\mathcal{E}}_{\state} \|_{1\rightarrow 1}\ ,
\end{align}
where $\|\cdot \|_{1\rightarrow 1}$ denotes the one-to-one norm $\|\Lambda\|_{1\rightarrow 1} = \sup_\rho \|\Lambda(\rho)\|_1$ (defined as the induced trace norm\cite{watrous2018theory} --- note that this is in particular upper bounded by the diamond norm), and $\mathcal{E}^\ast$ denotes the Hilbert-Schmidt dual of a superoperator $\mathcal{E}$. Hence the noisy simulator $\tilde{T}_t$ also simulates $H$ with respect to $\Omega_{\state}$ and $\Omega_{\obs}$, up to error
\begin{equation}
\epsilon^\prime \leq \epsilon + \sup_t \| T_t - \tilde{T}_t \|_{1\rightarrow 1} + \|\mathcal{E}_{\state} - \tilde{\mathcal{E}}_{\state} \|_{1\rightarrow 1} + \| \mathcal{E}_{\obs}^\ast - \tilde{\mathcal{E}}_{\obs}^\ast \|_{1\rightarrow 1}\ .
\end{equation}
\paragraph*{Local Hamiltonian simulation in a subspace}
Suppose that $H^\prime$ is a $(\Delta,\eta,\epsilon)$-simulation of $H$ as defined by Cubitt et al.\cite{cubitt2018universal} (\cref{def:cubittsimulation}), corresponding to encodings $\mathcal{E}_{\state}$ and $\mathcal{E}_{\obs}$, with the projector $Q=0$. Here we show that the time evolution channel under $H^\prime$, $(\cdot) \mapsto e^{-itH^\prime} (\cdot) e^{itH^\prime}$ gives a simulation in our sense, \cref{def:analogsimulation}.

We make use of the following lemmas. \cref{lem:cubittlemma1} ensures that measurement and time evolution are consistent with the encodings of \cref{def:cubittsimulation}, and \cref{lem:cubittlemma2} bounds the error of $(\Delta,\eta,\epsilon)$-simulations under time evolution.

\begin{lemmasec}[Cubitt et al., Proposition 4\cite{cubitt2018universal}]\label{lem:cubittlemma1}\noproofref
If $\mathcal{E}_{\state}$ and $\mathcal{E}_{\obs}$ are encodings as in \cref{def:cubittsimulation} and \eqref{eq:cubittstateencoding}, then for all observables $O$ and states $\rho$ on the target system $\mathcal{H}$,
\begin{equation}
\tr[\mathcal{E}_{\obs}(O) \mathcal{E}_{\state}(\rho)] = \tr[O\rho]\ .
\end{equation}
Moreover if the encoding is standard ($\rank (P) > 0$ in \cref{def:cubittsimulation}) then
\begin{equation}
e^{-i\mathcal{E}_{\obs}(H) t} \mathcal{E}_{\state}(\rho) e^{i\mathcal{E}_{\obs}(H) t} = \mathcal{E}_{\state}\big( e^{-iHt} \rho e^{iHt} \big)\ .
\end{equation}
\end{lemmasec}

\begin{lemmasec}[Cubitt et al., Proposition 28\cite{cubitt2018universal}]\label{lem:cubittlemma2}\noproofref
Let $H^\prime$ be a $(\Delta,\eta,\epsilon)$-simulation of $H$ in the sense of \cref{def:cubittsimulation} corresponding to encodings $\mathcal{E}_{\obs}$, $\mathcal{E}_{\state}$. If $\rho^\prime$ is a state in the simulator system $\mathcal{H}^\prime$ satisfying $\mathcal{E}_{\obs}(\Id)\rho^\prime = \rho^\prime$, then for all $t$
\begin{equation}
\| e^{-iH^\prime t} \rho^\prime e^{iH^\prime t} - e^{-i\mathcal{E}_{\obs}(H) t} \rho^\prime e^{i\mathcal{E}_{\obs}(H) t} \|_1 \leq 2\epsilon t + 4\eta\ .
\end{equation}
\end{lemmasec}

Combining these lemmas, we see that for any observable $O$ and state $\rho$ on $\mathcal{H}$,
\begin{align}
&|\tr[\mathcal{E}_{\obs}(O) e^{-iH^\prime t} \mathcal{E}_{\state} (\rho) e^{iH^\prime t}] - \tr[O e^{-iHt} \rho e^{iHt}] | \notag\\
&\quad\quad= | \tr[\mathcal{E}_{\obs}(O) \Big( e^{-iH^\prime t} \mathcal{E}_{\state} (\rho) e^{iH^\prime t} - e^{-i\mathcal{E}_{\obs}(H) t} \mathcal{E}_{\state} (\rho) e^{i\mathcal{E}_{\obs}(H) t} \Big) ] | \notag\\
&\quad\quad\leq \|O\| (2\epsilon t + 4\eta)\ .
\end{align}

Hence the channels $T_t : \rho^\prime \mapsto e^{-iH^\prime t} \rho^\prime e^{iH^\prime t}$, for $t \in [0,t_{\max}]$ simulate $H$ in the sense of \cref{def:analogsimulation} with respect to any $\Omega_{\state}$ and $\Omega_{\obs}$, up to error
\begin{equation}
\epsilon^\prime \leq 2\epsilon t_{\max} + 4\eta .
\end{equation}
This provides some consistency between existing work and our notion of simulation; we have shown that evolution under a simulator Hamiltonian in the sense of Cubitt et al.\cite{cubitt2018universal} constitutes an analogue quantum simulator in our framework given by \cref{def:analogsimulation}.

\paragraph*{Short-time simulation with Lieb-Robinson bounds}
One advantage of only requiring the simulation of a particular set of observables $\Omega_{\obs}$ in \cref{def:analogsimulation}, as opposed to reproducing the entire physical system, is that one can take advantage of the limited spread of correlations for short-time dynamics \cite{lieb1972finite}. The idea of exploiting Lieb-Robinson bounds to reduce necessary hardware overhead has already been considered for the study of many-body quantum states on quantum computers \cite{kim2017robust,borregaard2021noise}, and more recently in the setting of analogue simulators \cite{trivedi2022quantum}. We explain here how the latter fits into our framework.

Consider the case of a Hamiltonian $H_n$ on a $d$-dimensional lattice of $n$ qubits $\mathcal{H} \cong (\mathbb{C}^2)^{\otimes n}$, such that
\begin{equation}
H_n = \sum_{x=1}^n h_x\ ,
\end{equation}
where the $h_x$ is a nearest-neighbour local interaction with $\|h_x\| \leq 1$, translated to position $x$ in the lattice, so that $H_n$ is translationally invariant.

If one is only interested in simulating the finite-time dynamics of a few local observables $\Omega_{\obs}$ which are contained within a small neighbourhood of the origin, starting from a state $\rho = \proj{0}^{\otimes n}$, then it is sufficient (up to exponentially small error) to simulate a far smaller subsystem, corresponding to the Lieb-Robinson light cone, as in \cref{fig:lrsimulation}. This situation is studied by Trivedi et al.\cite{trivedi2022quantum}, in particular for the thermodynamic limit $n\rightarrow\infty$. 

Let $H_m = \sum_{y=1}^m h_y$ be the simulator Hamiltonian, defined identically to $H_n$ but on a lattice of size $m < n$, $\mathcal{H}^\prime \cong (\mathbb{C}^2)^{\otimes m}$. We encode $\rho$ and $O$ simply by restricting them to the smaller subsystem. Then a simulation of an observable $O \in \Omega_{\obs}$ up to accuracy $\epsilon$, satisfying
\begin{equation}
| \tr[O e^{-iH_n t} \rho e^{iH_n t} ] - \tr[ \mathcal{E}_{\obs}(O) e^{-iH_m t} \mathcal{E}_{\state} (\rho) e^{iH_m t}] | \leq \epsilon\ ,
\end{equation}
can be accomplished in the large-$n$ regime for all $t \in [0,t_{\max}]$ if one takes $m = O\big( \log^d (1/\epsilon) + t_{\max}^d \big)$ (see Trivedi et al.\cite{trivedi2022quantum}, Lemma 1).

\begin{figure}
    \centering
    \includegraphics{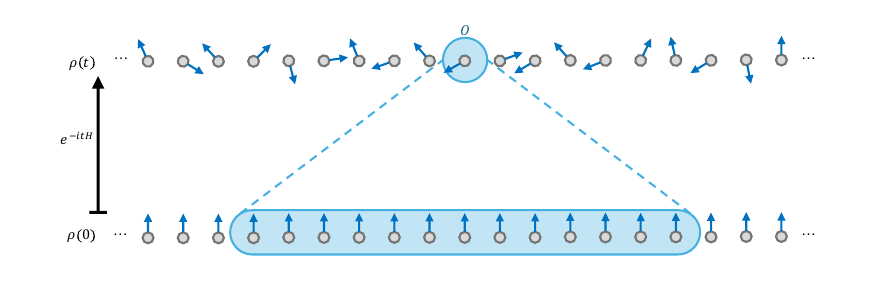}
    \caption{Simulation with Lieb-Robinson bounds. Simulation of a 1-dimensional spin system under a Hamiltonian $H$ for time $t$. In theory, the system extends infinitely, but to estimate the value of a local observable $O$ it is only necessary to simulate a subsystem corresponding to the Lieb-Robinson light cone.}
    \label{fig:lrsimulation}
\end{figure}

\subsection*{Modular encodings and gadgets}\label{sec:sgadgets}

In this section, we focus on the case of a simulator channel $T_t$ given by time evolution under a local simulator Hamiltonian $H^\prime$, which should reproduce the dynamics of the local target Hamiltonian $H = \sum_i H_i$. In light of the size-independence requirement of \cref{def:sizeindependence}, it is natural to encode each $H_i$ term separately into some term $H_i^\prime$, but systematically doing so is a non-trivial task: we need the encoded terms to interact with each other in a way which mimics the original system.

This problem can be tackled using perturbative gadgets. Perturbative gadgets were initially introduced by Kempe et al.\cite{kempe2006complexity} as a means of proving \textsf{QMA}-completeness of the $2$-local Hamiltonian problem by reduction from the 3-local case \cite{kempe20033}, and have since been used extensively in the field of Hamiltonian complexity theory. In this work, we especially focus on the use of gadgets for Hamiltonian locality reduction, though it should be noted that perturbative gadgets can also be used to simplify the structure of the interaction hypergraph \cite{oliveira2008complexity} and in general to reduce Hamiltonians to more restrictive families of interactions \cite{biamonte2008realizable,schuch2009computational,cubitt2016complexity}. Moreover, beyond Hamiltonian complexity-theoretic results, gadgets can be tailored to improve the performance of variational quantum algorithms \cite{cichy2022perturbative}.

In this work, we introduce a formalism which we argue encompasses any attempt at gadgetisation, in a sense which we make precise (\cref{def:gadgetproperty}), in order to prove general properties of such constructions. Note that our approach, and the $(\eta,\epsilon)$ accuracy parameters, are closely related to those used in other definitions of simulation \cite{cubitt2018universal,bravyi2017complexity}. We refine the approach of the latter by generalising to a potentially non-perturbative regime and by considering the feature of combining well with other interactions as a generic requirement for gadgets. We use these results to argue that any size-independent encoding of a Hamiltonian $H$ into another $H^\prime$ cannot reduce the locality of interactions (for example, reducing a 3-local Hamiltonian to a 2-local Hamiltonian).

The setup is as follows: we consider a large system $\mathcal{H} = \otimes_{i=1}^n \mathcal{H}_i$, within which a local interaction $H \in \Herm(\mathcal{H})$ acts on a subsystem of $O(1)$ sites. With the introduction of a small ancillary system $\mathcal{A}$, we aim to replace $H$ by some gadget $H^\prime\in \Herm(\mathcal{H}\otimes \mathcal{A})$, which acts on $O(1)$ sites in $\mathcal{H}$ and $\mathcal{A}$.

A simulator Hamiltonian in the sense of \cref{def:analogsimulation} need not necessarily capture the entire spectrum of its target Hamiltonian. In this case, however, we are thinking of $H$ as a single interaction in a larger system, and as such we cannot generally assume that its eigenspaces will be preserved under time evolution. Therefore, we require as a minimum that $H^\prime$ should (when restricted to some subspace defined by a projector $P^\prime$) approximately reproduce the full spectrum of $H$. Moreover, for $H^\prime$ to be a useful gadget, it must combine well with other Hamiltonian terms acting on $\mathcal{H}$. That is to say, there should exist $P^\prime \in \Proj(\mathcal{H}\otimes \mathcal{A})$ such that $P^\prime (H^\prime + H_\text{else} \otimes \Id ) P^\prime$ approximates the spectrum of $H + H_\text{else}$, for any $H_\text{else} \in \Herm(\mathcal{H})$ (see \cref{fig:gadgetbitstructure}(a)). We formalise this with the following definition.

\begin{definitionnc}[$(\zeta,\epsilon)$-gadget property]\label{def:gadgetproperty} Given a Hamiltonian $H\in \Herm(\mathcal{H})$ acting on a system $\mathcal{H} = \otimes_{i=1}^n \mathcal{H}_i$, and $H^\prime \in \Herm(\mathcal{H}\otimes \mathcal{A})$ for $\mathcal{A}$ an ancillary system, we say that $(H^\prime,\mathcal{A})$ satisfies the $(\zeta,\epsilon)$-gadget property for $H$ if there exists $P^\prime \in \Proj(\mathcal{H}\otimes \mathcal{A})$, $\tilde{P}\in \Proj(\mathcal{A})\setminus \{0\}$ such that, for any $H_\text{else} \in \Herm(\mathcal{H})$, there exists a unitary $\tilde{U}_{H_\text{else}} \in \U(\mathcal{H}\otimes \mathcal{A})$ with
\begin{equation}
\| P^\prime (H^\prime + H_\text{else} \otimes \Id) P^\prime - \tilde{U}_{H_\text{else}} \big( (H + H_\text{else}) \otimes \tilde{P} \big) \tilde{U}_{H_\text{else}}^\dagger \| \leq \epsilon + \zeta \|H_\text{else}\|\ .
\end{equation}
\end{definitionnc}

In other words, $(H^\prime,\mathcal{A})$ satisfies the $(\zeta,\epsilon)$-gadget property for $H$ if, when restricted a subspace defined by $P^\prime$, $H^\prime + H_\text{else}\otimes \Id$ approximates the spectrum of $H + H_\text{else}$ up to error $\epsilon + \zeta \|H_\text{else}\|$. Notice that $\tilde{P}$ is almost arbitrary; its rank determines the multiplicity of each eigenvalue of $H + H_\text{else}$ in the simulator system, but otherwise it can be rotated by $\tilde{U}_{H_\text{else}}$, which rotates the eigenvectors of $(H + H_\text{else})\otimes \tilde{P}$ approximately onto those of $P^\prime (H^\prime + H_\text{else} \otimes \Id) P^\prime$.

As noted by Cubitt et al.\cite{cubitt2018universal}, there are two distinct types of gadgets used in literature:
\begin{itemize}
    \item Mediator gadgets, in which ancillary qubits are inserted between logical qubits to mediate interactions, and
    \item Subspace gadgets, in which single logical qubits are encoded into several physical qubits, restricted to a two-dimensional subspace by strong interactions.
\end{itemize}
\cref{def:gadgetproperty} encompasses the former, but not the latter. Qualitatively this is because whereas mediator gadgets replace interactions, subspace gadgets replace entire qubits, including all of the interactions they take part in. It would be possible to extend our formalism to subspace gadgets, by restricting the range of $H_\text{else}$ in \cref{def:gadgetproperty} to terms which do not interact with the target qubit. We do not consider this here, however, for brevity and because subspace gadgets do not reduce the locality of interactions, which is our primary motivation for this section.

Although \cref{def:gadgetproperty} is a natural requirement, it is not convenient to work with due to the appearance of the general $H_\text{else}$ acting on the entire of $\mathcal{H}$, upon which $\tilde{U}$ depends. The following alternative definition does not suffer from this problem.

\begin{definitionnc}[$(\eta,\epsilon)$-gadget]\label{def:gadgetdefinition}
Let $H \in \Herm(\mathcal{H})$ be a Hamiltonian on a Hilbert space $\mathcal{H}$, and let $\mathcal{A}$ be an ancillary Hilbert space. For $H^\prime \in \Herm(\mathcal{H}\otimes\mathcal{A})$, we say that $(H^\prime,\mathcal{A})$ is a $(\eta,\epsilon)$-gadget for $H$ if there exists $P \in \Proj(\mathcal{A})\setminus \{0\}$ and $U \in \U(\mathcal{H}\otimes \mathcal{A})$ such that
\begin{equation}
\|U - \Id \| \leq \eta\ ,\quad \|P^\prime H^\prime P^\prime - U(H\otimes P)U^\dagger \| \leq \epsilon\ ,
\end{equation}
where $P^\prime = U (\Id\otimes P) U^\dagger \in \Proj(\mathcal{H}\otimes \mathcal{A})$.
\end{definitionnc}

The advantage of \cref{def:gadgetdefinition} is that it is stated in terms of a local rather than global property. Assuming that $H,H^\prime,P^\prime$ act on only $O(1)$ sites in $\mathcal{H}$ and $\mathcal{A}$, we can without loss of generality restrict to this significantly smaller subspace to check whether $H^\prime$ is a gadget. This is in contrast with \cref{def:gadgetproperty}, which requires us to in principle consider interactions over the full $n$-site space in order to check the gadget property.

To motivate the use of \cref{def:gadgetdefinition}, we show that the above notions are in correspondence; things that look like gadgets are always gadgets, and vice-versa. This is formalised by the following two theorems, proved in \cref{sec:supnot3}.

\begin{theoremsec}[$(\eta,\epsilon)$-gadgets have the $(\zeta,\epsilon)$-gadget property]\label{thm:gadgetslooklikegadgets}\noproofref
Suppose that $(H^\prime,\mathcal{A})$ is a $(\eta,\epsilon)$-gadget for $H$. Then $(H^\prime,\mathcal{A})$ satisfies the $(\zeta,\epsilon)$-gadget property for $H$, where $\zeta = O(\eta)$.
\end{theoremsec}

\begin{theoremsec}[The $(\zeta,\epsilon)$-gadget property requires a $(\eta,\epsilon)$-gadget]\label{thm:thingsthatlooklikegadgetsaregadgets}\noproofref
Suppose that $(H^\prime,\mathcal{A})$ satisfies the $(\zeta,\epsilon)$-gadget property for $H$, where $H$, $H^\prime$, and $P^\prime$ act on $O(1)$ sites in $\mathcal{H}=\otimes_{i=1}^n \mathcal{H}_i$. Then $(H^\prime,\mathcal{A})$ is a $(\eta,\epsilon^\prime)$-gadget for $H$, where $\eta = O(\epsilon) + O(\zeta^{\frac{1}{2}})$ and $\epsilon^\prime = O(\epsilon) + O(\zeta)$.
\end{theoremsec}

The roles of the $\eta$ and $\epsilon$ parameters are to bound the error in the eigenvectors and eigenvalues respectively. Roughly speaking, $\eta$ quantifies how well the gadget combines with other terms, and $\epsilon$ quantifies the spectral error of the gadget in isolation. A good gadget requires both of these parameters to be small. In the next section we present a 3-to-2 local gadget which is an extreme case of this, with $\epsilon = 0$ at the cost of a large $\eta$ error.

Prior work in Hamiltonian complexity theory has focused on gadgetisation in the context of ground state estimation \cite{kempe20033,cubitt2016complexity,bravyi2017complexity} or simulation in a low energy subspace \cite{cubitt2018universal}; as a result, a case of particular relevance is when $P^\prime$ projects onto the low-energy subspace of $H^\prime$. For $\Delta \in \mathbb{R}$, we write $P_{\leq \Delta(H^\prime)}$ for the projector onto the span of the eigenvectors of $H^\prime$ with eigenvalues in the range $(-\infty,\Delta]$.

\begin{definitionnc}[$(\Delta,\eta,\epsilon)$-gadget]\label{def:lowenergygadgetdefinition}
Let $H \in \Herm(\mathcal{H})$ be a Hamiltonian on a Hilbert space $\mathcal{H}$, and let $\mathcal{A}$ be an ancillary Hilbert space. For $H^\prime \in \Herm(\mathcal{H}\otimes\mathcal{A})$, we say that $(H^\prime,\mathcal{A})$ is a $(\Delta,\eta,\epsilon)$-gadget for $H$ if there exists $P \in \Proj(\mathcal{A})\setminus \{0\}$, and $U \in \U(\mathcal{H}\otimes \mathcal{A})$ such that $P_{\leq \Delta(H^\prime)} = U(\Id\otimes P) U^\dagger$, and
\begin{equation}
\|U - \Id \| \leq \eta\ ,\quad \|P_{\leq \Delta(H^\prime)} H^\prime P_{\leq \Delta(H^\prime)} - U(H\otimes P)U^\dagger \| \leq \epsilon\ .
\end{equation}
In other words, the pair $(H^\prime,\mathcal{A})$ satisfy \cref{def:gadgetdefinition}, in the special case where we can use $P^\prime = P_{\leq \Delta(H^\prime)}$.
\end{definitionnc}

Notice that \cref{def:lowenergygadgetdefinition} imposes a significantly stronger requirement on $H^\prime$ than \cref{def:gadgetdefinition}; a priori there is no reason to expect that there will exist any choice of $P$ and $U$ such that $P_{\leq \Delta(H^\prime)} = U(\Id\otimes P)U^\dagger$. Definitions \ref{def:gadgetdefinition} and \ref{def:lowenergygadgetdefinition} are sufficient to guarantee desirable combination properties, and are satisfied by widely-used constructions.

\subsection*{Examples of gadgets}\label{sec:existinggadgets}

Lemmas 4-7 of Bravyi et al.\cite{bravyi2017complexity} can be naturally adapted to give several constructions for $(\Delta,\eta,\epsilon)$ gadgets, which we use to demonstrate that \cref{def:gadgetdefinition} encompasses commonly-used techniques. In the following we take $\mathcal{H}^\prime = \mathcal{H}\otimes \mathcal{A}$, and $\mathcal{A} \cong \mathbb{C}^2$. For $V$ an operator on $\mathcal{H}^\prime$ we write it in block-diagonal form with respect to the basis of $\mathcal{A}$ as 
\begin{equation}
V = \begin{pmatrix}
    V_{00} & V_{01} \\
    V_{10} & V_{11}
\end{pmatrix}\ ,
\end{equation}
where, for instance, $V_{00} = (\Id \otimes \bra{0} ) V (\Id \otimes \ket{0})$.

\begin{lemmasec}[First-order gadgets, adapted from Bravyi et al.\cite{bravyi2017complexity}]\label{lem:bh1storder}\noproofref
Suppose $H \in \Herm(\mathcal{H})$ and $V \in \Herm(\mathcal{H}^\prime)$ are such that

\begin{equation}
\| H - V_{00} \| \leq \frac{\epsilon}{2}\ .
\end{equation}
Then $H^\prime = \Delta H_0 + V$ defines a $(O(\Delta),\eta,\epsilon)$-gadget for $H$, where $H_0 = \Id\otimes \proj{1}$, provided that $\Delta \geq O(\epsilon^{-1} \|V\|^2 + \eta^{-1} \|V\|)$.
\end{lemmasec}

\begin{lemmasec}[Second-order gadgets, adapted from Bravyi et al.\cite{bravyi2017complexity}]\label{lem:bh2ndorder}\noproofref
Let $H \in \Herm(\mathcal{H})$, and suppose $V^{(1)},V^{(0)} \in \Herm(\mathcal{H}^\prime)$ are such that $\|V^{(1)}\|,\|V^{(0)}\| \leq \Lambda$, $V^{(0)}_{10} = V^{(0)}_{01} = V^{(1)}_{00} = 0$, and
\begin{equation}
\|H - V^{(0)}_{00} + V^{(1)}_{01} V^{(1)}_{10} \| \leq \frac{\epsilon}{2}\ .
\end{equation}
Then $H^\prime = \Delta H_0 + \Delta^{\frac{1}{2}} V^{(1)} + V^{(0)}$ is a $(O(\Delta),\eta,\epsilon)$-gadget for $H$, where $H_0 = \Id\otimes \proj{1}$, if
\begin{equation}
\Delta \geq O(\epsilon^{-2} \Lambda^6 + \eta^{-2} \Lambda^2)\ .
\end{equation}
\end{lemmasec}

\begin{lemmasec}[Third-order gadgets, adapted from Bravyi et al.\cite{bravyi2017complexity}]\label{lem:bh3rdorder}\noproofref
Let $H \in \Herm(\mathcal{H})$, and suppose $V^{(2)},V^{(1)},V^{(0)} \in \Herm(\mathcal{H}^\prime)$ are such that $\|V^{(2)}\|,\|V^{(1)}\|,\|V^{(0)}\| \leq \Lambda$, $V^{(1)}_{10}=V^{(1)}_{01}=V^{(0)}_{10}=V^{(0)}_{01} = 0$, $V^{(2)}_{00} = 0$,
\begin{equation}
\|H - V^{(0)}_{00} - V^{(2)}_{01} V^{(2)}_{11} V^{(2)}_{10} \| \leq \frac{\epsilon}{2}\ ,\quad \text{and} \quad V^{(1)}_{00} = V^{(2)}_{01} V^{(2)}_{10}\ .
\end{equation}
Then $H^\prime = \Delta H_0 + \Delta^{\frac{2}{3}} V^{(2)} + \Delta^{\frac{1}{3}} V^{(1)} +V^{(0)}$ is a $(O(\Delta),\eta,\epsilon)$-gadget for $H$, where $H_0 = \Id\otimes \proj{1}$, if
\begin{equation}
\Delta \geq O(\epsilon^{-3} \Lambda^{12} + \eta^{-3} \Lambda^3)\ .
\end{equation}
\end{lemmasec}

We illustrate the application of these lemmas to our definition with the following ubiquitous gadgets from Oliviera et al.\cite{oliveira2008complexity}:

Given a target Hamiltonian $H = A \otimes B \in \Herm(\mathcal{H}_A \otimes \mathcal{H}_B)$, the subdivision gadget on $\mathcal{H}_A \otimes \mathcal{H}_B \otimes \mathcal{H}_C$ (where $\mathcal{H}_C \cong \mathbb{C}^2$) is defined by
\begin{equation}
H^\prime = \Delta H_0 + \Delta^{\frac{1}{2}} V^{(1)} + V^{(0)}\ ,
\end{equation}
where
\begin{align}
H_0 &= \Id \otimes \Id \otimes \proj{1}\ ,\\
V^{(1)} &= \frac{1}{\sqrt{2}} (-A \otimes \Id + \Id \otimes B) \otimes X\ , \\
V^{(0)} &= \frac{1}{2} (A^2 \otimes \Id + \Id \otimes B^2 ) \otimes \Id\ .
\end{align}
Then by \cref{lem:bh2ndorder} we see that, for sufficiently large $\Delta$, $(H^\prime,\mathcal{H}_C)$ defines a $(O(\Delta),\eta,\epsilon)$-gadget for $H$ (see \cref{fig:existinggadgets}(a)).

\begin{figure}
    \centering
    \includegraphics{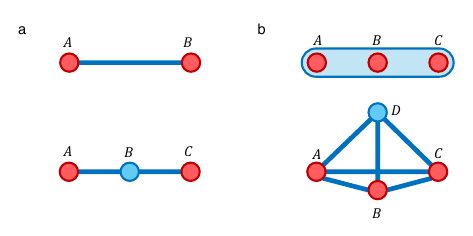}
    \caption{Existing gadgets. (a) Interaction hypergraphs of a 2-system interaction before (above) and after (below) the use of the subdivision gadget. (b) Interaction hypergraphs of a 3-system interaction before (above) and after (below) the use of the 3-to-2 gadget.}
    \label{fig:existinggadgets}
\end{figure}
Given a target Hamiltonian $H = A \otimes B \otimes C \in \Herm(\mathcal{H}_A \otimes \mathcal{H}_B \otimes \mathcal{H}_C)$, the 3-to-2 local gadget on $\mathcal{H}_A \otimes \mathcal{H}_B \otimes \mathcal{H}_C \otimes \mathcal{H}_D$ (where $\mathcal{H}_D \cong \mathbb{C}^2$) is defined by
\begin{equation}
H^\prime = \Delta H_0 + \Delta^{\frac{2}{3}} V^{(2)} + \Delta^{\frac{1}{3}} V^{(1)} + V^{(0)}\ ,
\end{equation}
where
\begin{align}
        H_0 &= \Id \otimes \Id \otimes \Id \otimes \proj{1}\ ,\\
        V^{(2)} &= \frac{1}{\sqrt{2}} (-A\otimes \Id + \Id \otimes B) \otimes \Id \otimes X - \Id \otimes \Id \otimes C \otimes \proj{1} \ ,\\
        V^{(1)} &= \frac{1}{2} (-A \otimes \Id + \Id \otimes B)^2 \otimes \Id \otimes \Id \ ,\\
        V^{(0)} &= \frac{1}{2} (A^2 \otimes \Id + \Id \otimes B^2 ) \otimes C \otimes \Id \ .
\end{align}
By \cref{lem:bh3rdorder} we see that, for sufficiently large $\Delta$, $(H^\prime,\mathcal{H}_D)$ defines a $(O(\Delta),\eta,\epsilon)$-gadget for $H$ (see \cref{fig:existinggadgets}(b)).

We provide the following example to illustrate the importance of the $\eta$ parameter as a quantifier of how well a gadget combines with other terms.

Let $H = A\otimes B \otimes C \in \Herm((\mathbb{C}^2)^{\otimes 3})$ be a 3-qubit interaction, and diagonalise $A$, $B$, and $C$ as
\begin{equation}
A = \lambda_0^A \proj{0} + \lambda_1^A \proj{1}\ ,\quad B = \lambda_0^B \proj{0} + \lambda_1^B \proj{1} \ ,\quad C = \lambda_0^C \proj{0} + \lambda_1^C \proj{1}\ .
\end{equation}
Let $H^\prime \in \Herm((\mathbb{C}^2)^{\otimes 4})$ be defined as
\begin{align}
    H^\prime &= \lambda_0^B (A - \lambda_0^A \Id) \otimes \Id \otimes \Id \otimes C \notag\\
    &\quad + \lambda_1^B \Id \otimes (A - \lambda_0^A \Id) \otimes \Id \otimes C \notag\\
    &\quad + \lambda_0^A \Id \otimes \Id \otimes B \otimes C\ ,
\end{align}
and let $P^\prime \in \Proj((\mathbb{C}^2)^{\otimes 4})$ be
\begin{equation}
P^\prime = (\Id\otimes \proj{0} \otimes \proj{0} + \proj{0} \otimes \Id\otimes \proj{1} ) \otimes \Id\ .
\end{equation}
Then in fact the restriction of $H^\prime$ to the image of $P^\prime$ exactly reproduces the spectrum of $H$. This hence defines a 3-to-2 $(\eta,0)$-gadget --- or a $(\Delta,\eta,0)$-gadget, if one adds a term of the form $O(\Delta) (\Id - P^\prime)$ to $H^\prime$. The caveat is that this gadget has a large $\eta$ parameter, and hence it does not combine well with other interactions. For instance, in \cref{def:gadgetdefinition} we might take $P = \proj{0} \otimes \Id \otimes \Id \otimes \Id$, and $U = (\mathbb{F} \otimes \proj{0} + \Id\otimes \Id \otimes \proj{1}) \otimes \Id$, where $\mathbb{F}$ is the two-qubit swapping operator. This gives $\eta = 2$.

The construction of $H^\prime$ can be thought of as splitting the $A$ qubit into two qubits (see \cref{fig:exact3to2}), and controlling whether the first or second qubit is excited depending on the value of the $B$ qubit. Therefore, if the full Hamiltonian contains another interaction term which acts on the $A$ site in $H$, then the locality of this term will be increased under the gadgetisation procedure. Such a gadget cannot be used to systematically reduce the locality of a Hamiltonian with many interactions.

\begin{figure}
    \centering
    \includegraphics{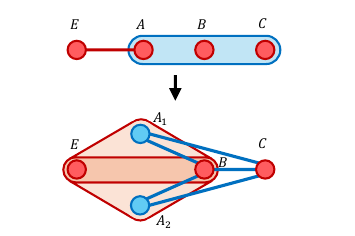}
    \caption{The exact 3-to-2 gadget. The (blue) 3-local interaction between $A$, $B$, and $C$ is replaced by a series of (blue) 2-local interactions, where the $A$ site has been split into two sites $A_1$ and $A_2$. However after this process, the 2-local interaction (red) between $A$ and another qubit $E$ is replaced by two 3-local interactions between $E,A_1,B$ and $E,A_2,B$. Compare this with \cref{fig:existinggadgets}(b), for which additional interactions on qubit $A$ will remain on qubit $A$ of the gadgetised Hamiltonian without any need for adjustment.}
    \label{fig:exact3to2}
\end{figure}

\subsection*{Gadget combination results}\label{sec:gadgetcombination}

The following results show that gadgets satisfying \cref{def:gadgetdefinition} or \cref{def:lowenergygadgetdefinition} can be systematically combined as desired. Our techniques and proofs extend prior work \cite{bravyi2008quantum,oliveira2008complexity,piddock2017complexity}, using the convenient formalism of the direct rotation \cite{bravyi2011schrieffer}. The scalings of the parameters $\eta^\prime,\epsilon^\prime$ are not necessarily optimal, though they sufficient for application to the subdivision and 3-to-2 gadget constructions exhibited above. The proofs of our gadget combination results can be found in \cref{sec:supnot4}.

We summarise the setup below, which will be used throughout the following results.

\begin{setup}\label{set:gadgetcombsetup}
Let $H \in \Herm(\mathcal{H})$ be a Hamiltonian on $n$ sites, $\mathcal{H} = \otimes_{i=1}^n \mathcal{H}_i$. Assume $H = \sum_{i=1}^N H_i$, where $N = O(n)$, such that each $H_i$ acts on at most $k = O(1)$ of the sites $\mathcal{H}_i$, and each site participates in at most $d = O(1)$ interactions. Assume also that $H$ has bounded interaction strengths, that is, $\|H_i\| \leq J$ for all $i$.

In the below propositions we consider a family (depending on $n$) of gadgets $(H_i^\prime, \mathcal{A}_i)$ for $H_i$, with $U_i$, $P_i$, and $P_i^\prime$ defined as in \cref{def:gadgetdefinition}, for each $i$. Assume that $\mathcal{A}_i$ consists of $O(1)$ ancillary sites and that $H_i^\prime$ is a local Hamiltonian consisting of $O(1)$ interactions, such that
\begin{equation}
\| H_i^\prime \| \leq J^\prime\ ,\quad \| (\Id\otimes P_i) H_i^\prime (\Id\otimes P_i^\perp) \| \leq J_O^\prime\ .
\end{equation}
\end{setup}

Firstly, we state the main result: that gadgets as in \cref{def:gadgetdefinition} may be systematically combined to produce new gadgets.

\begin{propositionsec}[Parallel $(\eta,\epsilon)$-gadget combination]\label{thm:gadgetcombination}\noproofref
Let $H = \sum_i H_i$ be as in \cref{set:gadgetcombsetup}, and suppose that each $(H_i^\prime,\mathcal{A}_i)$ defines a $(\eta,\epsilon)$-gadget for $H_i$.

Define 
\begin{equation}
H^\prime = \sum_i H_i^\prime \in \Herm\big( \mathcal{H} \otimes (\otimes_i \mathcal{A}_i)\big)\ .
\end{equation}
Then $(H^\prime,\otimes_i \mathcal{A}_i)$ is a $(\eta^\prime,\epsilon^\prime)$-gadget for $H$, where
\begin{equation}
\epsilon^\prime = O(n\epsilon + n\eta J + n\eta^3 J^\prime_O + n\eta^4 J^\prime)\ ,\quad \eta^\prime = O(n\eta)\ .
\end{equation}
\end{propositionsec}

For completeness, we also prove a similar result that $(\Delta,\eta,\epsilon)$-gadgets can be combined to create a new $(\Delta^\prime,\eta^\prime,\epsilon^\prime)$-gadget. It follows from \cref{thm:gadgetcombination} that the combination of many $(\Delta,\eta,\epsilon)$-gadgets defines a $(\eta^\prime,\epsilon^\prime)$-gadget, however it still remains to show that the projector $P^\prime$ in the sense of \cref{def:gadgetdefinition} may be taken as a low-energy projector $P_{\leq \Delta^\prime(H^\prime)}$.

\begin{propositionsec}[Parallel $(\Delta,\eta,\epsilon)$-gadget combination]\label{thm:lowenergygadgetcombination}\noproofref
Let $H = \sum_i H_i$ be as in \cref{set:gadgetcombsetup}, and suppose that each $(H_i^\prime,\mathcal{A}_i)$ defines a $(\Delta,\eta,\epsilon)$-gadget for $H_i$, where
\begin{equation}
\Delta \geq \frac{\|H\| + J + N (\epsilon + 2 J \eta)}{ \frac{1}{4} - 2\eta} = O(nJ)\ ,\label{eq:legadgetcombreq1}
\end{equation}
and assume that the scaling of $\eta$ with $n$ is bounded as
\begin{equation}
\eta = o(n^{-\frac{1}{2}})\ ,\label{eq:legadgetcombreq2}
\end{equation}
and moreover that, for large $J^\prime$,
\begin{equation}
n\epsilon + n\eta J + n\eta^3 J_O^\prime + n\eta^4 J^\prime = o(J^\prime)\ , \quad J^\prime = O(\Delta) \ . \label{eq:legadgetcombreq3}
\end{equation}
Define
\begin{equation}
H^\prime = \sum_i H_i^\prime \in \Herm(\mathcal{H} \otimes (\otimes_i \mathcal{A}_i))\ .
\end{equation}
Then $(H^\prime, \otimes_i \mathcal{A}_i)$ is a $(\Delta^\prime,\eta^\prime,\epsilon^\prime)$-gadget for $H$, where
\begin{equation}
\Delta^\prime = \frac{1}{2} \Delta\ ,\quad \epsilon^\prime = O(n\epsilon + n\eta J + n\eta^3 J_O^\prime + n^3\eta^4 J^\prime )\ ,\quad \eta^\prime = O(n\eta)\ .
\end{equation}
\end{propositionsec}

For an example of how these conditions can be satisfied, consider the case of combining many of the 3-to-2 gadgets described above. Setting $J=1$ for convenience, we have $J^\prime = \Theta(\Delta)$, $J_O^\prime = \Theta(\Delta^{2/3})$, and $\epsilon,\eta = O(\Delta^{-1/3})$. The errors $\epsilon^\prime$ and $\eta^\prime$ both grow as $O(n\Delta^{-1/3})$, so a good gadget will require $\Delta = \Omega(n^3)$. A direct computation verifies that this condition also ensures that (\ref{eq:legadgetcombreq1}-\ref{eq:legadgetcombreq3}) are satisfied. Hence reduction from a 3-local to 2-local Hamiltonian in this way requires interaction strengths to scale as $n^3$.

To combine $(\Delta,\eta,\epsilon)$ gadgets using \cref{thm:lowenergygadgetcombination} requires the unappealing conditions of \eqref{eq:legadgetcombreq1}-\eqref{eq:legadgetcombreq2}, which explicitly require the gadget energies to scale with $n$. In fact, as noted by Bravyi et al.\cite{bravyi2008quantum}, the regime of bounded-strength interactions does still allow approximation of the ground state energy of $H$ --- the caveat being that the errors are extensive. Below is a generalisation of their main result.

\begin{theoremsec}[Ground state energy estimation with $(\Delta,\eta,\epsilon)$-gadgets, generalising Bravyi et al., Theorem 1\cite{bravyi2008quantum}]\label{thm:lowenergygadgetgse}\noproofref
Let $H = \sum_i H_i$ be as in \cref{set:gadgetcombsetup},and suppose that each $(H_i^\prime,\mathcal{A}_i)$ defines a $(\Delta,\eta,\epsilon)$-gadget for $H_i$.

Define
\begin{equation}
H^\prime = \sum_i H_i^\prime \in \Herm(\mathcal{H} \otimes (\otimes_i \mathcal{A}_i))\ .
\end{equation}
Then the ground state energies of $H$ and $H^\prime$ satisfy
\begin{equation}
| \lambda_0 (H) - \lambda_0(H^\prime) | = O(n\epsilon + n\eta J + n\eta^3 J_O^\prime + n\eta^4 J^\prime) \ .
\end{equation}
\end{theoremsec}

\subsection*{Gadget energy scaling}

Here we present the main result of the section: general locality reduction gadgets cannot exist without unfavourably scaling energies. This result holds in the most general setting of $(\eta,\epsilon)$-gadgets (\cref{def:gadgetdefinition}), and hence follows even from the relaxed $(\zeta,\epsilon)$-gadget property of \cref{def:gadgetproperty}. 

\begin{theoremsec}[Gadget energy scaling]\label{thm:gadgetenergyscaling}\noproofref
Let $\mathcal{H} = (\mathbb{C}^2)^{\otimes k}$ be the space of $k=O(1)$ qubits, and let $H$ be the $k$-fold tensor product of Pauli $Z$ operators with strength $J > 0$,
\begin{equation}
H = J\bigotimes_{i=1}^k Z_i\ .
\end{equation}
Suppose $(H^\prime, \mathcal{A})$ is a $(\eta,\epsilon)$-gadget for $H$ for $H^\prime$ a $k^\prime$-local Hamiltonian, where $k^\prime < k$.

Then, provided $\epsilon < J$, the gadget must have energy scale $\|H^\prime\| \geq \frac{J - \epsilon}{\eta} = \Omega(\eta^{-1})$.
\end{theoremsec}

The method of proof (found in \cref{sec:supnot5}) is simple, and very likely does not provide an optimal lower bound for $\|H^\prime\|$, due to the lack of any dependence on $k$. We expect that such dependence should be present; any approach which iteratively lowers the locality of an interaction from $k$-local to 2-local will accumulate scalings from each round of gadgetisation, but this does not rule out a more direct approach. Existing methods to reduce locality, such as the subdivision and 3-to-2 gadgets of Oliviera et al.\cite{oliveira2008complexity} and higher-order gadgets\cite{jordan2008perturbative,cichy2022perturbative}, give scalings that suggest that any $k$-to-2-local gadget construction should require energies which scale exponentially in $k$. The question of whether such exponential scaling is the best possible was first raised by Bravyi et al.\cite{bravyi2008quantum}, and is still unresolved. Using the formalism introduced here, this problem can be precisely stated, and optimisation of \cref{thm:gadgetenergyscaling} may provide a negative result. Furthermore, we expect that it may be possible to answer similar questions about gadget energy scaling in other cases, for example in simplifying the structure of an interaction graph or reducing to smaller families of interactions.

The significance of \cref{thm:gadgetenergyscaling} is that it essentially rules out a size-independent (\cref{def:sizeindependence}) simulation of a $k$-local Hamiltonian $H$ by another $k^\prime$-local Hamiltonian $H^\prime$ for $k^\prime < k$, for the following reason. Any modular encodings require the use of term-by-term gadgets, which must each satisfy the $(\zeta,\epsilon)$-gadget property (\cref{def:gadgetproperty}) with $\zeta,\eta = O(n^{-1})$ to guarantee that they can be combined (since the rest of the Hamiltonian will have $\|H_\text{else}\| = O(n)$). By \cref{thm:thingsthatlooklikegadgetsaregadgets}, this requires the use of $(\eta,\epsilon)$-gadgets (\cref{def:gadgetdefinition}) with $\eta = O(n^{-1/2})$, and by \cref{thm:gadgetenergyscaling} this will require interactions which scale at least as $\Omega(n^{1/2})$.

A couple of notes on gadget energy scalings in existing work: Bausch\cite{bausch2020perturbation} gives a method to reduce the exponential or doubly-exponential scaling in perturbative Hamiltonians to polynomial scaling, and Cao et al.\cite{cao2015perturbative} present gadgets whose interaction strengths do not grow with accuracy. However, both cases violate size-independence (\cref{def:sizeindependence}) in other ways such as polynomial scaling in the number of simulator qubits or instead shrinking the interaction strengths.

\subsection*{Gadgets from the quantum Zeno effect}\label{sec:szenogadgets}

In this section, we demonstrate an alternative approach for reducing the locality of an interaction in a Hamiltonian --- a task for which \cref{thm:gadgetenergyscaling} establishes the need for energies which scale with the size of the system, when conventional gadgets are used. The construction presented here, however, uses the freedom afforded by the general simulation channel $T_t$ in \cref{def:analogsimulation} to take advantage of an additional resource: dissipation.

We will see that, despite some impractical features for experimental implementation, this approach offers a theoretical improvement in scalings over the conventional gadget techniques discussed earlier in the section. Additionally, this construction captures a key feature of our framework for analogue simulators given in \cref{def:analogsimulation} in contrast with existing work: we define simulators in terms of their dynamic behaviour, rather than in terms of the properties of static Hamiltonians. 

For the process we describe here, we repeatedly refer to measurement for conceptual simplicity when talking about probabilities, but this terminology is somewhat misleading; we do not record or use the outcome.

Let $H \in \Herm(\mathcal{H})$ be a single interaction in a many-body system, which we intend to simulate. As before, we will introduce an ancillary qubit $\mathcal{A} \cong \mathbb{C}^2$, and evolve under a Hamiltonian $H^\prime \in \Herm(\mathcal{H} \otimes \mathcal{A})$, but now we supplement the natural time evolution with regular projective measurements on the $\mathcal{A}$ system at time intervals of $\delta t$. By the quantum Zeno effect \cite{misra1977zeno}, this forces the $\mathcal{A}$ system to stay in the $\ket{0}$ state with high probability, meanwhile simulating the desired interaction on the $\mathcal{H}$ system.

The following result, \cref{prop:zenogadget1}, provides a formal construction for the measurement-based gadgets described above --- see \cref{sec:supnot2} for the proof. Qualitatively, this result tells us that if we evolve $\ket{\psi}\otimes\ket{0}$ for time $\delta t$ under the simulator Hamiltonian $H^\prime$, and then measure the ancillary qubit, we will obtain a `1' result with probability $O((\delta t)^3)$ (corresponding to an amplitude of $O((\delta t)^{3/2})$. In the more likely case that we obtain `0', the post-measurement state (on the $\mathcal{H}$ space) is $e^{-i\delta t H} \ket{\psi}$, for some new Hamiltonian $H$, up to error $O((\delta t)^2)$. By repeating this process $t/\delta t$ times, we will hence obtain a state $e^{-it H} \ket{\psi} + O(t (\delta t))$ on the $\mathcal{H}$ space if `0' is measured in every round of measurement. The probability of a measurement error in this process scales as $t(\delta t)^2$, hence can be controlled provided that $\delta t = O(t^{-1/2})$, which will always be satisfied if we choose $\delta t = O(t^{-1})$ in order to control the error on the post-measurement state.

\begin{propositionsec}\label{prop:zenogadget1}\noproofref
For a Hilbert space $\mathcal{H}$ and an ancillary qubit $\mathcal{A} = \mathbb{C}^2$, let $H^\prime \in \Herm(\mathcal{H}\otimes \mathcal{A})$ be a Hamiltonian given by
\begin{equation}
H^\prime = H_{\Id} \otimes \Id + H_X \otimes X + H_{\proj{1}} \otimes \proj{1}\ ,
\end{equation}
for some $H_{\Id},H_X,H_{\proj{1}} \in \Herm(\mathcal{H})$ depending on a small parameter $\delta t$ such that $\|H_{\Id}\| = O(1)$, $\|H_X\| = O((\delta t)^{-1/2})$, and $\|H_{\proj{1}}\| = O((\delta t)^{-1})$ with $H_{\proj{1}}^2 = \omega^2 \Id$, $\omega = \frac{2\pi}{\delta t}$.

Then, for any $\ket{\psi} \in \mathcal{H}$,
\begin{equation}
e^{-i\delta t H^\prime} (\ket{\psi} \otimes \ket{0}) = \big( e^{-i\delta t H} \ket{\psi} + O((\delta t)^2) \big) \otimes \ket{0} + O((\delta t)^{3/2}) \otimes \ket{1}\ ,
\end{equation}
where
\begin{equation}
H = H_{\Id} - \omega^{-2} H_X H_{\proj{1}} H_X\ .
\end{equation}
\end{propositionsec}

This provides a new 3-to-2-local gadget for Pauli strings. For example, we can set $H_{\Id} = -Z_1$, $H_X = \sqrt{\frac{\omega}{2}} (Z_2 + Z_3)$, $H_{\proj{1}} = -\omega Z_1$; this yields a 2-local Hamiltonian $H^\prime$ simulating the 3-local interaction $H = Z_1\otimes Z_2 \otimes Z_3$. More generally, given three commuting Pauli strings $A_a,B_b,C_c$, we can set $H_{\Id} = -A_a$, $H_X = \sqrt{\frac{\omega}{2}} (B_b + C_c)$, $H_{\proj{1}} = -\omega A_a$ to simulate the interaction $H = A_a \otimes B_b \otimes C_c$. This procedure may be used to simulate a $k$-local Pauli string using a $(\lceil k/3 \rceil + 1)$-local Hamiltonian.

Although \cref{prop:zenogadget1} shows that evolution and repeated measurements under $H^\prime$ reproduce the dynamics of $H$, it is also important to guarantee that it can be combined with other interactions. \cref{prop:zenogadget2} provides the necessary result for this, by verifying that the conclusions of \cref{prop:zenogadget1} also hold when an additional term $H_\text{else} \in \Herm(\mathcal{H})$ is added to both the target and simulator Hamiltonian.

\begin{propositionsec}\label{prop:zenogadget2}\noproofref
Let $H_\text{else}= \sum_i h_i$ be a $k$-local Hamiltonian on $\mathcal{H} = \otimes_i \mathcal{H}_i$ such that $\|h_i\| = O(1)$, and whose interaction graph has a degree bounded by an $O(1)$ constant.

Introduce an ancillary qubit $\mathcal{A} = \mathbb{C}^2$, and let $H^\prime \in \Herm(\mathcal{H}\otimes \mathcal{A})$ be a Hamiltonian given by
\begin{equation}
H^\prime = H_{\Id} \otimes \Id + H_X \otimes X + H_{\proj{1}} \otimes \proj{1}\ ,
\end{equation}
for some $H_{\Id},H_X,H_{\proj{1}} \in \Herm(\mathcal{H})$ depending on a small parameter $\delta t$ such that $\|H_{\Id}\| = O(1)$, $\|H_X\| = O((\delta t)^{-1/2})$, and $\|H_{\proj{1}}\| = O((\delta t)^{-1})$ with $H_{\proj{1}}^2 = \omega^2 \Id$, $\omega = \frac{2\pi}{\delta t}$. Assume that $H_{\Id}$, $H_X$, and $H_{\proj{1}}$ act on $O(1)$ sites in $\mathcal{H}$.

Then, for any $\ket{\psi} \in \mathcal{H}$,
\begin{equation}
e^{-i\delta t (H^\prime + H_\text{else} \otimes \Id)} (\ket{\psi} \otimes \ket{0}) = \big( e^{-i\delta t (H + H_\text{else})} \ket{\psi} + O((\delta t)^2) \big) \otimes \ket{0} + O((\delta t)^{3/2} ) \otimes \ket{1}\ ,
\end{equation}
where
\begin{equation}
H = H_{\Id} - \omega^{-2} H_X H_{\proj{1}} H_X\ .
\end{equation}

\end{propositionsec}

The significance of \cref{prop:zenogadget2} is that the errors do not depend on the size of the system through $\|H_\text{else}\|$, due to bounds we place on the Trotter error in the expansion $e^{-i\delta t (H + H_\text{else})} \approx e^{-i\delta t H} e^{-i\delta t H_\text{else}}$.

\section*{Discussion}

Given the result of \cref{prop:zenogadget2}, we can now describe how the measurement gadget construction fits into our framework of analogue quantum simulation described in \cref{def:analogsimulation}.

Given a Hamiltonian $H = Z_1\otimes Z_2 \otimes Z_3 + H_\text{else}$ on $n$ qubits $\mathcal{H} = (\mathbb{C}^2)^{\otimes n}$, with $H_\text{else} \in \Herm(\mathcal{H})$ satisfying the requirements of \cref{prop:zenogadget2}, we fix some $\delta t > 0$ and define the simulator space $\mathcal{H}^\prime = \mathcal{H}\otimes \mathcal{A}$, where $\mathcal{A} = \mathbb{C}^2$. Let $H^\prime \in \Herm(\mathcal{H}^\prime)$ be given by
\begin{equation}
H^\prime = -Z_1 \otimes \Id + \sqrt{\frac{\omega}{2}} (Z_2 + Z_3) \otimes X - \omega Z_1 \otimes \proj{1}\ ,
\end{equation}
where $\omega = \frac{2\pi}{\delta t}$. Define the state and observable encodings $\mathcal{E}_{\state}$ and $\mathcal{E}_{\obs}$ by
\begin{equation}
\mathcal{E}_{\state}(\rho) = \rho \otimes \proj{0}\ ,\quad \mathcal{E}_{\obs}(O) = O \otimes \Id\ ,
\end{equation}
and define channels $E_{\delta t},M : D(\mathcal{H}^\prime) \rightarrow D(\mathcal{H}^\prime)$ by
\begin{align}
    E_{\delta t}(\rho^\prime) &= e^{-i\delta t (H^\prime + H_\text{else} \otimes \Id)} \rho^\prime e^{i\delta t (H^\prime + H_\text{else} \otimes \Id)}\ , \\
    M(\rho^\prime) &= \tr_\mathcal{A}[\rho^\prime(\Id\otimes \proj{0})] \otimes \proj{0} + \tr_\mathcal{A}[\rho^\prime(\Id\otimes \proj{1}]\otimes \proj{1}\ ,
\end{align}
so that $E_{\delta t}$ corresponds to evolution under the Hamiltonian $H^\prime + H_\text{else}$ for time $\delta t$, and $M$ corresponds to a measurement of the $\mathcal{A}$ system. Then, for all $t$, define the time evolution channel
\begin{equation}
T_t = ( M \circ E_{\delta t})  \circ (M \circ E_{\delta t}) \circ \dots \circ (M \circ E_{\delta t})\ ,
\end{equation}
containing $\lfloor t / \delta t \rfloor$ copies of $(M\circ E_{\delta t})$. This evolution is described by \cref{fig:zenogadgetcircuit}. The content of \cref{prop:zenogadget2} tells us that
\begin{equation}
(T_t\circ \mathcal{E}_{\state})(\rho) = \big( e^{-itH}\rho e^{itH} + O(t\delta t) \big) \otimes \proj{0} + O(t(\delta t)^2)\otimes \proj{1}\ ,
\end{equation}
and hence for any observable $O \in \Herm(\mathcal{H})$ with $\|O\| = 1$,
\begin{equation}
\tr[\mathcal{E}_{\obs}(O) (T_t\circ \mathcal{E}_{\state})(\rho)] = \tr[O e^{-itH} \rho e^{itH}] + O(t\delta t)\ .
\end{equation}

The channels $T_t$ therefore simulate $H$ (in the sense of \cref{def:analogsimulation}) with respect to any states $\Omega_{\state}$ and normalised observables $\Omega_{\obs}$, up to accuracy $\epsilon>0$ and maximum time $t_{\max}$, provided that one chooses $\delta t = O(\epsilon t_{\max}^{-1})$. Therefore we require interaction strengths and measurement frequency which scale as $J = O(\epsilon^{-1} t_{\max})$ --- note that this does not depend on $n$, the size of the system.

We can compare these scalings with those obtained if we were to use conventional gadgets. Suppose we have a $(\eta,\epsilon)$-gadget in the sense of \cref{def:gadgetdefinition}, with $\eta = O(n^{-1} \epsilon)$ to ensure an absolute error of $O(\epsilon)$ when combined with a Hamiltonian of order $n$, comparable with the above construction. By \cref{thm:gadgetenergyscaling}, this must involve energy scalings of $J = \Omega(\epsilon^{-1} n)$ (and even without \cref{thm:gadgetenergyscaling}, a low-energy $(\Delta,\eta,\epsilon)$-gadget as in \cref{def:lowenergygadgetdefinition} would require energies scaling as $\Omega(n)$ to ensure that unwanted states are sufficiently penalised). In fact, this is likely not the optimal bound; the best known 3-to-2 gadget construction requires energy scales of $O(\epsilon^{-3} + \eta^{-3})$, which in this case would require interaction strengths scaling as $J = O((\epsilon^{-1} n )^3)$. Even if the system size is restricted via Lieb-Robinson bounds to set $n = O(\log^d(1/\epsilon) + t_{\max}^d)$ (where $d$ is the dimension of the system), the measurement-based gadget still provides an improvement.

Despite this advantage, the measurement gadget construction involves repeated instantaneous decoherence of the ancillary qubit at precise time intervals without disturbing the rest of the system, and may still require large (albeit non-scaling) interaction strengths. Moreover, if $N_\text{gad}$ such gadgets were used in parallel, we expect (though do not calculate here) that an additional overhead of at least $\delta t = O\big( (t_{\max} N_\text{gad})^{-1/2}\big)$ would be necessary to control the probability of measuring a $1$ at any of the ancillary sites. Nonetheless, the construction provides a marked improvement in scalings over existing gadgets for a single 3-local term in a Hamiltonian, and gives some positive clues as to the ways in which simulators might take advantage of more general possibilities for channels allowed by \cref{def:analogsimulation}. We leave the detailed study of such gadgets, and their robustness to error for future work. We anticipate that, for a suitable adaptation of \cref{def:gadgetdefinition} for the dissipative case, there may be similar no-go results preventing locality reduction by gadgets independently of the size of the system.

\section*{Acknowledgements}

We acknowledge financial support from the Novo Nordisk Foundation (Grant No. NNF20OC0059939 ‘Quantum for Life’), the European Research Council (ERC Grant Agreement No. 818761) and VILLUM FONDEN via the QMATH Centre of Excellence (Grant No. 10059). A.H.W. thanks the VILLUM FONDEN for its support with a Villum Young Investigator Grant (Grant No. 25452). I.D. was supported in part by the AFOSR under grant FA9550-21-1-0392 and a National Science Foundation (NSF) Graduate Research Fellowship under Grant No. DGE 1656518. I.D. thanks everyone at QMATH for their hospitality during his research visit to KU and especially Prof. Adam Bouland for encouraging and supporting the visit. I.D. gratefully acknowledges Harriet Apel for generously offering insights and guidance during fruitful discussions at the early stages of this work. 

\bibliographystyle{unsrt}
\bibliography{bib}

\appendix
\section{Local encodings}\label{sec:supnot1}
In this section we prove the simple results concerning local encodings.

\begin{proof}[*prop:errorbackprop]
We prove this for a single Kraus operator $X_k^\prime$ acting on $O(1)$ sites in $\mathcal{H}^\prime$, and the result follows by linearity. Note that
\begin{align}
    X_k^\prime \mathcal{E}_{\state}(\rho) (X_k^\prime)^\dagger &= X_k^\prime \tr_E [U(\rho \otimes \proj{0}_F ) U^\dagger ] (X_k^\prime)^\dagger \notag\\
    &= \tr_E[U X_k (\rho \otimes \proj{0}_F) X_k^\dagger U^\dagger]\ ,
\end{align}
where $X_k = U^\dagger (X_k^\prime \otimes \Id_E) U$, which acts on $O(1)$ sites in $\mathcal{H}\otimes F$ by the causality assumptions on $U$.

\end{proof}

\begin{proof}[*prop:encmeasurement]
To see this, we use the definition of local state encodings and write
\begin{equation}
\mathcal{E}_{obs}^\ast (\rho^\prime) = \tr_G[W (\rho^\prime \otimes \proj{0}_E ) W^\dagger]\ ,
\end{equation}
where $W \in \U(\mathcal{H}^\prime\otimes E, \mathcal{H}\otimes G)$ is a constant-depth quantum circuit. Then the measurement expectation value is
\begin{align}
    \tr[\mathcal{E}_{\obs}(O) \rho^\prime] &= \tr[O \mathcal{E}_{\obs}^\ast (\rho^\prime)] \notag\\
    &= \tr[ (O\otimes \Id_G) W (\rho^\prime \otimes \proj{0}_E) W^\dagger] \notag\\
    &= \tr[ (\Id_{\mathcal{H}^\prime} \otimes \bra{0}_E ) W^\dagger (O \otimes \Id_G) W (\Id_{\mathcal{H}^\prime} \otimes \ket{0}_E ) \rho^\prime]\ .
\end{align}
Assuming $O$ is local, then $W^\dagger (O \otimes \Id_G) W$ acts only on a constant-sized subsystem of $\mathcal{H}^\prime$. In particular, we can write $\mathcal{H}^\prime = A \otimes A^c$ where $A$ consists of $O(1)$ sites, and then
\begin{equation}
W^\dagger (O\otimes \Id_G) W = O^\prime \otimes \Id_{A^c}\ ,
\end{equation}
for some $O^\prime$ acting on $A\otimes E$. Then
\begin{equation}
\tr[\mathcal{E}_{\obs}(O)\rho^\prime] = \tr[(\Id_A \otimes \bra{0}_E ) O^\prime (\Id_A \otimes \ket{0}_E) \rho_A^\prime] \ ,
\end{equation}
which can be estimated via a POVM on $A$.
\end{proof}

\section{Qutrit-to-qubit energy scaling}\label{sec:supnot2}

The idea for the proof of \cref{prop:approxextensive} is simple: by encoding a qutrit into a set of qubits, we must end up with an ``unused'' state in the qubit system, since the encoding cannot be surjective by dimension counting. Since the $(\Delta,\eta,\epsilon)$ simulation requires all simulated states to lie in the low-energy subspace of the simulator, this implies that the unused qubit states must lie in the high-energy (above $\Delta$) subspace.

In the proof below, we start with the encoded ground state $\rho_0$, and construct a state $\rho_1$ which differs only from $\rho_0$ only in one set of qubits in which it is in such an ``unused'' state. The similarity of the states and their differences in energies lead to the requirement for strong interactions.

In this proof, and subsequent sections, we make frequent use of the following standard result from matrix analysis\cite{bhatia2013matrix}.
\begin{lemma}[Weyl's Perturbation Theorem]\label{lem:weyl}\noproofref
Let $A,B \in \Herm(\mathcal{H})$ be Hermitian matrices, with spectra $\lambda_0\leq \lambda_1\leq \dots$ and $\mu_0\leq \mu_1\leq \dots$ respectively. Then
\begin{equation}
\max_j |\lambda_j - \mu_j| \leq \|A - B \|\ .
\end{equation}
\end{lemma}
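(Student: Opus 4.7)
The plan is to prove Weyl's perturbation theorem using the Courant--Fischer min--max characterization of the eigenvalues of a Hermitian operator. Recall that, for $A \in \Herm(\mathcal{H})$ with ordered spectrum $\lambda_0 \leq \lambda_1 \leq \dots$, the Courant--Fischer theorem gives
$$
\lambda_j(A) = \min_{\substack{S \subseteq \mathcal{H} \\ \dim S = j+1}} \max_{\substack{v \in S \\ \|v\|=1}} \langle v, A v \rangle\ .
$$
I would state this characterization (or assume it as standard) and then use it as the main engine of the proof.

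The key step is to control the difference $\langle v, A v \rangle - \langle v, B v \rangle = \langle v, (A-B) v \rangle$ uniformly in $v$. For any unit vector $v \in \mathcal{H}$, the Cauchy--Schwarz inequality together with the definition of the operator norm yields
$$
\bigl| \langle v, (A-B) v \rangle \bigr| \leq \|A-B\|\ .
$$
In particular, $\langle v, A v \rangle \leq \langle v, B v \rangle + \|A-B\|$ for every unit vector $v$.

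Now I would apply Courant--Fischer term-by-term. Fix $j$ and let $S^\star$ be any $(j+1)$-dimensional subspace realizing (or approximating arbitrarily well) the minimum for $\lambda_j(B)$. Then
$$
\lambda_j(A) \leq \max_{\substack{v \in S^\star \\ \|v\|=1}} \langle v, A v \rangle \leq \max_{\substack{v \in S^\star \\ \|v\|=1}} \langle v, B v \rangle + \|A-B\| = \lambda_j(B) + \|A-B\|\ .
$$
Swapping the roles of $A$ and $B$ (which is legitimate because the hypothesis and conclusion are symmetric, and $\|A-B\| = \|B-A\|$) gives the matching lower bound $\lambda_j(B) \leq \lambda_j(A) + \|A-B\|$. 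Combining these and taking the maximum over $j$ produces the claimed inequality.

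There is no real obstacle here; the only thing to be careful about is ensuring the min--max characterization is being applied with the correct indexing convention (ordered ascending, with the $(j+1)$-dimensional subspace version corresponding to the $j$-th smallest eigenvalue), and noting that the argument applies equally to finite- and infinite-dimensional Hermitian operators with discrete spectrum as in the paper's setting. Since the lemma is quoted as a standard result (with citation to Bhatia), the proposal is essentially a short, self-contained reminder of the classical argument rather than a new derivation.
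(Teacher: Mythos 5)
Your proof is correct: the Courant--Fischer min--max argument together with the uniform bound $|\langle v,(A-B)v\rangle|\leq\|A-B\|$ and the symmetry in $A$ and $B$ is exactly the standard derivation of Weyl's inequality, with the indexing handled correctly. The paper itself gives no proof of this lemma (it is quoted from Bhatia, Corollary III.2.6), and your argument matches the classical one found there, so there is nothing to compare beyond noting that your write-up is a valid self-contained substitute for the citation in the finite-dimensional setting used throughout the paper.
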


\begin{proof}[*prop:approxextensive]
Write $\mathcal{H} = \otimes_{i=1}^n \mathcal{H}_i$, where $\mathcal{H}_i = \mathbb{C}^3$ is a single qutrit site. By the definition of local simulation given by Cubitt et al.\cite{cubitt2018universal}, we have two encodings $\mathcal{E}_{\obs}$ and $\tilde{\mathcal{E}}_{\obs}$ of the form (using that $H$ is real to set $Q=0$ without loss of generality)
\begin{equation}
\mathcal{E}_{\obs}(M) = V(M\otimes P)V^\dagger \ ,\quad \tilde{\mathcal{E}}_{\obs}(M) = \tilde{V} (M\otimes P)\tilde{V}^\dagger\ ,
\end{equation}
where $P$ is a projector on the ancillary space $\mathcal{A}$, and $V,\tilde{V}$ are both isometries $\mathcal{H}\otimes\mathcal{A}\rightarrow\mathcal{H}^\prime$. These encodings satisfy the properties:
\begin{itemize}
    \item $\mathcal{E}_{\obs}$ is a local encoding, in the sense that $\mathcal{A} = \otimes_{i=1}^n \mathcal{A}_i$ and $V = \otimes_{i=1}^n V_i$ where $V_i : \mathcal{H}_i \otimes \mathcal{A}_i \rightarrow \mathcal{H}_i^\prime$. Here we write $\mathcal{H}_i^\prime \cong (\mathbb{C}^2)^{\otimes m_i}$ for the set of $m_i$ qubits into which qutrit $i$ is encoded. Note $\sum_i m_i = m$.
    \item $\tilde{\mathcal{E}}_{\obs}$ satisfies
    \begin{equation}
    \tilde{\mathcal{E}}_{\obs}(\Id) = \tilde{V}(\Id\otimes P)\tilde{V}^\dagger = P_{\leq\Delta(H_n^\prime)}\ ,
    \end{equation}
    where $P_{\leq\Delta(H_n^\prime)}$ is the low-energy (below $\Delta)$ projector for $H_n^\prime$, and
    \begin{equation}\label{eq:qutrithamiltonianencodingapprox}
    \| P_{\leq \Delta(H_n^\prime)} H^\prime_n P_{\leq \Delta(H_n^\prime)} - \tilde{\mathcal{E}}_{\obs}(H_n) \| \leq \epsilon\ .
    \end{equation}
    \item $\mathcal{E}_{\obs}$ and $\tilde{\mathcal{E}}_{\obs}$ are close, in the sense that
    \begin{equation}
    \| V - \tilde{V} \| \leq \eta\ .
    \end{equation}
\end{itemize}

Now we define a state $\tau \in \text{span}(P)$  and define a state encoding (in the sense of Cubitt et al.\cite{cubitt2018universal})
\begin{equation}
\tilde{\mathcal{E}}_{\state}(\rho) = \tilde{V}(\rho \otimes \tau) \tilde{V}^\dagger\ .
\end{equation}
Let $\rho_0 = \tilde{\mathcal{E}}_{\state}(\proj{\downarrow}^{\otimes n})$ be the encoded ground state of $H_n$, which by definition satisfies
\begin{equation}
P_{\leq \Delta(H_n^\prime)} \rho_0 = \rho_0 \ ,\quad \tilde{\mathcal{E}}_{\obs}(H_n)\rho_0 = 0\ .
\end{equation}
Hence we can bound the energy of $\rho_0$ under $H_n^\prime$ by
\begin{align}
    \tr[H_n^\prime \rho_0] &= \tr[P_{\leq\Delta(H_n^\prime)} H_n^\prime P_{\leq\Delta(H_n^\prime)} \rho_0] \notag \\
    &= \tr[ (P_{\leq\Delta(H_n^\prime)} H_n^\prime P_{\leq\Delta(H_n^\prime)} - \tilde{\mathcal{E}}_{\obs}(H_n) ) \rho_0] \notag \\
    &\leq \| P_{\leq\Delta(H_n^\prime)} H_n^\prime P_{\leq\Delta(H_n^\prime)} - \tilde{\mathcal{E}}_{\obs}(H_n) \| \notag \\
    &\leq \epsilon\ .\label{eq:encodedgsebound}
\end{align}
Now without loss of generality we assume that $m_1 = \min_i m_i$. Notice that $V_1 : \mathcal{H}_1 \otimes \mathcal{A}_1 \rightarrow \mathcal{H}_1^\prime$ cannot be surjective, since
\begin{equation}
\dim (\mathcal{H}_1\otimes \mathcal{A}_1) = 3 \dim \mathcal{A}_1 \neq 2^{m_1}\ .
\end{equation}
We can therefore choose some pure state $\psi = \proj{\psi}$ in $\mathcal{H}_1^\prime$ which is orthogonal to the image of $V_1$, and define
\begin{equation}
\rho_1 = \psi \otimes \tr_1[\rho_0] \in \Lin(\mathcal{H}^\prime)\ ,
\end{equation}
Where $\tr_1$ denotes the partial trace over the $\mathcal{H}_1^\prime$ system. This satisfies $V^\dagger \rho_1 = \rho_1 V = 0$, so we have
\begin{align}
    \tr[ P_{\leq\Delta(H_n^\prime)} \rho_1] &= \tr[ (\Id\otimes P) \Tilde{V}^\dagger \rho_1 \Tilde{V}] \notag\\
    &\leq \tr[ \Tilde{V}^\dagger \rho_1 \Tilde{V}] \notag\\
    &= \tr[(\tilde{V} - V)^\dagger \rho_1 (\Tilde{V} - V)] \notag\\
    &\leq \|(\tilde{V} - V)(\tilde{V} - V)^\dagger \| \notag\\
    &\leq \eta^2\ ,
\end{align}
from which we deduce that
\begin{equation}
\tr[H_n^\prime \rho_1] \geq \Delta \tr[(\Id - P_{\leq\Delta(H_n^\prime)}) \rho_1] - \epsilon \tr[P_{\leq\Delta(H_n^\prime)} \rho_1] \geq \Delta (1-\eta^2) - \epsilon \eta^2\ ,
\end{equation}
using that the smallest eigenvalue of $H_n^\prime$ is at least $-\epsilon$, by \eqref{eq:qutrithamiltonianencodingapprox} and \cref{lem:weyl}. Therefore, using \eqref{eq:encodedgsebound},
\begin{equation}\label{eq:statediffenergylowerbound}
\tr[H_n^\prime (\rho_1 - \rho_0)] \geq \Delta (1-\eta^2) - \epsilon (1+\eta^2)\ .
\end{equation}
On the other hand, by expanding $H_n^\prime$ we can write
\begin{equation}\label{eq:expandedenergydifference}
\tr[H_n^\prime (\rho_1 - \rho_0)] = \sum_{j=1}^K \tr[h_j^\prime (\rho_1 - \rho_0)]\ .
\end{equation}
Notice that if $h_j^\prime$ acts trivially on $\mathcal{H}_1^\prime$, that is $h_j^\prime = \Id_1\otimes \Tilde{h}_j$, then
\begin{align}
    \tr[h_j^\prime (\rho_1 - \rho_0)] &= \tr[(\Id_1\otimes \tilde{h}_j)(\psi \otimes \tr_1 \rho_0 - \rho_0)] \notag\\
    &= \tr_1[\psi] \tr_{2,3,\dots}[\tilde{h}_j \tr_1[\rho_0]] - \tr[(\Id\otimes \tilde{h}_j)\rho_0] \notag\\
    &= 0\ .
\end{align}
Hence the only non-zero contributions to \eqref{eq:expandedenergydifference} come from $j$ in the set
\begin{equation}
I_1 = \{ 1 \leq j \leq K\ |\ \text{$h_j^\prime$ acts non-trivially on $\mathcal{H}_1^\prime$}\}\ .
\end{equation}
So \eqref{eq:expandedenergydifference} can be bounded by
\begin{equation}\label{eq:statediffenergyupperbound}
\tr[H_n^\prime (\rho_1 - \rho_0)] = \sum_{j \in I_1} \tr[h_j^\prime (\rho_1 - \rho_0)] \leq 2 |I_1| \max_{j \in I_1} \|h_j^\prime \|\ ,
\end{equation}
using the H\"older inequality for Schatten $p$-norms.

Now notice that, since the largest eigenvalue of $H_n$ is $n$, and the encoding $\tilde{\mathcal{E}}_{\obs}$ preserves spectra, we have
\begin{equation}
\| \tilde{\mathcal{E}}_{\obs}(H_n) \| = \|H_n\| = n\ ,
\end{equation}
so by \eqref{eq:qutrithamiltonianencodingapprox} and \cref{lem:weyl}
\begin{equation}
\| P_{\leq \Delta(H_n^\prime)} H_n^\prime P_{\leq \Delta(H_n^\prime)} \| \geq n - \epsilon\ .
\end{equation}
Hence, by the definition of $P_{\leq \Delta(H_n^\prime)}$, we must have $\Delta > n - \epsilon$. Combining this fact with \eqref{eq:statediffenergylowerbound} and \eqref{eq:statediffenergyupperbound}, we deduce that
\begin{equation}
\max_{j\in I_1} \|h_j^\prime \| > \frac{1}{2|I_1|} \big( (n-\epsilon) (1-\eta^2) - \epsilon (1 + \eta^2) \big)\ .
\end{equation}
Finally, note that $m_1 \leq m/n = O(n^\alpha)$ and $|I_1| \leq d m_1$, so for large $n$ we have the desired scaling
\begin{equation}
\max_{j\in I_1} \|h_j^\prime \| \geq \Omega\big( n^{1-\alpha} (1-\eta^2) \big)\ .
\end{equation}
\end{proof}

\begin{proof}[*lem:growingsystem]
    A ball of radius $R$ contains $O(R^D)$ points in $\mathbb{Z}^D$, so at most $O(R^D/n^\alpha) = O(n^{1-\alpha})$ of the $X_i$ can be fully contained within.

    If an $X_i$ is partially contained within the ball, then it must intersect with its boundary. There are only $O(R^{D-1})$ points on the boundary, so this can be the case for $O(R^{D-1}) = O(n^{1-1/D})$ of the $X_i$.

    Hence the total number of $X_i$ that can be fully or partially contained within a ball of radius $R$ is upper bounded by
    \begin{equation}
        |B_R(y)| = O(n^{1-\alpha} + n^{1-1/D}) = O(n^{1-\min\{\alpha,1/D\}})\ .
    \end{equation}
\end{proof}

\section{Gadget characterisation}\label{sec:supnot3}

In this section, we give the proofs of Theorems \ref{thm:gadgetslooklikegadgets} and \ref{thm:thingsthatlooklikegadgetsaregadgets}, showing the equivalence of our notions of gadgets. The former is quite simple, but the latter requires several preparatory lemmas. In particular, we will make heavy use of the direct rotation --- for a detailed introduction see the review of Bravyi et al.\cite{bravyi2011schrieffer}. We summarise the basic definitions and properties here without proof.

\subsection{The direct rotation}

Consider two states $\ket{\psi},\ket{\phi}$ lying in some Hilbert space $\mathcal{H} \cong \mathbb{C}^N$. There are many unitary matrices $U \in \U(\mathcal{H})$ which rotate between these states (that is, $U\ket{\psi} = \ket{\phi}$), but a particularly natural choice is the unitary $U_{\psi \rightarrow \phi}$ which rotates only within the subspace spanned by $\ket{\psi}$ and $\ket{\phi}$. Defining the reflections $R_\psi = \Id - 2\proj{\psi}$ and $R_\phi = \Id - 2\proj{\phi}$, we can write
\begin{equation}
U_{\psi \rightarrow \phi} = \sqrt{R_\phi R_\psi}\ ,
\end{equation}
assuming it is well-defined. This is the direct rotation from $\ket{\psi}$ to $\ket{\phi}$.

This construction can be generalised to rotations between subspaces:
\begin{definition}[Direct rotation]\label{def:directrot}
Let $\mathcal{P}$ and $\mathcal{Q}$ be linear subspaces of equal dimension corresponding to orthogonal projectors $P$ and $Q$ respectively. Define
\begin{equation}
R_\mathcal{P} = \Id - 2P \ ,\quad R_\mathcal{Q} = \Id - 2Q\ ,
\end{equation}
then direct rotation between $\mathcal{P}$ and $\mathcal{Q}$ is
\begin{equation}
U_{\mathcal{P} \rightarrow \mathcal{Q}} = \sqrt{R_\mathcal{Q} R_\mathcal{P}}\ ,
\end{equation}
where the square root is taken with a branch cut along the negative axis and such that $\sqrt{1} = 1$. This is well-defined whenever $\|P - Q\| < 1$.
\end{definition}
Then $U_{\mathcal{P} \rightarrow \mathcal{Q}}$ satisfies
\begin{equation}
U_{\mathcal{P} \rightarrow \mathcal{Q}} P U_{\mathcal{P} \rightarrow \mathcal{Q}}^\dagger = Q\ .
\end{equation}
Moreover, as described by Bravyi et al. \cite{bravyi2011schrieffer}, the direct rotation may be written in terms of its generator: an anti-Hermitian operator $S=-S^\dagger$ which can be chosen so that $U_{\mathcal{P} \rightarrow \mathcal{Q}} = e^S$, with $\|S\| < \pi/2$ and which is off-diagonal with respect to both $P$ and $Q$:
\begin{equation}
PSP = (\Id - P) S (\Id - P) = QSQ = (\Id - Q) S (\Id - Q) = 0 \ .
\end{equation}
Notice that, writing $S = i\diag(\theta_1,\theta_2,\dots,\theta_n)$ for $\theta_j \in (-\pi/2,\pi/2)$, we have
\begin{equation}
\| U_{\mathcal{P} \rightarrow \mathcal{Q}} - \Id \| = \max_j |2\sin (\theta_j / 2)|\ ,\quad \|S\| = \max_j |\theta_j|\ ,
\end{equation}
and hence
\begin{equation}\label{eq:directrotgeneratornormbound}
\|S\| \leq \frac{\pi}{2\sqrt{2}} \|U_{\mathcal{P} \rightarrow \mathcal{Q}} - \Id \|\ .
\end{equation}

\subsection{The gadget property from gadgets}
\begin{proof}[*thm:gadgetslooklikegadgets]
This follows directly from the definition, since for any $H_\text{else} \in \Herm(\mathcal{H})$, we have
\begin{align}
&\| P^\prime (H^\prime + H_\text{else} \otimes \Id) P^\prime - U \big( (H + H_\text{else})\otimes P \big) U^\dagger \| \notag\\
&\quad\quad \leq \| P^\prime H^\prime P^\prime - U (H\otimes P) U^\dagger \| + \|P^\prime (H_\text{else} \otimes \Id) P^\prime - U (H_\text{else} \otimes P ) U^\dagger \|\ .
\end{align}
The first term is bounded by $\epsilon$ by definition, and the second term can be bounded using
\begin{align}
    &\|P^\prime (H_\text{else} \otimes \Id) P^\prime - U(H_\text{else} \otimes P) U^\dagger \| \notag\\
    &\quad\quad = \| (\Id\otimes P)U^\dagger (H_\text{else} \otimes \Id) U (\Id\otimes P) - H_\text{else} \otimes P \| \notag\\
    &\quad\quad \leq 2\eta \|H_\text{else} \| \ .
\end{align}
Hence the gadget property is satisfied, putting $\tilde{P} = P$ and $\tilde{U}_{H_\text{else}} := U$ for all $H_\text{else}$.
\end{proof}
\subsection{Gadgets from the gadget property}
The proof of \cref{thm:thingsthatlooklikegadgetsaregadgets} requires \cref{lem:projectorcommutator}, a basic linear algebra fact which we prove here for convenience. Two projectors $P$ and $Q$ commute if and only if they are simultaneously diagonalisable, in which case $PQP$ is also a projector. This lemma says that this is also true in the approximate setting: $[P,Q]$ is small if and only if $PQP$ is close to some projector $\tilde{P}$.

\begin{lemma}\label{lem:projectorcommutator}
Let $P,Q \in \Proj(\mathcal{H})$ be projectors, and define
\begin{equation}
f(P,Q) := \min_{\tilde{P} \in \Proj(\mathcal{H})} \| PQP - \tilde{P} \|\ .
\end{equation}
Then
\begin{equation}
\|[P,Q]\| = \sqrt{f(P,Q) - f(P,Q)^2}\ .
\end{equation}
\end{lemma}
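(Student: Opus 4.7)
The plan is to express both $f(P,Q)$ and $\|[P,Q]\|$ in terms of the spectrum of $B := PQP$, and then relate the two resulting expressions.

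First I would establish an auxiliary fact: for any Hermitian $A \in \Herm(\mathcal{H})$ with spectrum $\{\mu_i\}$ contained in $[0,1]$ and corresponding eigenvectors $\{v_i\}$, the minimum operator-norm distance to a projector is $\min_{\tilde P \in \Proj(\mathcal{H})} \|A - \tilde P\| = \max_i \min(\mu_i, 1-\mu_i)$. The lower bound follows by testing against each eigenvector: $\|A - \tilde P\| \geq |\langle v_i, (A - \tilde P) v_i\rangle| = |\mu_i - \langle v_i, \tilde P v_i\rangle| \geq \min(\mu_i, 1-\mu_i)$, using $\langle v_i, \tilde P v_i\rangle \in [0,1]$. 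The matching upper bound is achieved by the ``rounding'' projector $\tilde P = \sum_{\mu_i \geq 1/2} \proj{v_i}$. Since $0 \leq PQP \leq P \leq \Id$, applying this to $B=PQP$ gives $f(P,Q) = \max_i \min(\lambda_i, 1-\lambda_i)$ where $\{\lambda_i\}$ is the spectrum of $PQP$.

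Next I would compute $\|[P,Q]\|$ spectrally by a small trick. Set $A := PQ(\Id - P)$, so that $[P,Q] = A - A^\dagger$. The key observation is $A^2 = PQ(\Id - P) PQ(\Id - P) = 0$ (because $(\Id - P) P = 0$), and similarly $(A^\dagger)^2 = 0$. Since $[P,Q]$ is anti-Hermitian, $[P,Q]^\dagger [P,Q] = -[P,Q]^2 = A A^\dagger + A^\dagger A$. A direct calculation yields $A A^\dagger = PQ(\Id - P) Q P = PQP - (PQP)^2$, which is supported on $\mathrm{range}(P)$, whereas $A^\dagger A = (\Id - P) Q P Q (\Id - P)$ is supported on $\mathrm{range}(\Id - P)$. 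These supports are orthogonal, and $A A^\dagger$ and $A^\dagger A$ always share their nonzero eigenvalues, so $\|[P,Q]\|^2 = \|A A^\dagger + A^\dagger A\| = \|A A^\dagger\| = \max_i \lambda_i(1-\lambda_i)$.

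To finish, set $m := \min(\lambda, 1-\lambda) \in [0, 1/2]$ and note the identity $\lambda(1-\lambda) = m(1-m)$, with $m \mapsto m(1-m)$ strictly increasing on $[0, 1/2]$. Therefore the eigenvalue of $PQP$ that realises $f(P,Q) = \max_i \min(\lambda_i, 1-\lambda_i)$ is the same one that realises $\max_i \lambda_i(1-\lambda_i)$, so $\|[P,Q]\|^2 = f(P,Q)(1 - f(P,Q))$, giving the claimed equality. I expect the main obstacle to be the first step---pinning down that the operator-norm-closest projector is obtained by rounding eigenvalues, rather than a more exotic non-diagonal projector---which is what the variational lower bound via individual eigenvectors takes care of. Step two is made painless by spotting $A^2 = 0$, which collapses $[P,Q]^2$ into a block-diagonal positive operator whose norm is read off immediately.
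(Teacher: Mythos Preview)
Your second and third steps are correct and match the paper's approach: the paper also computes $-[P,Q]^2$ in block form with respect to $P$, obtaining the same expression $\|[P,Q]\|^2 = \max_i \lambda_i(1-\lambda_i)$, and then observes that the same eigenvalue maximises both $\min(\lambda_i,1-\lambda_i)$ and $\lambda_i(1-\lambda_i)$.

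However, there is a genuine gap in your first step. The inequality $\|A - \tilde P\| \geq |\langle v_i,(A-\tilde P)v_i\rangle| = |\mu_i - \langle v_i,\tilde P v_i\rangle|$ is fine, but the next step $|\mu_i - \langle v_i,\tilde P v_i\rangle| \geq \min(\mu_i,1-\mu_i)$ does \emph{not} follow: for a projector $\tilde P$ the quantity $\langle v_i,\tilde P v_i\rangle$ can be any number in $[0,1]$, not just $0$ or $1$. Concretely, in $\mathbb{C}^2$ with $v_i = e_1$ and $\mu_i = \lambda \in (0,1)$, the rank-one projector onto $\sqrt{\lambda}\,e_1 + \sqrt{1-\lambda}\,e_2$ gives $\langle e_1,\tilde P e_1\rangle = \lambda$, so the bound collapses to $0$. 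Thus testing against individual eigenvectors cannot rule out ``non-diagonal'' projectors, which is precisely the obstacle you flagged but did not actually overcome.

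The paper closes this gap differently: it invokes Weyl's perturbation theorem (\cref{lem:weyl}). Since every eigenvalue of $\tilde P$ is $0$ or $1$, Weyl gives $\|PQP - \tilde P\| \geq \max_j \min(|\lambda_j - 0|,|\lambda_j - 1|)$ directly, comparing sorted spectra rather than diagonal matrix elements. Replacing your variational argument with this eigenvalue-perturbation bound repairs the proof; the rest goes through as you wrote it.
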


\begin{proof}[*lem:projectorcommutator]
Write $P$ and $Q$ in the block-diagonal basis of $P$, so that
\begin{equation}
P = \begin{pmatrix}
\Id & 0 \\ 0 & 0
\end{pmatrix}\ ,\quad Q = \begin{pmatrix}
A & B \\ B^\dagger & C
\end{pmatrix}\ ,
\end{equation}
for some matrices $A,B,C$. The requirement $Q \in \Proj(\mathcal{H})$ implies that $BB^\dagger = A(\Id - A)$. We have
\begin{equation}
PQP = \begin{pmatrix}
    A & 0 \\
    0 & 0
\end{pmatrix}\ .
\end{equation}
Let $\{\lambda_j\}_j$ be the eigenvalues of $A$; notice that these satisfy $0\leq \lambda_j \leq 1$, since $0 \leq PQP \leq Q$. Then $f(P,Q)$ is given by
\begin{equation}\label{eq:projectorcommeq1}
f(P,Q) = \max_j \big( \min\{|\lambda_j|,|1-\lambda_j|\} \big)\ .
\end{equation}
To see why \eqref{eq:projectorcommeq1} holds, note that the upper bound on $f$ follows by constructing $\tilde{P}$ to have the same eigenvectors as $PQP$, but with each eigenvalue replaced by either $0$ or $1$ depending on which is closer. The lower bound follows from \cref{lem:weyl}.

Now we can compute
\begin{equation}
-[P,Q]^2 = \begin{pmatrix}
    BB^\dagger & 0 \\
    0 & B^\dagger B
\end{pmatrix}\ ,
\end{equation}
hence
\begin{equation}\label{eq:projectorcommeq2}
\|[P,Q]\|^2 = \| - [P,Q]^2 \| = \|BB^\dagger \| = \| A ( \Id - A) \| = \max_j |\lambda_j| |1-\lambda_j| \ .
\end{equation}
Note that the maximising $j$ in \eqref{eq:projectorcommeq1} and \eqref{eq:projectorcommeq2} must be the same (the functions $\min\{|\lambda|,|1-\lambda|\}$ and $|\lambda||1-\lambda|$ are both maximised by the $\lambda_j$ closest to $1/2$), hence we can deduce
\begin{equation}
\|[P,Q]\|^2 = f(P,Q) \big( 1 - f(P,Q) \big)\ ,
\end{equation}
which gives the result.
\end{proof}

In order to obtain the correct unitary $U$ in the gadget definition, the proof of \cref{thm:thingsthatlooklikegadgetsaregadgets} requires constructing rotations between eigenspaces of different operators. The Davis-Kahan $\sin\theta$ theorem below provides a bound on the size of these rotations. This is also used in the proof of \cref{thm:lowenergygadgetcombination}.

\begin{lemma}[Davis-Kahan $\sin\theta$ theorem \cite{davis1969some}]\label{lem:sinthetatheorem}
Let $A,B \in \Herm(\mathcal{H})$, and take $P_A,P_B\in \Proj(\mathcal{H})$ projectors of equal rank which block-diagonalise $A$ and $B$ respectively, so that
\begin{equation}
A = P_A A P_A + P_A^\perp A P_A^\perp\ ,\quad B = P_B B P_B + P_B^\perp B P_B^\perp \ .
\end{equation}
Assume $\alpha,\beta\in \mathbb{R}$ and $\lambda_{\gap}$ are such that
\begin{equation}
\spec\big( A|_{P_A\mathcal{H}} \big) \subset [\alpha,\beta]\ ,\quad \spec\big( B |_{P_B^\perp \mathcal{H}} \big) \subset \mathbb{R} \setminus (\alpha - \lambda_{\gap},\beta + \lambda_{\gap} )\ .
\end{equation}
Then the direct rotation $U\in \U(\mathcal{H})$ from $P_A$ to $P_B$ satisfies
\begin{equation}
\| U - \Id \| \leq \frac{\sqrt{2}}{\lambda_{\gap}} \| (B-A)P_A \| \ .
\end{equation}
\end{lemma}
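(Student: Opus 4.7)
The plan is to split the bound into two ingredients: an operator form of the Davis--Kahan inequality $\|P_B^\perp P_A\| \leq \|(B-A)P_A\|/\lambda_{\gap}$, and an estimate $\|U - \Id\| \leq \sqrt{2}\,\|P_B^\perp P_A\|$ relating the direct rotation to the sine of the canonical angles between $P_A \mathcal{H}$ and $P_B \mathcal{H}$. Composing these gives exactly the desired factor of $\sqrt{2}/\lambda_{\gap}$.

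For the first ingredient, I would exploit the block-diagonal structure of $A$ and $B$ to derive a Sylvester equation. From the decompositions $A = P_A A P_A + P_A^\perp A P_A^\perp$ and $B = P_B B P_B + P_B^\perp B P_B^\perp$ one has $A P_A = P_A A P_A$ and $P_B^\perp B = P_B^\perp B P_B^\perp$, so a direct computation yields
\begin{equation*}
P_B^\perp (B - A) P_A \;=\; B'\, X - X\, A'\ ,
\end{equation*}
where $X := P_B^\perp P_A$ is viewed as an operator from $P_A \mathcal{H}$ into $P_B^\perp \mathcal{H}$, $A' := P_A A P_A|_{P_A \mathcal{H}}$, and $B' := P_B^\perp B P_B^\perp|_{P_B^\perp \mathcal{H}}$. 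Under the hypothesis $\spec(A') \subset [\alpha,\beta]$ and $\spec(B') \cap (\alpha - \lambda_{\gap},\beta+\lambda_{\gap}) = \emptyset$, the Sylvester operator $X \mapsto B' X - X A'$ is invertible with inverse of norm at most $1/\lambda_{\gap}$ (a standard fact, immediate by simultaneously diagonalising $A'$ and $B'$ and writing the equation entrywise: each entry divides by some $\beta'_i - \alpha'_j$ with $|\beta'_i - \alpha'_j| \geq \lambda_{\gap}$). Inverting and using $\|P_B^\perp(B-A)P_A\| \leq \|(B-A)P_A\|$ gives the claimed bound on $\|X\|$.

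For the second ingredient, I would invoke the canonical form $U = e^S$ of the direct rotation recalled in the paper, where $S$ is block-off-diagonal with respect to both $P_A$ and $P_B$. In an appropriate basis associated to the CS decomposition of the pair $(P_A, P_B)$, one has $S = i\,\mathrm{diag}(\theta_1,\dots,\theta_n)$ with $\theta_j \in (-\pi/2,\pi/2)$, and in the same basis $\|P_B^\perp P_A\| = \max_j |\sin\theta_j|$ (the sines of the canonical angles), while $\|U - \Id\| = \max_j 2|\sin(\theta_j/2)|$. The elementary identity $2\sin(\theta/2) = \sin\theta/\cos(\theta/2)$ together with $\cos(\theta/2) \geq 1/\sqrt{2}$ for $|\theta| < \pi/2$ then gives $\|U - \Id\| \leq \sqrt{2}\,\|P_B^\perp P_A\|$.

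Combining the two bounds yields the statement. The main technical step is the Sylvester-equation estimate, which is standard but requires careful bookkeeping of the block structure; the passage from $\|P_B^\perp P_A\|$ to $\|U - \Id\|$ follows directly from the canonical form of the direct rotation already summarised in the paper.
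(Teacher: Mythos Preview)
Your two-ingredient decomposition is exactly the paper's approach: the paper quotes the classical Davis--Kahan result from \cite{davis1969some} in the form $\lambda_{\gap}\,\|\sin\Theta_0\| \leq \|(B-A)P_A\|$ (which is equivalent to your first ingredient, since $\|\sin\Theta_0\| = \|P_B^\perp P_A\|$), and then converts $\|\sin\Theta_0\|$ into $\|U-\Id\|$ via the same identity $|1-e^{i\theta}| = 2|\sin(\theta/2)| \leq \sqrt{2}\,|\sin\theta|$ that you use for your second ingredient.

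One caution on your first ingredient: the parenthetical justification for the Sylvester bound is not quite right. Diagonalising $A'$ and $B'$ separately and writing the equation entrywise gives $X_{ij} = Y_{ij}/(\beta'_i - \alpha'_j)$, hence $|X_{ij}| \leq |Y_{ij}|/\lambda_{\gap}$; but this entrywise bound controls the Frobenius norm, not the operator norm. Passing to the operator norm amounts to bounding the Schur-multiplier norm of the Cauchy-type matrix $\big(1/(\beta'_i-\alpha'_j)\big)$, which is true under the interval-separation hypothesis but is precisely the nontrivial content of the Davis--Kahan $\sin\Theta$ theorem (and fails for merely disjoint spectra, where only a $\pi/2$ constant is available). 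The paper avoids this issue by citing \cite{davis1969some} directly rather than reproving it; if you want a self-contained argument, the cleanest route is the integral representation $X = \int_0^\infty e^{-t(B'-\beta)} Y\, e^{t(A'-\beta)}\,\diff t$ on each spectral component of $B'$.
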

\begin{proof}[*lem:sinthetatheorem]
The statement of Davis et al.\cite{davis1969some} is phrased in terms of a matrix $\Theta_0 = \diag(\theta_1,\theta_2,\dots,\theta_n)$, where the eigenvalues (possibly excluding some 1's) of $U$ are given by $e^{i\theta_j}$ and $\pi/2 \geq \theta_1 \geq \theta_2 \geq \dots \geq \theta_n$. Specifically, the authors give the following result:
\begin{equation}
\lambda_{\gap} \| \sin\Theta_0 \| \leq \| (B - A) P_A \|\ .
\end{equation}
To recover our restatement of the theorem, we use the identity $|1-e^{i\theta}| = |2\sin (\theta / 2)|$ to deduce that
\begin{equation}
\|U - \Id \| = |2\sin(\theta_1 / 2)| \leq \sqrt{2} |\sin\theta_1| = \sqrt{2} \|\sin\Theta_0\|\ .
\end{equation}
\end{proof}

The following lemmas from Bravyi et al.\cite{bravyi2008quantum} are used extensively in the rest of the gadget proofs. They provide bounds on a series expansion of $e^S H e^{-S}$, for $S$ small and anti-Hermitian, in particular showing that
\begin{equation}
e^S H e^{-S} = H + [S,H] + \frac{1}{2!} [S,[S,H]] + \frac{1}{3!} [S,[S,[S,H]]] + \dots\ . 
\end{equation}
\begin{lemma}[Bravyi et al., Lemma 1\cite{bravyi2008quantum}]\label{lem:bravyilemma1}\noproofref
Let $S$ be an anti-Hermitian operator. Define a superoperator $\ad_S$ such that $\ad_S(X) = [S,X]$, and let $\ad_S^k$ be the $k$-fold composition of $\ad_S$, with $\ad_S^0(X) = X$. For any operator $H$ define $r_0(H) = \| e^S H e^{-S} \| = \|H\|$, $r_1(H) = \|e^S H e^{-S} - H \|$, and
\begin{equation}
r_k(H) = \| e^S H e^{-S} - \sum_{p=0}^{k-1} \frac{1}{p!} \ad_S^p (H) \|\ ,\quad k\geq 2\ .
\end{equation}
Then for all $k\geq 0$ one has
\begin{equation}\label{eq:remainderbound}
r_k(H) \leq \frac{1}{k!} \| \ad_S^k(H)\|\ .
\end{equation}
\end{lemma}

\begin{lemma}[Bravyi et al., Lemma 2\cite{bravyi2008quantum}]\label{lem:bravyilemma2}\noproofref
Let $S=\sum_i S_i$ and $H = \sum_j H_j$ be any $O(1)$-local operators acting on $n$ qubits with interaction strengths $J_S$ and $J_H$ respectively (i.e. $\|S_i\| \leq J_S$ and $\|H_j\| \leq J_H$ for all $i$ and $j$). Let each qubit be acted on non-trivially by $O(1)$ terms in both $S$ and $H$. Then, for any $k=O(1)$,
\begin{equation}
\|\ad_S^k(H)\| = O(n J_S^k J_H)\ .
\end{equation}
\end{lemma}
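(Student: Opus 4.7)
The plan is to expand the nested commutator by multilinearity and exploit the locality of the $S_i$ and $H_j$ to show that only $O(1)$ index tuples contribute nontrivially for each $H_j$, after which a single application of the triangle inequality over $j$ gives the claimed $O(n)$ growth. Concretely, I would begin by writing
\[
\ad_S^k(H) = \sum_j \sum_{i_1,\dots,i_k} [S_{i_1},[S_{i_2},\dots,[S_{i_k},H_j]\dots]]\ ,
\]
and focus on bounding the inner sum $\ad_S^k(H_j)$ for a fixed $j$.

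The central observation is that such a nested commutator vanishes unless, at each level $l = k, k-1, \dots, 1$, the operator $S_{i_l}$ shares at least one site with the support of the partial commutator built from the inside. Inductively, the partial commutator $[S_{i_l},\dots,[S_{i_k},H_j]\dots]$ is supported on at most $k_H + (k-l+1) k_S = O(k)$ sites, where $k_H, k_S = O(1)$ bound the localities of the individual terms. Using the bounded-degree hypothesis (each qubit participates in at most $O(1)$ terms of $S$), the number of valid $i_l$ at each level is $O(k)$. Taking the product, the number of nonzero multi-indices $(i_1,\dots,i_k)$ is $O(k)^k$, a constant when $k = O(1)$. Bounding each surviving nested commutator by the standard inequality $\|[A,B]\| \leq 2\|A\|\|B\|$ gives $2^k J_S^k J_H$, and hence $\|\ad_S^k(H_j)\| = O(J_S^k J_H)$.

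Finally, since the bounded-degree hypothesis also implies that $H$ decomposes into $N = O(n)$ terms, the triangle inequality yields $\|\ad_S^k(H)\| \leq \sum_j \|\ad_S^k(H_j)\| = O(n J_S^k J_H)$, as claimed. The main obstacle is the combinatorial counting step: the naive bound $\|\ad_S^k(H)\| \leq 2^k \|S\|^k \|H\|$ scales as $n^{k+1}$ and is useless for the downstream applications (such as the gadget combination proofs invoking \cref{lem:bravyilemma1}), so the argument must crucially exploit the cancellation of commutators between spatially disjoint local terms in order to save $n^k$ factors. Keeping constants uniform as $k$ grows is tedious but direct under the $k = O(1)$ assumption.
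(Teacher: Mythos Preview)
The paper does not prove this lemma; it is quoted verbatim from \cite{bravyi2008quantum} with a \texttt{\textbackslash noproofref} tag, so there is no in-paper proof to compare against. Your argument is correct and is precisely the standard locality-counting proof from the original reference: expand by multilinearity, observe that each nested commutator survives only if every $S_{i_l}$ overlaps the growing $O(k)$-site support, use bounded degree to get $O(k)$ choices per level and hence $O(k)^k = O(1)$ surviving tuples per $H_j$, then sum over the $O(n)$ terms $H_j$.
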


These lemmas provide us with the necessary tools to prove \cref{thm:thingsthatlooklikegadgetsaregadgets}. 

\begin{proof}[*thm:thingsthatlooklikegadgetsaregadgets]
The idea of the proof is as follows.

By the gadget property, we have
\begin{equation}\label{eq:gadgetnaturaldefn}
\| P^\prime (H^\prime + H_\text{else} \otimes \Id) P^\prime - \tilde{U}_{H_\text{else}} \big( (H + H_\text{else})\otimes \tilde{P} \big) \tilde{U}_{H_\text{else}}^\dagger \| \leq \epsilon + \zeta \|H_\text{else}\|\ ,
\end{equation}
for any $H_\text{else} \in \Herm(\mathcal{H})$.
\begin{enumerate}
    \item First we consider \eqref{eq:gadgetnaturaldefn} the case where $H_\text{else}$ dominates the expression, and argue that that $P^\prime$ is ``almost'' a projector $\Id\otimes P$ on $\mathcal{A}$.
    \item Next, by setting $H_\text{else} = 0$ in \eqref{eq:gadgetnaturaldefn}, we observe that $P^\prime H^\prime P^\prime$ has approximately the same spectrum as $H\otimes P$.
    \item By setting $H_\text{else} = -H$ in \eqref{eq:gadgetnaturaldefn}, we argue that $P^\prime H^\prime P^\prime \approx H \otimes P$.
    \item Using steps 2 and 3, and \cref{lem:sinthetatheorem}, we construct a rotation $U$ such that $\|P^\prime H^\prime P^\prime - U(H\otimes P)U^\dagger\| \leq \epsilon$, by inductively rotating each eigenspace.
\end{enumerate}

Here we start step 1. Assume that $\|H_\text{else}\| = 1$, and let $\lambda > 0$. Then putting $H_\text{else} \mapsto \lambda H_\text{else}$ in \eqref{eq:gadgetnaturaldefn} yields
\begin{equation}\label{eq:gadgetwithlargelambda}
\| P^\prime (H_\text{else} \otimes \Id) P^\prime - \Tilde{U}_{\lambda H_\text{else}} (H_\text{else} \otimes \Tilde{P}) \Tilde{U}^\dagger_{\lambda H_\text{else}} \| \leq \zeta + O(\lambda^{-1})\ ,
\end{equation}
for large $\lambda$, which in particular, by \cref{lem:weyl}, implies that $P^\prime (H_\text{else} \otimes \Id) P^\prime$ has the same spectrum as $H_\text{else} \otimes \Tilde{P}$, up to error $\zeta$. (That is, the $k$th smallest eigenvalue, counted with multiplicity, of $P^\prime (H_\text{else} \otimes \Id ) P^\prime$, differs from that of $H_\text{else}\otimes \tilde{P}$ by an absolute error of at most $\zeta$.)

In particular, if $H_\text{else} = Q \in \text{Proj}(\mathcal{H})$ is a projection, then so is $P^\prime (Q \otimes \Id) P^\prime$ (up to spectral error $\zeta$). Hence by \cref{lem:projectorcommutator}, we have
\begin{equation}
\|[P^\prime, Q\otimes \Id]\| \leq \sqrt{\zeta}\ .
\end{equation}
Without loss of generality we may assume that $\dim \mathcal{H} = K = O(1)$, by disregarding all the systems on which $H$, $H^\prime$, and $P^\prime$ do not act, using the assumptions of the theorem. So by writing $H_\text{else}$ as a linear combination of at most $K$ projections, we have
\begin{equation}
\|[P^\prime,H_\text{else}\otimes \Id]\| \leq \pi K\sqrt{\zeta}\ ,
\end{equation}
for all $H_\text{else} \in \text{Herm}(\mathcal{H})$ with $\|H_\text{else}\| \leq \pi$.

Therefore, for any $V \in \text{U}(\mathcal{H})$, we can write $V = e^{iH_\text{else}}$ for some $H_\text{else}$ as above, and then by \cref{lem:bravyilemma1},
\begin{equation}
\| (V\otimes \Id) P^\prime (V^\dagger\otimes \Id) - P^\prime \| \leq \pi K \sqrt{\zeta}\ .
\end{equation}
Integrating over all $V \in \text{U}(\mathcal{H})$ using the Haar measure (normalised with $\int \diff V = 1$) yields
\begin{equation}
\bigg\| \int \diff V (V \otimes \Id) P^\prime (V^\dagger \otimes \Id) - P^\prime \bigg\| \leq \pi K \sqrt{\zeta}\ ,
\end{equation}
but
\begin{equation}
\int \diff V (V \otimes \Id) P^\prime (V^\dagger \otimes \Id) = K^{-1}\Id_\mathcal{H} \otimes \tr_\mathcal{H}[P^\prime]\ ,
\end{equation}
hence
\begin{equation}
\| P^\prime - \Id_\mathcal{H} \otimes K^{-1}\tr_\mathcal{H}[P^\prime] \| \leq \pi K \sqrt{\zeta}\ .
\end{equation}
In particular, by \cref{lem:weyl} this implies that $K^{-1} \tr_\mathcal{H}[P^\prime]$ has spectrum in $[-\pi K \sqrt{\zeta},\pi K \sqrt{\zeta}]\cup[1-\pi K\sqrt{\zeta},1+\pi K\sqrt{\zeta}]$ (where for sufficiently small $\zeta$ there will be a gap). We can therefore construct a projector $P \in \text{Proj}(\mathcal{A})$ by rounding the eigenvalues of $K^{-1}\tr_{\mathcal{H}}[P^\prime]$ to the nearest integer, which satisfies
\begin{equation}
\|P^\prime - \Id \otimes P \| \leq 2\pi K\sqrt{\zeta}\ .
\end{equation}
Now we can apply \cref{lem:sinthetatheorem}, using $A = P_A = P^\prime$, $B = P_B = \Id\otimes P$, and $\lambda_{\gap} = 1$. Then for $\zeta$ small enough, the direct rotation $W$ from $\Id\otimes P$ to $P^\prime$ satisfies
\begin{equation}\label{eq:gadgetdefproofPdefinition}
P^\prime = W (\Id \otimes P ) W^\dagger\ ,\quad \|W - \Id\| \leq 2\sqrt{2} \pi K \sqrt{\zeta}\ ,
\end{equation}
which completes step 1. Without loss of generality we can now adjust the $\tilde{U}_{H_\text{else}}$ so that $\Tilde{P} = P$, since \eqref{eq:gadgetwithlargelambda} holds for all $H_{\text{else}}$ and therefore $\rank P^\prime = \rank \Id\otimes P = \rank \Id \otimes \tilde{P}$ for small enough $\zeta$. Hence for all $H_\text{else} \in \text{Herm}(\mathcal{H})$ we have
\begin{equation}
\|P^\prime (H^\prime + H_\text{else}\otimes \Id) P^\prime - \tilde{U}_{H_\text{else}} \big( (H + H_\text{else}) \otimes P \big) \tilde{U}^\dagger_{H_\text{else}} \| \leq \epsilon + \zeta \|H_\text{else}\|\ .
\end{equation}
With the above expression, we can begin steps 2 and 3. Putting $H_\text{else} = 0$, this becomes
\begin{equation}\label{eq:gadgetyspectralbound}
    \|P^\prime H^\prime P^\prime - \tilde{U}(0) (H\otimes P) \tilde{U}^\dagger(0) \| \leq \epsilon\ ,
\end{equation}
and putting $H_\text{else} = -H$ we have
\begin{equation}\label{eq:boundonfirstthing}
    \|P^\prime H^\prime P^\prime - P^\prime (H\otimes \Id) P^\prime \| \leq \epsilon + \zeta \|H\|\ .
\end{equation}
Moreover, by \eqref{eq:gadgetdefproofPdefinition} we can bound
\begin{align}
    \| P^\prime (H\otimes \Id) P^\prime - H\otimes P \| &= \| W(\Id\otimes P) W^\dagger (H \otimes \Id) W (\Id\otimes P) W^\dagger - H\otimes P \| \notag \\
    &= \| (W - \Id) (\Id \otimes P) W^\dagger (H \otimes \Id) W (\Id \otimes P) W^\dagger \notag \\
    &\quad + (\Id\otimes P) (W^\dagger - \Id) (H \otimes \Id) W (\Id \otimes P) W^\dagger \notag \\
    &\quad + (H\otimes P) (W - \Id) (\Id \otimes P) W^\dagger \notag \\
    &\quad + (H \otimes P) (W^\dagger - \Id) \| \notag \\
    &\leq 4 \|H\|\cdot \|W - \Id \| \notag \\
    &\leq 8\sqrt{2}\pi K \sqrt{\zeta} \|H\|\ .\label{eq:boundonsomethingorother}
\end{align}
Combining \eqref{eq:boundonfirstthing} and \eqref{eq:boundonsomethingorother}, we complete step 3:
\begin{equation}\label{eq:gadgetyabsolutebound}
    \| P^\prime H^\prime P^\prime - H\otimes P \| \leq \epsilon + (\zeta + 8\sqrt{2}\pi K \sqrt{\zeta} )\|H\| := \delta\ .
\end{equation}

Now we begin step 4. Let $H^{(0)} = H \otimes P$. Write the eigenvalues of this operator as $\{\lambda_k\}_{k=1}^{M+1}$, where $\lambda_1 = 0$ and $0 < \lambda_2< \dots < \lambda_{M+1}$ are the $M$ distinct eigenvalues of $H$. If $H$ has any non-positive eigenvalues, then we shift both $H$ and $H^\prime$ by a $O(1)$ factor of the identity for the duration of the proof; notice that the gadget property then still holds up to a redefined $\epsilon$ which does not affect the conclusions of this theorem. To see this, note that for $\mu \in \mathbb{R}$,
\begin{align}
    \|P^\prime (H^\prime + \mu \Id + H_{\text{else}}\otimes \Id) P^\prime - \tilde{U}_{H_{\text{else}}} \big( (H + \mu \Id + H_{\text{else}} ) \otimes \tilde{P} \big) \tilde{U}_{H_{\text{else}}}^\dagger \| &\leq \epsilon + \zeta \| \mu \Id + H_{\text{else}} \| \notag\\
    &\leq (\epsilon + \mu \zeta) + \zeta \|H_{\text{else}}\|\ ,
\end{align}
using \eqref{eq:gadgetnaturaldefn} and absorbing $\mu \Id$ into $H_{\text{else}}$. Hence $H^\prime \mapsto H^\prime + \mu \Id$ and $H \mapsto H + \mu \Id$ also satisfy the assumptions of the theorem, up to replacing $\epsilon \mapsto \epsilon + \mu \zeta$. Moreover, if we can show that $H^\prime + \mu\Id$ is an $(\epsilon,\eta)$-gadget for $H + \mu \Id$, then applying the gadget definition shows that
\begin{align}
    \| P^\prime H^\prime P^\prime - U(H \otimes P) U^\dagger \| &= \|P^\prime (H^\prime + \mu \Id) P^\prime - U\big( (H + \mu \Id) \otimes P \big) U^\dagger \| \notag\\
    &\leq \epsilon\ ,
\end{align}
hence $H^\prime$ is an $(\epsilon,\eta)$-gadget for $H$.

We define
\begin{equation}
\lambda_{\gap} = \min_{j\neq k} | \lambda_j - \lambda_k |\ ,
\end{equation}
which is $O(1)$ since $H$ acts on $O(1)$ sites, and does not scale with $\eta$ or $\epsilon$. Let $\mathcal{P}_k$ be the eigenspace of $H^{(0)}$ corresponding to $\lambda_k$.

We also diagonalise $P^\prime H^\prime P^\prime$ --- see \cref{fig:spectrum}. By \eqref{eq:gadgetyspectralbound} and Weyl's inequality we can write the eigenvalues as $\{\mu_k^{(i_k)}\}$ such that
\begin{equation}
| \mu_k^{(i_k)} - \lambda_k | \leq \epsilon\ ,\quad\text{for all $i_k$, and for all $k$.}
\end{equation}
Let $\mathcal{P}_k^\prime$ be the eigenspace of $P^\prime H^\prime P^\prime$ corresponding to the eigenvalues $\{\mu_{k}^{(i_k)}\}_{i_k}$, which by \eqref{eq:gadgetyspectralbound} satisfies $\dim \mathcal{P}_k^\prime = \dim \mathcal{P}_k$ for $\epsilon$ sufficiently small. Note that for $j \neq k$ we have
\begin{equation}
|\mu_j^{(i_j)} - \lambda_k | \geq \lambda_{\gap} - \epsilon\ .
\end{equation}
We aim to construct a unitary operator which rotates all of the $\mathcal{P}_i$ onto the $\mathcal{P}_i^\prime$ eigenspaces. We do this by induction, defining $W^{(k)}$ to be a unitary operator which performs these rotations for $i=1,\dots,k$. Moreover the define $H^{(k)} = W^{(k)} H^{(0)} (W^{(k)})^\dagger$ to be the version of $H\otimes P$ whose first $k$ eigenspaces have been rotated in this way. We will use bounds on the direct rotation provided by \cref{lem:sinthetatheorem}; we will see that the direct rotations are well-defined for sufficiently small $\zeta$ and $\epsilon$.

The inductive construction we use will bound the rotations by $\|W^{(k)} - \Id \| \leq \omega_k$, where
\begin{equation}
\omega_k = \frac{\delta}{2\|H\|} \bigg( \bigg[1+\frac{2\sqrt{2}\|H\|}{\lambda_{\gap} - \epsilon}\bigg]^k - 1 \bigg)\ .
\end{equation}
\begin{figure}[H]
    \centering
    \begin{tikzpicture}
\def\titleheight{5};
\def\levelthree{4};
\def\leveltwo{2};
\def\levelone{1};
\def\bottom{0};

\def\eps{0.3};

\def\erroraa{-0.1};
\def\errorab{-0.2};
\def\errorac{0.25};

\def\errorba{-0.05};
\def\errorbb{0.1};
\def\errorbc{0.15};

\def\errorca{-0.3};
\def\errorcb{0.1};
\def\errorcc{-0.1};

\def\innerwidth{1};
\def\outerwidth{2.1};

\draw[thick,->] (0,\bottom) -- (0,\titleheight) node[above]{Energy};

\draw ({(\innerwidth+\outerwidth)/2},\titleheight) node {$P^\prime H^\prime P^\prime$};
\draw ({-(\innerwidth+\outerwidth)/2},\titleheight) node {$H\otimes P$};

\foreach\y in {\levelone,\leveltwo,\levelthree} {
\draw[thick] (-\outerwidth,\y) -- (-\innerwidth,\y);
\draw[dashed] (-\innerwidth,\y) -- (\innerwidth,\y);
\foreach\d in {\eps,-\eps} {
\draw[dashed] (0,{\y + \d}) -- (\outerwidth,{\y + \d});
}
\draw[thick, decorate,decoration={calligraphic brace}] (\outerwidth+0.05,{\y + \eps}) -- (\outerwidth+0.05,{\y - \eps});
}

\draw (-\outerwidth,\levelone) node[left]{$\mathcal{P}_1$};
\draw (-\outerwidth,\leveltwo) node[left]{$\mathcal{P}_2$};
\draw (-\outerwidth,\levelthree) node[left]{$\mathcal{P}_3$};

\draw (\outerwidth,\levelone) node[right]{$\mathcal{P}_1^\prime$};
\draw (\outerwidth,\leveltwo) node[right]{$\mathcal{P}_2^\prime$};
\draw (\outerwidth,\levelthree) node[right]{$\mathcal{P}_3^\prime$};

\draw[thick, <->] ({-(2*\outerwidth + \innerwidth)/3},\levelone) -- ({-(2*\outerwidth + \innerwidth)/3},\leveltwo);
\draw ({-(\outerwidth + 2*\innerwidth)/3},{(\levelone + \leveltwo)/2}) node {$\lambda_\text{gap}$};

\foreach\errora in {\erroraa,\errorab,\errorac} {
\draw[thick] (\innerwidth,{\levelone + \errora}) -- (\outerwidth,{\levelone+\errora});
}
\foreach\errorb in {\errorba,\errorbb,\errorbc} {
\draw[thick] (\innerwidth,{\leveltwo + \errorb}) -- (\outerwidth,{\leveltwo+\errorb});
}
\foreach\errorc in {\errorca,\errorcb,\errorcc} {
\draw[thick] (\innerwidth,{\levelthree + \errorc}) -- (\outerwidth,{\levelthree+\errorc});
}

\draw ({\innerwidth / 3},{\levelone + 0.5*\eps}) node{$\epsilon$};
\draw[thick,<->] ({2*\innerwidth / 3},\levelone) -- ({2*\innerwidth / 3},{\levelone+\eps});

\end{tikzpicture}
    \caption{The degenerate eigenvalues of $H\otimes P$ are approximated by eigenvalues of $P^\prime H^\prime P^\prime$ up to error $\epsilon$, corresponding to eigenspaces $\mathcal{P}_i^\prime$. We construct a unitary operator which rotates the eigenspaces $\mathcal{P}_i$ onto the $\mathcal{P}_i^\prime$}
    \label{fig:spectrum}
\end{figure}
We now inductively define the $H^{(k)} \in \text{Herm}(\mathcal{H}\otimes \mathcal{A})$ and $W^{(k)} \in \U(\mathcal{H}\otimes \mathcal{A})$ as described above. For the base case, we see that clearly $H^{(0)}$ satisfies the conditions with the trivial $W^{(0)} = \Id$.

For the inductive step, suppose we are given $H^{(k-1)}$ and $W^{(k-1)}$. Notice that
\begin{equation}\label{eq:boundwithadeltainit}
\|P^\prime H^\prime P^\prime - H^{(k-1)} \| \leq \delta + 2\omega_{k-1} \|H\|\ ,
\end{equation}
using \eqref{eq:gadgetyabsolutebound} and the bound $\|W^{(k-1)} - \Id\|\leq \omega_{k-1}$. Hence, applying \cref{lem:sinthetatheorem} to $(P^\prime H^\prime P^\prime)|_{\oplus_{j\geq k} \mathcal{P}_j^\prime}$, and $H^{(k-1)} |_{\oplus_{j\geq k} \mathcal{P}_j^\prime}$ (using the fact that $H^{(k-1)}$ is constructed to leave $\oplus_{j\leq k-1} \mathcal{P}_j^\prime$, and hence also $\oplus_{j\geq k} \mathcal{P}_j^\prime$, invariant), we may construct the direct rotation $V^{(k)}$ on $\oplus_{j\geq k} \mathcal{P}_j^\prime$ which maps from $W^{(k-1)} \mathcal{P}_k (W^{(k-1)})^\dagger$ to $\mathcal{P}_k^\prime$, and which satisfies
\begin{align}
    \|V^{(k)} - \Id \| &\leq \frac{\sqrt{2}}{\lambda_{\gap} - \epsilon} \| (P^\prime H^\prime P^\prime) |_{\oplus_{j\geq k} \mathcal{P}_j^\prime} - (H^{(k-1)})|_{\oplus_{j\geq k} \mathcal{P}_j^\prime} \| \notag\\
    &\leq \frac{\sqrt{2}}{\lambda_{\gap} - \epsilon} (\delta + 2\omega_{k-1} \|H\| )\ .
\end{align}
For the above step, it is necessary to verify that the direct rotation is well-defined. If it were not, then there would be a nonzero vector $\ket{\psi} \in W^{(k-1)} \mathcal{P} (W^{(k-1)})^\dagger \cap (\mathcal{P}_k^\prime)^\perp$. Then, by the definition of these subspaces,
\begin{equation}
\bra{\psi} P^\prime H^\prime P^\prime \ket{\psi} \geq \lambda_{k+1} - \epsilon \geq \lambda_k + (\lambda_\text{gap} - \epsilon)\ ,\quad \bra{\psi} H^{(k-1)} \ket{\psi} = \lambda_k\ ,
\end{equation}
which would imply that
\begin{equation}
\| P^\prime H^\prime P^\prime - H^{(k-1)} \| \geq \lambda_\text{gap} - \epsilon\ .
\end{equation}
By \eqref{eq:boundwithadeltainit}, this is prohibited for sufficiently small $\zeta$ and $\epsilon$ --- so we can safely assume that the direct rotation is well-defined.

Now we can let
\begin{equation}
W^{(k)} = (\Id_{\oplus_{j < k} \mathcal{P}_j^\prime} \oplus V^{(k)}) W^{(k-1)}\ ,
\end{equation}
and
\begin{equation}
H^{(k)} = W^{(k)} H^{(0)} (W^{(k)})^\dagger\ ,
\end{equation}
which satisfies
\begin{equation}
\|W^{(k)} - \Id \| \leq \omega_{k-1} + \frac{\sqrt{2}}{\lambda_{\gap} - \epsilon} (\delta + 2\omega_{k-1} \|H\|) = \omega_k\ .
\end{equation}
After $M+1$ inductive steps, we have constructed the operator
\begin{equation}
H^{(M+1)} = W^{(M+1)} (H \otimes P) (W^{(M+1)})^\dagger\ ,
\end{equation}
whose $\lambda_k$-eigenspace is $\mathcal{P}_k^\prime$ for all $k$. Hence
\begin{equation}
\| P^\prime H^\prime P^\prime -  W^{(M+1)} (H\otimes P) (W^{(M+1)})^\dagger \| \leq \epsilon\ .
\end{equation}
Moreover, since $H$ has no zero eigenvalues (since otherwise we shifted by a factor of the identity), we are guaranteed that the null space of $H\otimes P$ is exactly that of $(\Id\otimes P)$, and $\rank (H\otimes P) = \rank(\Id\otimes P)$. By \eqref{eq:gadgetdefproofPdefinition} and \eqref{eq:gadgetyspectralbound} we know that $\rank (P^\prime) = \rank(\Id\otimes P)$ and $\rank(P^\prime H^\prime P^\prime) = \rank(H\otimes P)$ for sufficiently small $\zeta$ and $\epsilon$. So
\begin{equation}
\rank(P^\prime) = \rank(\Id\otimes P) = \rank(H\otimes P) = \rank(P^\prime H^\prime P^\prime)\ ,
\end{equation}
and the null space of $P^\prime H^\prime P^\prime$ is exactly that of $P^\prime$. Hence by construction of $W^{(M+1)}$,
\begin{equation}
P^\prime = W^{(M+1)} (\Id\otimes P) (W^{(M+1)})^\dagger\ .
\end{equation}
We have therefore shown that $(H^\prime,\mathcal{A})$ is a gadget as in our definition, using $U = W^{(M+1)}$, with accuracy $\epsilon$ (possibly with an additional $O(\zeta)$ if shifting of $H^\prime$ was necessary earlier in the proof) and $\eta = O(\epsilon) + O(\sqrt{\zeta})$ given explicitly by
\begin{equation}
\eta = \omega_{M+1} = \frac{1}{2\|H\|} \bigg[\bigg(\frac{2\sqrt{2}}{\lambda_{\gap} - \epsilon} \|H\| + 1\bigg)^M -1 \bigg] \big[ \epsilon + (\zeta + 8\sqrt{2}\pi K \sqrt{\zeta} ) \|H\| \big]\ .
\end{equation}
\end{proof}
\section{Gadget combination}\label{sec:supnot4}

In this section we prove the various gadget combination results. Below is a restatement of the general setup for the section.

\subsection{Combination of general gadgets}

Here we introduce some preparatory lemmas before proving \cref{thm:gadgetcombination}.

The first lemma allows us to immediately reduce gadgets to the case that the unitary $U$ is a direct rotation, defined in \cref{def:directrot}. This allows us to write $U = e^S$ for $S$ the generator of the direct rotation --- the off-diagonal properties of $S$ will simplify calculations considerably.

\begin{lemma}\label{lem:directrotation}
Suppose $(H^\prime,\mathcal{A})$ is a $(\eta,\epsilon)$-gadget for $H$, where $\eta < \sqrt{2}$. Let $\Tilde{\epsilon} = \epsilon + 4\eta \|H\|$.

Then $(H^\prime,\mathcal{A})$ is also a $(\eta,\Tilde{\epsilon})$-gadget for $H$, where the unitary $U$ in the definition can be assumed to be the direct rotation $W$ \cite{bravyi2011schrieffer} between the subspaces defined by $(\Id\otimes P)$ and $P^\prime$.
\end{lemma}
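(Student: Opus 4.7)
The strategy is to replace the unitary $U$ of the hypothesised gadget with the direct rotation $W$ between the subspaces defined by $Q := \Id\otimes P$ and $P' = UQU^\dagger$. Since both $W$ and $U$ send $Q$ to $P'$, the two unitaries differ only by a factor $V := W^\dagger U$ which commutes with $Q$. The error introduced by the replacement is then entirely controlled by how far $V$ is from the identity, which is in turn bounded by $\eta$.

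First, I would verify that the direct rotation $W$ is well-defined. Since $\|P' - Q\| = \|UQU^\dagger - Q\|$ can be bounded in terms of $\|U - \Id\| \leq \eta$, the hypothesis $\eta < \sqrt{2}$ secures $\|P' - Q\| < 1$, which is the standard criterion for the existence of the direct rotation. I would then invoke the canonical minimality property of $W$: its generator $S$ has eigenvalues equal to the canonical angles between $Q$ and $P'$, so among all unitaries carrying $Q$ to $P'$, $W$ is the one closest to the identity. In particular, $\|W - \Id\| \leq \|U - \Id\| \leq \eta$, so the first condition of \cref{def:gadgetdefinition} is preserved when passing from $U$ to $W$.

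Next, I would set $V := W^\dagger U$ and observe that $V$ is unitary and commutes with $Q$, since $V Q V^\dagger = W^\dagger U Q U^\dagger W = W^\dagger P' W = Q$. The triangle inequality gives $\|V - \Id\| \leq \|W - \Id\| + \|U - W\| \leq 2\eta$. Writing $U = WV$, and using that $V$ commutes with $\Id \otimes P$ and that $H\otimes\Id$ commutes with $\Id\otimes P$, a direct computation gives
\[
U(H\otimes P)U^\dagger - W(H\otimes P)W^\dagger = W\left[V(H\otimes\Id)V^\dagger - (H\otimes\Id)\right](\Id\otimes P)W^\dagger,
\]
whose operator norm is at most $2\|V - \Id\| \cdot \|H\| \leq 4\eta\|H\|$. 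Combining this with the hypothesis $\|P'H'P' - U(H\otimes P)U^\dagger\| \leq \epsilon$ via the triangle inequality yields
\[
\|P'H'P' - W(H\otimes P)W^\dagger\| \leq \epsilon + 4\eta\|H\| = \tilde\epsilon,
\]
as required.

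The main obstacle is the minimality property $\|W - \Id\| \leq \|U - \Id\|$ in the first step. While intuitively natural --- the direct rotation implements exactly the canonical angles between the two subspaces, whereas any other unitary can only add further rotation within the blocks --- this needs to be justified carefully from the characterisation of $W$ via its generator, and the tightness of the associated bound on $\|P'-Q\|$ determines whether $\eta < \sqrt{2}$ is the sharp hypothesis.
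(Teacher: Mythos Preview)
Your proposal is correct and follows essentially the same approach as the paper: both invoke the minimality property of the direct rotation (the paper cites Davis--Kahan for this) to get $\|W-\Id\|\leq\eta$, and both then bound $\|U(H\otimes P)U^\dagger - W(H\otimes P)W^\dagger\|$ by $4\eta\|H\|$. The only cosmetic difference is that the paper triangulates directly through $H\otimes P$, obtaining $\|U(H\otimes P)U^\dagger - H\otimes P\| + \|W(H\otimes P)W^\dagger - H\otimes P\|\leq 2\eta\|H\|+2\eta\|H\|$, whereas you factor $U=WV$ and exploit that $V$ commutes with $\Id\otimes P$; your route is slightly more structural but lands at the identical bound. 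Your discussion of the well-definedness of $W$ under the hypothesis $\eta<\sqrt{2}$ is more explicit than the paper's, which leaves this implicit.
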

\begin{proof}[*lem:directrotation]
By the gadget definition we have $U$ such that
\begin{equation}
P^\prime = U(\Id\otimes P) U^\dagger\ ,\quad \|U - \Id\| \leq \eta\ .
\end{equation}
As shown by Davis et al.\cite{davis1970rotation}, the direct rotation $W$ between $(\Id\otimes P)$ and $P^\prime$ minimises $\|W - \Id\|$ subject to the first equality above, hence we have
\begin{equation}
P^\prime = W (\Id\otimes P) W^\dagger\ ,\quad \|W - \Id \| \leq \eta\ .
\end{equation}
So
\begin{align}
    \|P^\prime H^\prime P^\prime - W(H\otimes P)W^\dagger \| &\leq \epsilon + \|U(H\otimes P)U^\dagger - W(H\otimes P)W^\dagger \| \notag\\
    &\leq \epsilon + \| U (H\otimes P) U^\dagger - H\otimes P \| \notag\\
    &\quad+ \|W (H\otimes P) W^\dagger - H\otimes P \| \notag\\
    &\leq \epsilon + 4\eta \|H\|\ .
\end{align}
\end{proof}

A gadget $H^\prime$ for $H$ has $\|P^\prime H^\prime P^\prime \| \leq \|H\| + \epsilon$ by definition, however outside the span of $P^\prime$ there are no bounds on $H^\prime$. For $\eta$ small, $P^\prime$ will be close to the projector $\Id\otimes P$. The following lemma provides a bound for $H^\prime$ when instead restricted to the span of $\Id\otimes P$.

\begin{lemma}\label{lem:gadgetlowenergypart}
Suppose $(H^\prime,\mathcal{A})$ is a $(\eta,\epsilon)$-gadget for $H$, with $U$, $P^\prime$, and $P$ as in the definition, and where $U$ is the direct rotation between $(\Id\otimes P)$ and $P^\prime$. Assume $\|H^\prime\| \leq J^\prime$, and 
\begin{equation}
\|(\Id\otimes P) H^\prime (\Id\otimes P^\perp) \| \leq J_O^\prime\ .
\end{equation} 
Then
\begin{equation}
\| (\Id\otimes P) H^\prime (\Id\otimes P) \| \leq \|H\| + O(\epsilon + \eta J_O^\prime + \eta^2 J^\prime) \ .
\end{equation}
\end{lemma}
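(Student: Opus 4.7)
The plan is to pull the gadget encoding condition back through the direct rotation so that everything happens on the ``nice'' subspace $\Pi := \Id\otimes P$, and then use the off-diagonal structure of the generator of the direct rotation to transfer the bound to $\Pi H' \Pi$ with only a small correction.

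Write $U = e^S$, where by \cite{bravyi2011schrieffer} and \eqref{eq:directrotgeneratornormbound} the generator $S$ is anti-Hermitian, satisfies $\|S\| = O(\eta)$, and is off-diagonal with respect to $\Pi$ in the sense that $\Pi S \Pi = \Pi^\perp S \Pi^\perp = 0$, so that $S = \Pi S \Pi^\perp + \Pi^\perp S \Pi$. Set $\tilde{H} := U^\dagger H' U$. Since $U \Pi U^\dagger = P'$, we have $P' H' P' = U (\Pi \tilde{H} \Pi) U^\dagger$, so the gadget bound $\|P' H' P' - U(H\otimes P)U^\dagger \| \leq \epsilon$ becomes, after conjugation by $U^\dagger$,
$$
\| \Pi \tilde{H} \Pi - H\otimes P \| \leq \epsilon\ ,
$$
and hence $\|\Pi \tilde{H} \Pi\| \leq \|H\| + \epsilon$ (using $H\otimes P = \Pi(H\otimes \Id)\Pi$).

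Next, I would Taylor-expand $\tilde H = e^{-S} H' e^S$ using \cref{lem:bravyilemma1}: truncating at second order gives $\tilde H = H' - [S,H'] + R$ with $\|R\| \leq \tfrac{1}{2} \|\mathrm{ad}_S^2(H')\| \leq 2\|S\|^2 \|H'\| = O(\eta^2 J')$. Sandwiching by $\Pi$, it therefore suffices to control $\Pi [S, H'] \Pi$. Using $\Pi S = \Pi S \Pi^\perp$ and $S\Pi = \Pi^\perp S \Pi$ (consequences of $\Pi S \Pi = 0$), one computes
$$
\Pi [S,H'] \Pi = \Pi S \Pi^\perp H' \Pi - \Pi H' \Pi^\perp S \Pi\ ,
$$
and the two summands are bounded in operator norm by $\|S\|\cdot\|\Pi^\perp H' \Pi\|$ and $\|\Pi H' \Pi^\perp\|\cdot\|S\|$ respectively, each of which is $O(\eta J_O')$ by hypothesis.

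Assembling the pieces:
$$
\|\Pi H' \Pi\| \leq \|\Pi \tilde{H} \Pi\| + \|\Pi[S,H']\Pi\| + \|\Pi R \Pi\| \leq \|H\| + \epsilon + O(\eta J_O') + O(\eta^2 J')\ ,
$$
which is the desired bound. The main (mild) obstacle is the careful bookkeeping of the off-diagonal decomposition of $S$ in the commutator term: one might naively bound $\|\Pi[S,H']\Pi\|$ using $\|H'\|\cdot\|S\|$, which would only give $O(\eta J')$ and destroy the point of the lemma; the improvement to $O(\eta J_O')$ is precisely what the off-diagonal structure of the direct rotation affords.
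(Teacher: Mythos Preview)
Your proposal is correct and follows essentially the same argument as the paper: write $U=e^S$ with $S$ the off-diagonal generator of the direct rotation, use the gadget bound to control $\|\Pi e^{-S}H'e^S\Pi\|\leq\|H\|+\epsilon$, expand $e^{-S}H'e^S$ to second order via \cref{lem:bravyilemma1}, and exploit $\Pi S\Pi=0$ to bound $\Pi[S,H']\Pi$ by $O(\eta J_O')$ rather than $O(\eta J')$. Your explicit computation of $\Pi[S,H']\Pi=\Pi S\Pi^\perp H'\Pi-\Pi H'\Pi^\perp S\Pi$ is in fact slightly more detailed than the paper's presentation of the same step.
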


\begin{proof}[*lem:gadgetlowenergypart]
Let $S$ be the generator of the direct rotation $U$, so that $\|S\| = O(\eta)$ (see \eqref{eq:directrotgeneratornormbound}) satisfies
\begin{equation}
P^\prime = e^{S} (\Id\otimes P) e^{-S}\ .
\end{equation}
Hence
\begin{equation}
\| (\Id\otimes P) H^\prime (\Id\otimes P) \| \leq \| (\Id\otimes P) (e^{-S} H^\prime e^S - H^\prime ) (\Id\otimes P) \| + \| P^\prime H^\prime P^\prime \| \ .
\end{equation}
By the gadget definition, we have $\|P^\prime H^\prime P^\prime \| \leq \|H\| + \epsilon$, and using \cref{lem:bravyilemma1} we can bound the first term as
\begin{equation}
\| (\Id\otimes P) (e^{-S} H^\prime e^S - H^\prime ) (\Id\otimes P) \| \leq \| (\Id\otimes P) [S,H^\prime] (\Id\otimes P) \| + O(\eta^2 J^\prime)\ .
\end{equation}
Furthermore, since $S$ is off-diagonal with respect to $(\Id\otimes P)$ we have
\begin{equation}
\| (\Id\otimes P) [S,H^\prime] (\Id\otimes P) \| \leq O(\eta J_O^\prime)\ ,
\end{equation}
completing the proof.
\end{proof}

We now have the necessary tools to prove the gadget combination result \cref{thm:gadgetcombination}. We are provided with the gadgets $(H_i^\prime,\mathcal{A}_i)$ for each of the $H_i$ (corresponding to $U_i,P_i,P_i^\prime$ as in the gadget definition), which immediately suggest using $P = \otimes_i P_i$ for the gadget $H^\prime = \sum_i H_i^\prime$ for $H = \sum_i H_i$. It is not immediately clear what unitary $U$, and hence projector $P^\prime$, should be used here, since the $U_i$ do not necessarily commute so cannot be na\"ively composed. The direct rotation provides a natural choice, however; writing $U_i = e^{S_i}$ for all $i$, we can choose $U = e^{\sum_i S_i}$. The content of the proof is just a long computation to verify that this choice indeed satisfies the gadget definition.

The following proof is a generalisation of a result of Bravyi et al.\cite{bravyi2008quantum}, for which which similar techniques are used.

\begin{proof}[*thm:gadgetcombination]
We begin by reducing to the case of the direct rotation. By \cref{lem:directrotation}, we may replace $\epsilon$ with
\begin{equation}
\tilde{\epsilon} = \epsilon + 4 \eta J = O(\epsilon + \eta J)\ ,
\end{equation}
and hence assume that each gadget $(H_i^\prime,\mathcal{A}_i)$ uses the direct rotation $U_i = e^{S_i}$. Specifically, there exists $P_i \in \Proj(\mathcal{A}_i)$ such that
\begin{equation}\label{eq:gadgetcombindividual}
\| (\Id \otimes P_i) e^{-S_i} H_i^\prime e^{S_i} (\Id \otimes P_i) - H_i\otimes P_i \| \leq \tilde{\epsilon}\ ,
\end{equation}
where $S_i$ is the generator of the direct rotation between the projectors $\Id\otimes P_i$ and $P_i^\prime := e^{S_i} (\Id\otimes P_i) e^{-S_i}$. We have $\|S_i\| \leq O(\eta)$ and $S_i$ is an anti-Hermitian operator which is block off-diagonal with respect to the projectors $\Id\otimes P_i$ and $P_i^\prime$ \cite{bravyi2011schrieffer}.

We define the operators
\begin{equation}
P := \otimes_i P_i \in \Proj(\otimes_i \mathcal{A}_i)\ ,\quad S = \sum_i S_i\ .
\end{equation}
We now use the triangle inequality along with Lemmas \ref{lem:bravyilemma1}-\ref{lem:bravyilemma2} to bound
\begin{align}
&\| (\Id\otimes P) e^{-S} H^\prime e^S (\Id\otimes P) - H\otimes P \| \notag\\ 
&\quad\quad \leq \sum_i \| (\Id\otimes P) (H_i^\prime - [S,H_i^\prime] + \frac{1}{2} [S,[S,H_i^\prime]] - \frac{1}{6} [S,[S,[S,H_i^\prime]]] )(\Id\otimes P) \notag\\
&\quad\quad - H_i\otimes P \| + O(n\eta^4 J^\prime)\ .
\end{align}
Now we bound the terms in the norm separately, using that $S_j$ is block-off-diagonal with respect to $(\Id\otimes P_j)$, and that $(\Id\otimes P_j)$ commutes with $H_k^\prime$ and $S_k$ if $j\neq k$ since $P_j$ acts only on the ancillary $\mathcal{A}_j$ system, whilst $H^\prime_k$ and $S_k$ act on $\mathcal{H}\otimes \mathcal{A}_k$.
\begin{itemize}
    \item $(\Id\otimes P ) [S,H_i^\prime ] (\Id\otimes P )$:

    Expanding $S = \sum_j S_j$, notice that $(\Id\otimes P) [S_j,H_i^\prime] (\Id\otimes P) = 0$ whenever $j \neq i$, since then we can commute $(\Id\otimes P_j)$ past $H_i^\prime$. Hence
    \begin{equation}
    (\Id\otimes P) [S,H_i^\prime] (\Id\otimes P) = (\Id\otimes P) [S_i,H_i^\prime] (\Id\otimes P)\ .
    \end{equation}
    \item  $(\Id\otimes P) [S,[S,H_i^\prime ]] (\Id\otimes P)$:

    Note that if $j \neq k$, then
    \begin{equation}
    (\Id\otimes P) [S_j,[S_k,H_i^\prime]] (\Id\otimes P) = 0\ ,
    \end{equation}
    since at least one of them must be also different to $i$. Then, for instance if $j\neq i$, we can commute $(\Id\otimes P_j)$ past $H_i^\prime$ and $S_k$. Hence it remains to consider terms of the form
    \begin{equation}
    (\Id\otimes P) [S_j,[S_j,H_i^\prime]] (\Id\otimes P)\ .
    \end{equation}
    In this case, if $j \neq i$, then we have
    \begin{align}
        \| (\Id\otimes P) [S_j,[S_j,H_i^\prime]] (\Id\otimes P)\| &\leq 4 \| S_j \|^2 \| (\Id\otimes P_i) H_i^\prime (\Id\otimes P_i) \| \notag\\
        &\leq O( \eta^2 J + \eta^3 J_O^\prime + \eta^4 J^\prime )\ ,
    \end{align}
    using \cref{lem:gadgetlowenergypart}, and neglecting the $O(\eta^2 \epsilon)$ term which is dominated by the other terms in the regime of small $\epsilon$ to which the proposition applies. Hence
    \begin{equation}
    (\Id\otimes P) [S,[S,H_i^\prime]] (\Id\otimes P) = (\Id\otimes P)[S_i,[S_i,H_i^\prime]](\Id\otimes P) + O(\eta^2 J + \eta^3 J_O^\prime + \eta^4 J^\prime)\ .
    \end{equation}
    Here we have used the assumptions of bounded locality and degree in \cref{set:gadgetcombsetup}. This ensures that only $O(1)$ of the local terms in $S$ appear in the commutator $[S,H_i^\prime]$, and similarly for $[S,[S,H_i^\prime]]$.
    \item  $(\Id\otimes P) [S,[S,[S,H_i^\prime]]] (\Id\otimes P)$:

    Here we consider terms of the form $(\Id\otimes P) [S_j,[S_k,[S_l,H_i^\prime]]] (\Id\otimes P)$ in various situations. Firstly, note that if none of $j,k,l$ are equal to $i$ then we can commute $(\Id\otimes P_i)$ into the commutator to obtain
    \begin{align}
        \| (\Id\otimes P) [S_j,[S_k,[S_l,H_i^\prime]]] (\Id\otimes P) \| &\leq 8 \| S_j \| \|S_k \| \| S_l\| \| (\Id\otimes P_i) H_i^\prime (\Id\otimes P_i) \| \notag\\
        &\leq O(\eta^3 J + \eta^4 J_O^\prime + \eta^5 J^\prime)\ ,
    \end{align}
    by \cref{lem:gadgetlowenergypart}, neglecting the $O(\eta^{3} \epsilon)$ term.

    If exactly two of the $j,k,l$ are equal to $i$ ($k=l=i\neq j$, say), then we can commute $(\Id\otimes P_j)$ past the other terms to kill the $S_j$ term, and the expression vanishes.

    If exactly one of the $j,k,l$ is equal to $i$, then by commuting $(\Id\otimes P_i) S_i = S_i (\Id\otimes P_i^\perp)$, we arrive at
    \begin{align}
    \| (\Id\otimes P) [S_j,[S_k,[S_l,H_i^\prime]]] (\Id\otimes P) \| &\leq O(\eta^3) \| (\Id\otimes P_i^\perp) H_i^\prime (\Id\otimes P_i) \| \notag\\
    &= O(\eta^3 J_O^\prime)\ .
    \end{align}
    Hence
    \begin{align}
    (\Id\otimes P) [S,[S,[S,H_i^\prime]]] (\Id\otimes P) &= (\Id\otimes P)[S_i,[S_i,[S_i,H_i^\prime]]] (\Id\otimes P) \notag\\
    &\quad + O(\eta^3 J + \eta^3 J_O^\prime + \eta^5 J^\prime)\ ,
    \end{align}
    once again using the locality assumptions of \cref{set:gadgetcombsetup}.
\end{itemize}
So putting the above bounds together and applying \cref{lem:bravyilemma1}, we obtain
\begin{align}
    \| (\Id\otimes P) e^{-S} H^\prime e^S (\Id\otimes P) - H\otimes P \| &\leq \sum_i \| (\Id\otimes P) e^{-S_i} H_i^\prime e^{S_i} (\Id\otimes P) - H_i\otimes P \|\notag \\
    &\quad + O(n\eta^2 J + n\eta^3 J_O^\prime + n\eta^4 J^\prime)\notag \\
    &\leq O(n\epsilon + n\eta J + n\eta^3 J_O^\prime + n\eta^4 J^\prime)\ ,\label{eq:gadgetcombfinalexpression}
\end{align}
where the last inequality follows because the $H_i^\prime$ are gadgets for the $H_i$.

Noting also that $\| e^S - \Id \| = O(n\eta)$, this completes the proof that $(H^\prime,\mathcal{A})$ is a $(\eta^\prime,\epsilon^\prime)$-gadget for $H$ as required.

\end{proof}

\subsection{Combination of  low-energy gadgets}

Having proved \cref{thm:gadgetcombination}, we know that the $H^\prime$ from \cref{thm:lowenergygadgetcombination} is an $(\eta^\prime,\epsilon^\prime)$-gadget for $H$. It remains to prove that it is in fact a $(\Delta^\prime,\eta^\prime,\epsilon^\prime)$-gadget, which requires replacing the projector $\tilde{P} = e^S (\Id\otimes P) e^{-S}$ in \eqref{eq:gadgetcombfinalexpression} by a low-energy projector $P_{\leq \Delta^\prime(H^\prime)}$.  This requires the use of the following corollary to the Davis-Kahan $\sin\theta$ theorem (\cref{lem:sinthetatheorem}).

\begin{lemma}\label{cor:dkcorollary}
Let $A \in \Herm(\mathcal{H})$, and let $P \in \Proj(\mathcal{H})$ be a projector of the same rank as $P_{\leq \Delta(A)}$, where $P_{\leq \Delta(A)} \in \Proj(\mathcal{H})$ is the projector onto the eigenvectors of $A$ with eigenvalues less than $\Delta$. Suppose that $\| P A P \| \leq \lambda$.

Then, for any $\Delta > \lambda$, the direct rotation $U \in \U(\mathcal{H})$ from $P$ to $P_{\leq \Delta(A)}$ satisfies
\begin{equation}
\| U - \Id \| \leq \frac{\sqrt{2}}{\Delta - \lambda} \| P^\perp A P\|\ .
\end{equation}
\end{lemma}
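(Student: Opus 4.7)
The plan is to reduce this statement to a direct application of the Davis-Kahan $\sin\theta$ theorem (\cref{lem:sinthetatheorem}) by introducing an auxiliary block-diagonal operator. I would define
$$
\tilde{A} := P A P + P^\perp A P^\perp\ ,
$$
which is block-diagonal with respect to $P$ by construction. Meanwhile, the original $A$ is block-diagonal with respect to $P_{\leq \Delta(A)}$, since the latter is a spectral projector of $A$. These are then the two ``$A$-$B$'' pairs needed to feed into \cref{lem:sinthetatheorem}: $\tilde{A}$ with projector $P_A = P$, and $A$ with projector $P_B = P_{\leq \Delta(A)}$. The direct rotation $U$ we want to bound is exactly the rotation from $P_A$ to $P_B$.

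Next I would verify the separation hypothesis. The hypothesis $\|PAP\| \leq \lambda$ gives $\spec(\tilde{A}|_{P\mathcal{H}}) \subset [-\lambda,\lambda]$, so take $\alpha = -\lambda$, $\beta = \lambda$. By definition of $P_{\leq\Delta(A)}$, the spectrum of $A$ on the orthogonal complement lies in $[\Delta,\infty)$. Requiring $[\Delta,\infty) \cap (-\lambda - \lambda_{\gap}, \lambda + \lambda_{\gap}) = \emptyset$ forces $\lambda_{\gap} \leq \Delta - \lambda$, and I would pick the extremal value $\lambda_{\gap} = \Delta - \lambda > 0$, which is well-defined precisely because of the assumption $\Delta > \lambda$.

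The final step is a one-line computation of the perturbation term fed into Davis-Kahan:
$$
(A - \tilde{A}) P = (P A P^\perp + P^\perp A P) P = P^\perp A P\ ,
$$
so $\|(A - \tilde{A}) P\| = \|P^\perp A P\|$. Plugging into the conclusion of \cref{lem:sinthetatheorem} then yields the desired bound
$$
\|U - \Id\| \leq \frac{\sqrt{2}}{\Delta - \lambda}\, \|P^\perp A P\|\ .
$$
There is essentially no hard step here; the only subtlety is the book-keeping around why $P$ is the ``low-energy'' projector of $\tilde{A}$ (which is where the equal-rank hypothesis and $\Delta > \lambda$ come in to guarantee the direct rotation is well-defined), but this follows immediately from $\|PAP\| \leq \lambda < \Delta$ and the rank assumption on $P$.
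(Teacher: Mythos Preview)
Your proposal is correct and is essentially the paper's own proof: the paper applies \cref{lem:sinthetatheorem} with the substitution $A \mapsto PAP$, $B \mapsto A$, which differs from your choice $\tilde{A} = PAP + P^\perp A P^\perp$ only by the irrelevant $P^\perp A P^\perp$ block (it vanishes upon multiplying by $P$, so $(A - \tilde{A})P = (A - PAP)P = P^\perp A P$ either way). The gap verification and the final bound are identical.
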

\begin{proof}[*cor:dkcorollary]
Follows from \cref{lem:sinthetatheorem} using $A \mapsto P A P$, $B \mapsto A$.
\end{proof}

\begin{proof}[*thm:lowenergygadgetcombination]
By \cref{thm:gadgetcombination}, we have
\begin{equation}
\tau := \| (\Id \otimes P) e^{-S} H^\prime e^S (\Id \otimes P) - H\otimes P \| \leq O(n\epsilon + n\eta J + n\eta^3 J_O^\prime + n \eta^4 J^\prime)\ ,
\end{equation}
and by condition \eqref{eq:legadgetcombreq3} this implies that
\begin{equation}\label{eq:tausmallcomparedtodelta}
\frac{\tau}{\Delta} \rightarrow 0 \quad \text{as $n\rightarrow\infty$.}
\end{equation}
Letting $\tilde{P} = e^S (\Id \otimes P) e^{-S}$ and applying the triangle inequality, this gives
\begin{equation} \label{eq:approxlowenergyprojection}
\| \tilde{P} H^\prime \tilde{P} \| \leq \| H \| + \tau\ .
\end{equation}
We seek to use \cref{cor:dkcorollary} to argue that $\tilde{P}$ is ``close to'' $P_{\leq \Delta^\prime (H^\prime)}$. To do this, we start by bounding $\|\tilde{P} H^\prime \tilde{P}^\perp \|$. We have
\begin{align}
\| \tilde{P} H^\prime \tilde{P}^\perp \| &= \| (\Id\otimes P ) e^{-S} H^\prime e^S (\Id\otimes P^\perp) \| \notag\\
&\leq \sum_i \| (\Id\otimes P) e^{-S} H_i^\prime e^S (\Id\otimes P^\perp )\|\ ,\label{eq:lowenergygadgetcombthingtobound}
\end{align}
by the triangle inequality. For each $i$, we define $\tilde{H}_i = e^{-S_i} H_i^\prime e^{S_i}$. Note that this operator is block diagonal in the basis of the projector $(\Id\otimes P_i)$, in which $S_i$ is block off-diagonal. Here we are using the definition of the direct rotation, and the fact that $e^{S_i} (\Id\otimes P_i) e^{-S_i}$ is a low-energy projector for $H_i^\prime$ by definition. Note also that each $\tilde{H}_i$ acts on $O(1)$ sites. Now we can write
\begin{equation}
\| \tilde{P} H^\prime \tilde{P}^\perp \| \leq \sum_i \| (\Id\otimes P) e^{-S} e^{S_i} \tilde{H}_i e^{-S_i} e^S (\Id\otimes P^\perp)\|\ .
\end{equation}
From here we use \cref{lem:bravyilemma1} to expand
\begin{equation}
e^{S_i} \tilde{H}_i e^{-S_i} = \tilde{H}_i + [S_i,\tilde{H_i}] + R_i\ ,
\end{equation}
where $R_i$ acts on $O(1)$ sites, and by \cref{lem:bravyilemma2} we can bound $\|R_i\| = O(J^\prime \eta^2)$.
Hence
\begin{align}
    e^{-S} e^{S_i} \tilde{H}_i e^{-S_i} e^S &= \tilde{H}_i + [S_i,\tilde{H}_i] + R_i \notag\\
    &\quad - [S,\tilde{H}_i + [S_i,\tilde{H}_i] + R_i] \notag\\
    &\quad + \tilde{R}_i\ ,
\end{align}
where the remainder $\tilde{R}_i$ is similarly obtained by Lemmas \ref{lem:bravyilemma1}-\ref{lem:bravyilemma2} with $\|\tilde{R}_i\| = O(J^{\prime}\eta^2)$. Then, using that $S_i$, $\tilde{H}_i$, and $R_i$ each act on $O(1)$ sites, we can estimate
\begin{equation}
e^{-S} e^{S_i} \tilde{H}_i e^{-S_i} e^S = \tilde{H}_i - \sum_{j\neq i} [S_j,\tilde{H}_i] + O(J^\prime \eta^2)\ .
\end{equation}
Note that only $O(1)$ of the terms in the sum will be nonzero, and for $j \neq i$, $S_j$ will commute with $\Id\otimes P_i$. Therefore
\begin{align}
\|(\Id\otimes P) [S_j,\tilde{H}_i] (\Id\otimes P^\perp) \| &\leq 2 \|S_j\| \|(\Id\otimes P_i) \tilde{H}_i\| \notag\\
&\leq O(\eta) \|e^{-S_i} (\Id\otimes P_i) e^{S_i} H_i^\prime\| \notag\\
&= O(\eta) \| P_{\leq \Delta(H_i^\prime)} H_i^\prime\| \notag\\
&\leq O(\eta) (\|H_i\| + \epsilon)
= O(\eta J + \eta \epsilon)\ .
\end{align}
In the last line we have used the fact that $H^\prime$ defines a $(\Delta,\eta,\epsilon)$-gadget for $H_i$. Hence we can conclude that
\begin{equation}
(\Id\otimes P) e^{-S} e^{S_i} \tilde{H}_i e^{-S_i} e^S (\Id\otimes P^\perp) = O( J^\prime \eta^2+ \eta J + \eta \epsilon)\ ,
\end{equation}
and so, inserting into \eqref{eq:lowenergygadgetcombthingtobound}, we have
\begin{equation}
\| \tilde{P} H^\prime \tilde{P}^\perp \| := \omega = O(n J^\prime \eta^2 + n\eta J + n\eta \epsilon)\ .\label{eq:gadgetcomboffdiagonalbound}
\end{equation}
Notice that, since we are interested in the case where $\epsilon = o(1)$ and $J = \theta(1)$, the first two terms on the right-hand side will typically dominate the third one.

Now we show that the restriction of $H^\prime$ to the image of $\tilde{P}^\perp$ has high-energy eigenvalues. Let $\ket{\psi_\mathcal{H}} \otimes \ket{\psi_{\mathcal{A}_i}} \in \mathcal{H} \otimes \mathcal{A}_i$. We consider the expression $(\bra{\psi_\mathcal{H}} \otimes \bra{\psi_{\mathcal{A}_i}}) H_i^\prime (\ket{\psi_\mathcal{H}} \otimes \ket{\psi_{\mathcal{A}_i}})$ in two cases:
\begin{itemize}
    \item \textbf{Case 1: $\ket{\psi_{\mathcal{A}_i}} \in P_i \mathcal{A}_i$}

    Then
    \begin{align}
        &(\bra{\psi_\mathcal{H}} \otimes \bra{\psi_{\mathcal{A}_i}}) H_i^\prime (\ket{\psi_\mathcal{H}} \otimes \ket{\psi_{\mathcal{A}_i}}) \notag\\
        &\quad\quad\geq (\bra{\psi}_\mathcal{H} \otimes \bra{\psi_{\mathcal{A}_i}}) P_{\leq \Delta(H_i^\prime)} H_i^\prime P_{\leq \Delta(H_i^\prime)} (\ket{\psi_{\mathcal{H}}} \otimes \ket{\psi_{\mathcal{A}_i}} ) \notag\\
        &\quad\quad\geq (\bra{\psi}_\mathcal{H} \otimes \bra{\psi_{\mathcal{A}_i}}) e^{S_i} (H_i \otimes P_i) e^{-S_i} (\ket{\psi_{\mathcal{H}}} \otimes \ket{\psi_{\mathcal{A}_i}} )- \epsilon \notag\\
        &\quad\quad\geq \bra{\psi_\mathcal{H}} H_i \ket{\psi_\mathcal{H}} - (\epsilon + 2 J \eta)\ .
    \end{align}

    \item \textbf{Case 2: $\ket{\psi_{\mathcal{A}_i}} \in P_i^\perp \mathcal{A}_i$}

    Then
    \begin{align}
        &(\bra{\psi_\mathcal{H}} \otimes \bra{\psi_{\mathcal{A}_i}}) H_i^\prime (\ket{\psi_\mathcal{H}} \otimes \ket{\psi_{\mathcal{A}_i}}) \notag\\
        &\quad\quad\geq \Delta (\bra{\psi_\mathcal{H}} \otimes \bra{\psi_{\mathcal{A}_i}}) P_{> \Delta(H_i^\prime)} (\ket{\psi_\mathcal{H}} \otimes \ket{\psi_{\mathcal{A}_i}}) \notag\\
        &\quad\quad =\Delta \big((\bra{\psi_\mathcal{H}} \otimes \bra{\psi_{\mathcal{A}_i}}) e^{S_i} (\Id \otimes P_i^\perp) e^{-S_i} (\ket{\psi_\mathcal{H}} \otimes \ket{\psi_{\mathcal{A}_i}}) \big) \notag\\
        &\quad\quad \geq \Delta (1 - 2\eta) \notag\\
        &\quad\quad \geq \bra{\psi_\mathcal{H}} H_i \ket{\psi_\mathcal{H}} \ ,
    \end{align}
    using that $\Delta (1-2\eta) \geq J$ for large enough $n$, as $\Delta = \Omega(n J)$ by assumption.
\end{itemize}

Now consider any $\ket{\psi} \in \mathcal{H} \otimes (\otimes_i \mathcal{A}_i)$ of the form
\begin{equation}
\ket{\psi} = e^S \ket{\psi_\mathcal{H}} \otimes (\otimes_i \ket{\psi_{\mathcal{A}_i}} )\ ,
\end{equation}
where $P_j \ket{\psi_{\mathcal{A}_j}} = 0$ for at least one value of $j$. Note that such states span the image of $\tilde{P}^\perp$. Then
\begin{align}
    \bra{\psi} \tilde{P}^\perp H^\prime \tilde{P}^\perp \ket{\psi} &= \sum_i \bra{\psi} H_i^\prime \ket{\psi} \notag\\
    &\geq \sum_{i \neq j} \big( \bra{\psi_\mathcal{H}} H_i \ket{\psi_\mathcal{H}} - \epsilon - 2J \eta \big) + \Delta (1 - 2\eta) \notag \\
    &\geq \bra{\psi_\mathcal{H}} H \ket{\psi_\mathcal{H}} - (N - 1) (\epsilon + 2J \eta) - J + \Delta (1 - 2\eta) \notag \\
    &\geq \Delta(1-2\eta) - \big( \|H\| + J + N (\epsilon + 2 J \eta) \big) \notag \\
    &\geq \frac{3}{4} \Delta\ ,\label{eq:approxhighenergyprojection}
\end{align}
where we have used the condition \eqref{eq:legadgetcombreq1}.

Now let $\tilde{H} := \tilde{P} H^\prime \tilde{P} + \tilde{P}^\perp H^\prime \tilde{P}^\perp$, so that $\|H^\prime - \tilde{H} \| = O(n J^\prime \eta^2)$ by \eqref{eq:gadgetcomboffdiagonalbound}. Based on \eqref{eq:approxlowenergyprojection} and \eqref{eq:approxhighenergyprojection}, we know that
\begin{equation}\label{eq:tildehspectrumgapthing}
\spec \tilde{H} \subset [-(\|H\| + \tau),\|H\| + \tau] \cup [\frac{3}{4}\Delta,\infty)\ ,
\end{equation}
corresponding to low- and high-energy projectors $\tilde{P}$ and $\tilde{P}^\perp$ respectively. Note that \eqref{eq:legadgetcombreq1} in particular implies that $\Delta \geq 4 \|H\|$, so $\frac{1}{2} \Delta - \|H\| \geq \frac{1}{4} \Delta$. Using this, and \eqref{eq:tausmallcomparedtodelta}, we see
\begin{equation}
\frac{\frac{1}{2} \Delta - (\|H \| + \tau )}{\Delta} \geq \frac{1}{4} - \frac{\tau}{\Delta}  = \Omega(1)
\end{equation}
for large $n$. But by \eqref{eq:legadgetcombreq1}, \eqref{eq:legadgetcombreq2}, and \eqref{eq:legadgetcombreq3},
\begin{equation}
\frac{\omega}{\Delta} = O(n\eta^2 J^\prime / \Delta) + O(n\eta J / \Delta) + O(n\eta\epsilon / \Delta) = o(1)\ .
\end{equation}
for sufficiently large $n$, we will have $\omega < \frac{1}{2} \Delta - (\|H\| + \tau)$, and hence
\begin{equation}
\spec \tilde{H} \subset (-\infty, \frac{1}{2}\Delta - \omega] \cup [\frac{3}{4}\Delta,\infty)\ ,
\end{equation}
once again corresponding to subspaces defined by $\tilde{P}$ and $\tilde{P}^\perp$. Now, by \eqref{eq:gadgetcomboffdiagonalbound} and \cref{lem:weyl}, we see that the full Hamiltonian $H^\prime$ has a $(\leq \frac{1}{2} \Delta)$-low energy subspace with the same dimension as the rank of $\tilde{P}$, and moreover
\begin{equation}
\spec H^\prime \subset (-\infty, \frac{1}{2}\Delta  ]\cup [\frac{3}{4} \Delta - \omega,\infty)\ .
\end{equation}
Now set $\Delta^\prime = \Delta / 2$. Notice that $\|\tilde{P} - P_{\leq \Delta^\prime(H^\prime)}\| < 1$, since otherwise there would be a state of energy less than $\Delta^\prime$ in the image of $\tilde{P}^\perp$, which is disallowed by \eqref{eq:tildehspectrumgapthing}. So the direct rotation $W \in \U(\mathcal{H}\otimes \mathcal{A})$ from $\tilde{P}$ to $P_{\leq \Delta^\prime(H^\prime)}$ is well-defined, and by \cref{cor:dkcorollary}
\begin{equation}
\| W - \Id \| \leq \frac{\sqrt{2}}{\frac{1}{2} \Delta - \| H \| - \tau} \omega\ .
\end{equation}
Again using that $\Delta \geq 4 \|H\|$, so $\frac{1}{2} \Delta - \|H\| \geq \frac{1}{4} \Delta = \Omega(\Delta)$. By \eqref{eq:legadgetcombreq3}, this will dominate the relatively small $\tau$ term, so using \eqref{eq:gadgetcomboffdiagonalbound} and that $\Delta = \Omega(J^\prime)$ we have
\begin{align}
    \|W - \Id \| &= O( \omega / J^\prime) = O(n\eta^2 + n\eta J / J^\prime + n\eta \epsilon / J^\prime )\notag \\
    &= O\big(n\eta^2 (1 + J/\eta J^\prime + \epsilon / J^\prime \eta) \big) \notag \\
    &= O\Big(n\eta^2 \big( 1 + o(1/n) + o(\epsilon / nJ)\big) \Big) = O(n\eta^2)\ ,
\end{align}
where in the last line we used \eqref{eq:legadgetcombreq3}. We can write $W$ in terms of its anti-Hermitian and off-diagonal generator $X$, $W = e^X$, where $\|X\| = O(n\eta^2)$ by \eqref{eq:directrotgeneratornormbound}. Then
\begin{equation}
P_{\leq \Delta^\prime (H^\prime)} = U ( \Id\otimes P) U^\dagger\ ,
\end{equation}
where $U = e^X e^S$. Note that
\begin{align}
    \|U - \Id\| &\leq \| e^X (e^S - \Id) \| + \| e^X - \Id \| \notag\\
    &\leq \|S\| + \|X\| \notag\\
    &= O(n\eta) + O(n\eta^2) = O(n\eta)\ .
\end{align}
It remains to bound find $\epsilon^\prime$ to achieve a bound of the form
\begin{equation}
\| P_{\leq \Delta^\prime(H^\prime)} H^\prime P_{\leq \Delta^\prime(H^\prime)} - U(H \otimes P)U^\dagger \| \leq \epsilon^\prime\ .
\end{equation}
Using \eqref{eq:approxlowenergyprojection} and the triangle inequality we have
\begin{equation}
\| P_{\leq \Delta^\prime(H^\prime)} H^\prime P_{\leq \Delta^\prime(H^\prime)} - U(H \otimes P)U^\dagger \| \leq \| \tilde{P} (e^{-X} H^\prime e^{X} - H^\prime) \tilde{P} \| + \tau\ ,
\end{equation}
and the first term can be bounded using \cref{lem:bravyilemma1} (and bounding the remainder with \eqref{eq:remainderbound}) by
\begin{align}
\| \tilde{P} (e^{-X} H^\prime e^X - H^\prime ) \tilde{P} \| &\leq \| \tilde{P} [X , H^\prime] \tilde{P} \| + \frac{1}{2} \| [X,[X,H^\prime]] \| \notag\\
&\leq 2\|X\| \cdot \| \tilde{P} H^\prime \tilde{P}^\perp \| + 2 \|X\|^2 \cdot \| H^\prime \| \notag\\
&\leq O(n^3 \eta^4 J^\prime + n^2 \eta^3 J )\ .
\end{align}
In the second inequality we have used that $X$ is off-diagonal with respect to $\tilde{P}$, and in the third inequality we have used \eqref{eq:gadgetcomboffdiagonalbound} to bound $\|\tilde{P} H^\prime \tilde{P}^\perp\|$. Hence
\begin{equation}
\epsilon^\prime = O(n\epsilon + n\eta J + n\eta^3 J_O^\prime + n^3\eta^4 J^\prime )\ .
\end{equation}
\end{proof}

\subsection{Ground-state estimation with bounded-strength low-energy gadgets}

The following proof of \cref{thm:lowenergygadgetgse} is a simple corollary of \cref{thm:gadgetcombination}, and generalises the proof of Bravyi et al., Theorem 1\cite{bravyi2008quantum}.

\begin{proof}[*thm:lowenergygadgetgse]
For the first part of this proof, we seek to put a lower bound on the individual gadgets $H_i^\prime$. We write
\begin{equation}
H_i^\prime = P_{\leq \Delta(H_i^\prime)} H_i^\prime P_{\leq \Delta(H_i^\prime)} + P_{> \Delta(H_i^\prime)} H_i^\prime P_{> \Delta(H_i^\prime)}\ ,
\end{equation}
and consider the high- and low-energy parts separately.

For the low-energy part, we can apply the gadget definition to write
\begin{align}
\| P_{\leq \Delta(H_i^\prime)} H_i^\prime P_{\leq \Delta(H_i^\prime)} - H_i \otimes P_i \| &\leq \epsilon + \| e^{S_i} (H_i\otimes P_i) e^{-S_i} - H_i\otimes P_i \| \notag\\
&\leq O(\epsilon + \eta J)\ ,
\end{align}
using \cref{lem:bravyilemma1}, and hence
\begin{equation}\label{eq:gadgetgsehilowenergybound}
P_{\leq \Delta(H_i^\prime)} H_i^\prime P_{\leq \Delta(H_i^\prime)} \geq H_i \otimes P_i + O(\epsilon + \eta J)\ .
\end{equation}
For the high-energy part, we first notice that since the spectrum of $P_{> \Delta(H_i^\prime)} H_i^\prime P_{> \Delta(H_i^\prime)}$ lies in $(\Delta,\infty)$, where $\Delta \geq \|H_i\| - \epsilon$ (by the assumption of the $(\Delta,\eta,\epsilon)$-gadget definition that $P_{\leq \Delta(H^\prime_i)} H^\prime_i P_{\leq \Delta(H^\prime_i)}$ has the same spectrum as $H_i^\prime$ up to error $\epsilon$), and so
\begin{equation}
P_{> \Delta(H_i^\prime)} H_i^\prime P_{> \Delta(H_i^\prime)} \geq P_{> \Delta(H_i^\prime)} (H_i\otimes \Id ) P_{> \Delta(H_i^\prime)} + O(\epsilon) \ .
\end{equation}
Furthermore, we can approximate the RHS of this expression by
\begin{align}
& \| P_{> \Delta(H_i^\prime)} (H_i \otimes \Id) P_{> \Delta(H_i^\prime)} - H_i\otimes P_i^\perp \| \notag\\
&\quad \quad = \| e^{S_i} (\Id\otimes P_i^\perp) e^{-S_i} (H_i\otimes \Id) e^{S_i} (\Id\otimes P_i^\perp) e^{-S_i} - H_i\otimes P_i^\perp \| \notag\\
&\quad\quad \leq O(\eta J)\ ,
\end{align}
by applying \cref{lem:bravyilemma1}. Hence
\begin{equation}\label{eq:gadgetgsehihighenergybound}
P_{> \Delta(H_i^\prime)} H_i^\prime P_{>\Delta(H_i^\prime)} \geq H_i \otimes P_i^\perp + O(\epsilon + \eta J)\ .
\end{equation}

Summing \eqref{eq:gadgetgsehilowenergybound} and \eqref{eq:gadgetgsehihighenergybound} for all $i$, we obtain
\begin{equation}
H^\prime = \sum_i H_i^\prime \geq \sum_i (H_i\otimes \Id + O(\epsilon + \eta J) ) = H\otimes \Id + O(n\epsilon + n\eta J)\ ,
\end{equation}
and so the ground state energy of $H^\prime$ must satisfy
\begin{equation}\label{eq:gadgetgselowerbound}
\lambda_0(H^\prime) \geq \lambda_0(H) + O(n\epsilon + n\eta J)\ .
\end{equation}
Now notice that the restriction of $H^\prime$ to a subspace can only increase $\lambda_0$, so
\begin{equation}
\lambda_0(H^\prime) = \lambda_0(e^{-S} H^\prime e^S ) \leq \lambda_0( (\Id\otimes P) e^{-S} H^\prime e^S (\Id\otimes P) )\ .
\end{equation}
Here, as in the work of Bravyi et al.\cite{bravyi2008quantum}, we abuse notation slightly: in the expression $\lambda_0((\Id\otimes P) e^{-S} H^\prime e^S (\Id\otimes P))$, we implicitly take the ground state of the restriction of $e^{-S} H^\prime e^S$ to the image of $(\Id\otimes P)$. So by \cref{thm:gadgetcombination} we have
\begin{equation}\label{eq:gadgetgseupperbound}
\lambda_0(H^\prime) \leq \lambda_0(H) +O(n\epsilon + n\eta J + n \eta^3 J_O^\prime + n\eta^4 J^\prime) \ .
\end{equation}
Combining \eqref{eq:gadgetgselowerbound}-\eqref{eq:gadgetgseupperbound} gives the desired result.
\end{proof}

\section{Gadget energy scaling}\label{sec:supnot5}

In this section, we prove \cref{thm:gadgetenergyscaling}. Firstly, we introduce the notion of a $k$-local function, which can be thought of as a classical $k$-local observable on a state space $\{0,1\}^n$.

\begin{definition}[$k$-local function]\label{def:localfn} Let $f: \{0,1\}^n \rightarrow \mathbb{R}$ be a function. We say that $f$ is \textit{$k$-local} if it can be written as a sum of functions
\begin{equation}
f(x_1,x_2,\dots,x_n) = \sum_i f_i(x_1,x_2,\dots,x_n)\ ,
\end{equation}
where the $f_i : \{0,1\}^n \rightarrow \mathbb{R}$ each depend on at most $k$ of their inputs.
\end{definition}

The following simple lemmas show that there exist $k$-local functions which cannot be approximated well by $k^\prime$-local functions for $k^\prime < k$. 

\begin{lemma}\label{lem:klocalfunctionlemma1}
Let $f$ be a $k$-local function on $n$ inputs. Then $\mathcal{R} f : \{0,1\}^{n-1} \rightarrow \mathbb{R}$, defined by
\begin{equation}
\mathcal{R} f (x_1,\dots,x_{n-1}) = f(x_1,\dots,x_{n-1},0) - f(x_1,\dots,x_{n-1},1)
\end{equation}
is $(k-1)$-local.
\end{lemma}

\begin{proof}[*lem:klocalfunctionlemma1]
Decomposing $f = \sum_i f_i$ as in \cref{def:localfn}, note that any $f_i$ which does not depend on $x_n$ has $\mathcal{R} f_i = 0$. Moreover, any $f_i$ which does depend on $x_n$ depends on at most $(k-1)$ other inputs, hence $\mathcal{R} f_i$ is $(k-1)$-local.
\end{proof}
\begin{lemma}\label{lem:klocalfunctionlemma2}
Let $k > k^\prime > 0$. There exists a $k$-local function $f : \{0,1\}^{k} \rightarrow \mathbb{R}$ with $\max_{x\in\{0,1\}^{k}} |f(x)| \leq 1$ such that for any $k^\prime$-local function $g : \{0,1\}^{k} \rightarrow \mathbb{R}$,
\begin{equation}
\max_{x\in \{0,1\}^{k}} |f(x) - g(x)| \geq 1\ .
\end{equation}
\end{lemma}

\begin{proof}[*lem:klocalfunctionlemma2]
For any $r\geq 1$, we can define $\mathcal{R}^r f : \{0,1\}^{k-r}\rightarrow \mathbb{R}$ by
\begin{align} \label{eq:iterated_R}
\mathcal{R}^r f(x_1,\dots,x_{k-r}) = \sum_{x_{k-r+1},\dots,x_{k} \in \{0,1\}} (-1)^{\sum_{j=1}^r x_{k-r+j}} f(x_1,\dots,x_{k})\ .
\end{align}
Applying \cref{lem:klocalfunctionlemma1} inductively, note that $\mathcal{R}^r f$ is $(k-r)$-local. In particular, $\mathcal{R}^{k^\prime} g$ is constant for any $k^\prime$-local $g$. 

Let $f : \{x_1,\dots,x_k\}$ be the parity function
\begin{equation}
f(x_1,\dots,x_k) = (-1)^{\sum_{i=1}^k x_i}\ .
\end{equation}
Then we can calculate
\begin{align}
    \mathcal{R}^{k^\prime}f(x_1,\dots,x_{k-k^\prime}) &= \sum_{x_{k-k^\prime+1},\dots,x_k \in \{0,1\}} (-1)^{\sum_{j=1}^{k^\prime} x_{k - k^\prime + j}} \cdot (-1)^{\sum_{i=1}^k x_i} \notag\\
    &= 2^{k^\prime} (-1)^{\sum_{j=1}^{k-k^\prime} x_j}
\end{align}
Hence $\mathcal{R}^{k^\prime} f(x_1,\dots,x_{k-k^\prime})$ takes values $\pm 2^{k^\prime}$, whereas $\mathcal{R}^{k^\prime} g$ is a constant function. So there must exist exist $y \in \{0,1\}^{k-k^\prime}$ such that
\begin{equation}
|\mathcal{R}^{k^\prime}f(y) - \mathcal{R}^{k^\prime}g(y)| \geq 2^{k^\prime}\ .
\end{equation}
Hence, expanding $\mathcal{R}^{k^\prime}f(y)$ and $\mathcal{R}^{k^\prime}g(y)$ into $2^{k^\prime}$ terms using \eqref{eq:iterated_R}, we must have
\begin{equation}
\max_{x \in \{0,1\}^k} |f(x) - g(x) | \geq 1\ .
\end{equation}

\end{proof}

Note that \cref{lem:klocalfunctionlemma2} uses the parity function (which appears in the proof of \cref{thm:gadgetenergyscaling}) for illustration, but a similar argument could apply to most $k$-local functions; the vector space of $k$-local functions has a higher dimension than that of $k^\prime$-local functions.

The following proof uses the intuition from \cref{lem:klocalfunctionlemma2} to argue that the target $k$-local term cannot be reproduced by a $k^\prime$-local Hamiltonian.

\begin{proof}[*thm:gadgetenergyscaling]
By the gadget definition, we have $U \in \U(\mathcal{H} \otimes \mathcal{A})$ and $P \in \Proj(\mathcal{A})$ such that
\begin{equation}
\| U - \Id \| \leq \eta\ ,\quad \| P^\prime H^\prime P^\prime - U( H\otimes P) U^\dagger \| \leq \epsilon\ ,\quad \text{where $P^\prime = U(\Id\otimes P) U^\dagger$}\ .
\end{equation}
For any given $x \in \{0,1\}^k$, define $\ket{\psi_x} \in \mathcal{H}$ to be the pure state whose $i$th qubit is in the state $\ket{x_i}$. Moreover let $\ket{\phi} \in \mathcal{A}$ be some state satisfying $P\ket{\phi} = \ket{\phi}$. Then define functions $F,f : \{0,1\}^k \rightarrow \mathbb{R}$ by
\begin{align}
    F(x) &= \tr[H \proj{\psi_x}]\ , \notag\\
    f(x) &= \tr[H^\prime \big( \proj{\psi_x} \otimes \proj{\phi} \big)]\ .
\end{align}
Notice that $F$ and $f$ are $k$- and $k^\prime$-local respectively, and by \cref{lem:klocalfunctionlemma2} there exists some $y\in \{0,1\}^k$ such that
\begin{equation}
|F(y) - f(y) | \geq J\ .
\end{equation}
On the other hand, for all $x \in \{0,1\}^k$ we have
\begin{align}
    |F(x) - f(x) | &= | \tr[(H\otimes P) (\proj{\psi_x} \otimes \proj{\phi})] - \tr[H^\prime (\proj{\psi_x} \otimes \proj{\phi})] | \notag\\
    &\leq \| (\Id\otimes P) H^\prime (\Id\otimes P) - H\otimes P \| \notag\\
    &= \| P^\prime U H^\prime U^\dagger P^\prime - U(H\otimes P)U^\dagger \| \notag\\
    &\leq \epsilon + \| P^\prime ( H^\prime - U H^\prime U^\dagger )P^\prime \| \notag\\
    &\leq \epsilon + 2\eta \|H^\prime \| \ .
\end{align}
Hence we must have scaling
\begin{equation}
\|H^\prime \| \geq \frac{ J - \epsilon}{2 \eta}\ ,
\end{equation}
as required.
\end{proof}

\section{Dissipative gadgets}\label{sec:supnot6}

\subsection{Dissipative gadgets in isolation}

Here we first prove \cref{prop:zenogadget1}. This follows from direct calculation, by Taylor expanding the expression $e^{-i\delta t H^\prime}$ and identifying the leading order terms in $\delta t$. This approach is complicated by the fact that $H^\prime$ itself consists of terms that are $O((\delta t)^{-1})$ and $O((\delta t)^{-1/2})$, but the task is simplified since we only need to calculate the time evolution of states of the form $\ket{\psi} \otimes \ket{0}$.

\begin{proof}[*prop:zenogadget1]
First, notice that the requirement $H_{\proj{1}}^2 = \omega^2 \Id$ implies that
\begin{equation}\label{eq:h1property}
e^{-i\delta t H_{\proj{1}}} = \Id\ ,\quad H^{-1}_{\proj{1}} = \omega^{-2} H_{\proj{1}}\ .
\end{equation}
Now we expand $e^{-i\delta t H^\prime}$:
\begin{equation}\label{eq:zenogadgetexpansion}
e^{-i\delta t H^\prime} = \sum_{k\geq 0} \frac{(-i\delta t)^k}{k!} \big( H_{\Id} \otimes \Id + H_X \otimes X + H_{\proj{1}} \otimes \proj{1} \big)^k .
\end{equation}
We can expand out this expression so that each term is a product of $a$ factors of $H_{\Id}\otimes \Id$, $b$ factors of $H_X\otimes X$, and $c$ factors of $H_{\proj{1}}$, for some $a,b,c \in \mathbb{N}$. Such a term is accompanied by $(\delta t)^{a + b + c}$, to give a total order of $(\delta t)^{a + \frac{1}{2}b}$ (using that $\|H_X\| = O((\delta t)^{-1/2})$ and $\|H_{\proj{1}}\| = O((\delta t)^{-1})$). There are eight cases producing terms of order $O((\delta t)^{3/2})$ and lower, which we enumerate in \cref{fig:zenoexpansionterms}.

\begin{figure}[H]
    \centering
    \begin{tabular}{c|c|c|c|c}
        Case & $a$ & $b$ & $c$ & Order\\
        \hline
        1 & 0 & 0 & 0 & $O(1)$ \\
        2 & 0 & 0 & $\mathbb{N}_+$ & \\
        \hline
        3 & 0 & 1 & $\mathbb{N}$ & $O((\delta t)^{1/2})$ \\
        \hline
        4 & 1 & 0 & 0 & $O(\delta t)$ \\
        5 & 1 & 0 & $\mathbb{N}_+$ & \\
        6 & 0 & 2 & $\mathbb{N}$ & \\
        \hline
        7 & 1 & 1 & $\mathbb{N}$ & $O(t^{3/2})$ \\
        8 & 0 & 3 & $\mathbb{N}$ & 
    \end{tabular}
    \caption{Enumeration of possible cases for the values of $a,b,c$ giving rise to terms of order $O((\delta t)^{3/2})$ and lower in \eqref{eq:zenogadgetexpansion}. We denote $\mathbb{N}_+ = \{1,2,\dots\}$ and $\mathbb{N} = \{0,1,2,\dots\}$.}
    \label{fig:zenoexpansionterms}
\end{figure}

In particular, we are interested in the block-elements $(\Id \otimes \bra{0}) e^{-i\delta t H^\prime} (\Id\otimes \ket{0})$ and $(\Id \otimes \bra{1}) e^{-i\delta t H^\prime} (\Id\otimes \ket{0})$, since the other blocks in $e^{-i\delta tH^\prime}$ will annihilate states of the form $\ket{\psi} \otimes \ket{0}$.

\begin{itemize}
    \item $(\Id \otimes \bra{0}) e^{-i\delta t H^\prime} (\Id\otimes \ket{0})$:

    Note that each factor of $H_X\otimes X$ flips the ancillary qubit $\mathcal{A}$, whereas each factor of $H_{\proj{1}}\otimes \proj{1}$ annihilates states with the ancillary qubit in state $\ket{0}$. As a result, the only contributions to $(\Id \otimes \bra{0}) e^{-i\delta t H^\prime} (\Id\otimes \ket{0})$ from \cref{fig:zenoexpansionterms} are those such that $b$ is even. Moreover, $c$ can only be nonzero if $b$ is at least 2 (so that the factors of $H_{\proj{1}} \otimes \proj{1}$ can be sandwiched between two $H_X\otimes X$ factors). This restricts us to cases 1, 4, and 6, so
    \begin{equation}
    (\Id \otimes \bra{0}) e^{-i\delta t H^\prime} (\Id\otimes \ket{0}) = \Id - i\delta t H_{\Id} + \sum_{k\geq 2} \frac{(-i\delta t)^k}{k!} H_X H_{\proj{1}}^{k-2} H_X + O((\delta t)^2)\ .
    \end{equation}
    Furthermore, the sum can be simplified to
    \begin{align}
        \sum_{k\geq 2} \frac{(-i\delta t)^k}{k!} H_X H_{\proj{1}}^{k-2} H_X &= H_X H_{\proj{1}}^{-2} \bigg( \sum_{k\geq 2} \frac{(- i\delta t)^k}{k!} H_{\proj{1}}^k \bigg) H_X \notag\\
        &= H_X H_{\proj{1}}^{-2} \Big( e^{-i\delta t H_{\proj{1}}} - \Id + i\delta t H_{\proj{1}} \Big) H_X \notag\\
        &= i \delta t \omega^{-2} H_X H_{\proj{1}} H_X\ ,
    \end{align}
    using \eqref{eq:h1property}. Hence we have shown that
    \begin{equation}\label{eq:zenogadgetexpansion00}
    (\Id \otimes \bra{0}) e^{-i\delta t H^\prime} (\Id\otimes \ket{0}) = \Id - i\delta t (H_{\Id} - \omega^{-2} H_X H_{\proj{1}} H_X) + O((\delta t)^2)\ .
    \end{equation}
    
    \item  $(\Id \otimes \bra{1}) e^{-i\delta t H^\prime} (\Id\otimes \ket{0})$:

    By a similar argument to above, the only contributing terms from \cref{fig:zenoexpansionterms} are those such that $b$ is odd, so we can reduce to the cases 3, 7, and 8. In case 3, also notice that the $H_X\otimes X$ term must appear on the right of all the $H_{\proj{1}} \otimes \proj{1}$ terms. Hence
    \begin{equation}
    (\Id \otimes \bra{1}) e^{-i\delta t H^\prime} (\Id\otimes \ket{0}) = \sum_{k\geq 1} \frac{(-i\delta t)^k}{k!} H_{\proj{1}}^{k-1} H_X + O((\delta t)^{3/2})\ .
    \end{equation}
    The sum here can be similarly simplified by \eqref{eq:h1property}:
    \begin{align}
        \sum_{k\geq 1} \frac{(-i\delta t)^k}{k!} H_{\proj{1}}^{k-1} H_X &= H_{\proj{1}}^{-1} \bigg( \sum_{k \geq 0} \frac{(-i\delta t)^k}{k!} H_{\proj{1}}^k - \Id \bigg) H_X \notag\\
        &= H_{\proj{1}}^{-1} \big( e^{-i\delta t H_{\proj{1}}} - \Id \big) H_X \notag\\
        &= 0\ ,
    \end{align}
    so
    \begin{equation}\label{eq:zenogadgetexpansion10}
    (\Id \otimes \bra{1}) e^{-i\delta t H^\prime} (\Id\otimes \ket{0}) = O((\delta t)^{3/2})\ ,
    \end{equation}
    which, along with \eqref{eq:zenogadgetexpansion00}, completes the proof.
\end{itemize}

\end{proof}

\subsection{Trotter errors}

The idea for the proof of \cref{prop:zenogadget2} is to factorise the overall evolution operator 
\begin{equation}
e^{-i\delta t (H^\prime + H_\text{else} \otimes \Id)} \approx  e^{-i\delta t H_\text{else} \otimes \Id} e^{-i\delta t H^\prime}\ .
\end{equation}
From here, we simply apply \cref{prop:zenogadget1} to the initial state $\ket{\psi} \otimes \ket{0}$, and then evolve the $\mathcal{H}$ system of the resultant state under $H_\text{else}$. The technical difficulty is in bounding the errors of this Trotter expansion in a way which does not depend on the size of the system. Qualitatively, one might expect this behaviour due to the bounded spread of correlations in the system over a short time $\delta t$, under which only a limited set of interactions in $H_\text{else}$ can ``interfere'' with the evolution under $H^\prime$. The difficulty of obtaining such bounds is compounded by the presence of terms in $H^\prime$ which scale as $O((\delta t)^{-1})$ and $O((\delta t)^{-1/2})$. Our approach uses an explicit form of the Trotter error given by Childs et al.\cite{childs2021theory}. We briefly outline this process here.

Let $A,B \in \Lin(\mathcal{H})$. We aim to find an expression for the Trotter error incurred by the expansion $e^{t(A+B)} \approx e^{tA} e^{tB}$.

Observe that the function $f(t) = e^{tA} e^{tB}$ satisfies the differential equation
\begin{align}
    f^\prime (t) &=  A e^{tA} e^{tB} + e^{tA} B e^{tB} \notag \\
    &= (A + B) f(t) + e^{tA} \big( B - e^{-tA} B e^{tA} \big) e^{tB} \label{eq:ftdifferentialeq} \ .
\end{align}
This differential equation, with initial condition $f(0) = \Id$, can be solved using following lemma.
\begin{lemma}[Variation of parameters formula \cite{childs2021theory}]\label{lem:operatordiffeq}\noproofref
Let $K \in \Lin(\mathcal{H})$, and let $L(t) \in \Lin(\mathcal{H})$ be a continuous operator-valued function of $t$. Suppose that $f(t)$ satisfies the differential equation
\begin{equation}
f^\prime (t) = K f(t) + L(t)\ ,\quad f(0) = \Id\ .
\end{equation}
Then there is a unique solution for $f$ which is given by
\begin{equation}
f(t) = e^{tK} + \int_0^t \diff \tau e^{(t-\tau) K} L(\tau)\ .
\end{equation}
\end{lemma}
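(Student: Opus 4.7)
The plan is to verify the stated formula directly by substitution, and then argue uniqueness via the familiar trick of multiplying by an integrating factor $e^{-tK}$. Since the ODE is linear and inhomogeneous, both existence (by producing an explicit solution) and uniqueness (by reducing to a pure quadrature) can be handled in a self-contained way, without appealing to general Picard--Lindel\"of theory.

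For existence, I would set $f(t) = e^{tK} + \int_0^t e^{(t-\tau)K} L(\tau)\, \diff\tau$ and differentiate. The first term contributes $K e^{tK}$. For the integral term, I would apply the Leibniz rule: the boundary derivative at $\tau = t$ yields $e^{0\cdot K} L(t) = L(t)$, and differentiating the integrand in $t$ (which commutes with pulling $K$ out of the integral) yields $K \int_0^t e^{(t-\tau)K} L(\tau)\,\diff\tau$. Combining gives $f'(t) = K f(t) + L(t)$, and $f(0) = \Id$ is immediate from the definition since the integral vanishes at $t=0$.

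For uniqueness, suppose $f$ satisfies the ODE, and define $g(t) = e^{-tK} f(t)$. A direct computation gives $g'(t) = -K e^{-tK} f(t) + e^{-tK} f'(t) = e^{-tK} \bigl( f'(t) - K f(t) \bigr) = e^{-tK} L(t)$, with $g(0) = \Id$. Since $L$ is continuous, the right-hand side is continuous, so $g$ is determined by the fundamental theorem of calculus: $g(t) = \Id + \int_0^t e^{-\tau K} L(\tau)\,\diff\tau$. Multiplying back by $e^{tK}$ recovers the claimed formula for $f$.

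There is no real obstacle here beyond care with the Leibniz rule for operator-valued integrals, which follows from continuity of $L$ and the boundedness of $K$ (so that $e^{tK}$ is norm-continuous and one may commute $K$ with the integral). The integrating-factor argument is the key idea: it converts the ODE into a direct antiderivative and simultaneously supplies both the explicit form of the solution and its uniqueness, making it the most economical route.
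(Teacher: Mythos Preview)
Your proof is correct and complete; both the existence verification via the Leibniz rule and the uniqueness argument via the integrating factor $e^{-tK}$ are standard and go through as written. Note that the paper itself does not prove this lemma at all: it is stated with a citation to \cite{childs2021theory} and marked \texttt{\textbackslash noproofref}, so there is no in-paper proof to compare against. Your argument is exactly the classical variation-of-parameters derivation one would expect for this result.
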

Hence, using \cref{lem:operatordiffeq} with $K = A+B$ and $L(t) = e^{tA}( B - e^{-tA} B e^{tA} ) e^{tB}$, we find that the Trotter error is given by
\begin{equation}\label{eq:trottererrorexpression}
e^{tA} e^{tB} - e^{t(A+B)} = \int_0^t \diff \tau e^{(t-\tau)(A+B)} e^{\tau A} \big( B - e^{-\tau A} B e^{\tau A} \big) e^{\tau B}
\end{equation}
The expression \eqref{eq:trottererrorexpression} is particularly convenient because, when $A$ is a local Hamiltonian and $B$ acts only on $O(1)$ sites, the bracketed term $(B - e^{-\tau A} B e^{\tau A})$ can be bounded independently of $n$.

\begin{lemma}\label{lem:lrthing}
Let $H=\sum_i h_i$ be a $k$-local Hamiltonian on a system $\mathcal{H} = \otimes_{i=1}^n\mathcal{H}_i$, with the degree of the interaction hypergraph bounded by an $O(1)$ constant and $\|h_i\| = O(1)$. Let $A$ be an observable supported on a set of $O(1)$ sites. Then
\begin{equation}
\| e^{itH} A e^{-itH} - A \| \leq O(\|A\| t)\ .
\end{equation}
\end{lemma}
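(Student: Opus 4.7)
The plan is to bound the norm via integration of the time derivative, exploiting that only $O(1)$ interactions in $H$ can fail to commute with $A$.

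First, I would observe the standard identity
$$
e^{itH} A e^{-itH} - A = \int_0^t \frac{d}{ds}\bigl(e^{isH} A e^{-isH}\bigr)\,ds = i\int_0^t e^{isH} [H,A] e^{-isH}\,ds,
$$
so that by unitary invariance of the operator norm,
$$
\| e^{itH} A e^{-itH} - A \| \leq t\,\|[H,A]\|.
$$

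Next, I would expand $[H,A] = \sum_i [h_i, A]$ and argue that only a bounded number of terms survive. Since $A$ is supported on a set $S$ of $O(1)$ sites, any $h_i$ whose support is disjoint from $S$ satisfies $[h_i, A] = 0$. The surviving $h_i$ are those whose support intersects $S$; by the assumption that the interaction hypergraph has $O(1)$ degree (and each site is contained in at most $O(1)$ hyperedges), there are only $O(|S|) = O(1)$ such terms. Each satisfies $\|[h_i,A]\| \leq 2\|h_i\|\|A\| = O(\|A\|)$, so $\|[H,A]\| = O(\|A\|)$, giving the required bound $\|e^{itH} A e^{-itH} - A\| \leq O(t\|A\|)$.

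This is really just a trivial Lieb–Robinson-type estimate that does not even require iterating to get the full light-cone structure — a single application of Duhamel suffices because the commutator $[H,A]$ is already local. The only subtlety to flag is the role of the bounded degree condition: without it, the number of $h_i$ touching $S$ could grow with $n$ and the bound would be lost. There is no real obstacle; the entire argument is a few lines and the novelty is purely in packaging the conclusion for use in the Trotter error analysis of \cref{prop:zenogadget2}.
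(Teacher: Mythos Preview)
Your proposal is correct and essentially the same as the paper's proof: both reduce to bounding $\|[H,A]\|$ by the locality argument you describe. The paper phrases the first step slightly differently---it tests against states $X$, differentiates $f_X(t)=\tr[X(e^{itH}Ae^{-itH}-A)]$, and applies the mean value theorem---but this is just a dual version of your direct Duhamel integral followed by unitary invariance of the norm, and your route is if anything more streamlined.
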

\begin{proof}[*lem:lrthing]
For $X \in \Lin(\mathcal{H})$, define $f_X(t) = \tr[X(e^{itH} A e^{-itH} - A)]$, so that
\begin{equation}\label{eq:fxboundrelationship}
    \| e^{itH} A e^{-itH} - A \| = \max_{X \in D(\mathcal{H})} |f_X(t)|\ .
\end{equation}
We can see that $f_X(0) = 0$ and $f_X^\prime(t) = i\tr[X e^{itH} [H,A] e^{-itH}]$. Moreover, since $H$ is local on an $O(1)$-degree hypergraph, and $A$ is supported on an $O(1)$ set, only $O(1)$ terms in $H$ contribute to the commutator and hence $|f_X^\prime (t) | \leq \|X\|_1  \|[H,A]\| = O(\|X\|_1 \|A\|)$. By the mean value theorem, we therefore deduce that
\begin{equation}
f_X(t) = O( \|X\|_1 \|A\| t)\ ,
\end{equation}
so by \eqref{eq:fxboundrelationship} we are done.
\end{proof}

\subsection{Dissipative gadgets with other terms}

\begin{proof}[*prop:zenogadget2]
Using \eqref{eq:trottererrorexpression} with $t = \delta t$, $A = -iH_\text{else} \otimes \Id$, and $B = -iH^\prime$, we obtain a Trotter error given by
\begin{align}
& e^{-i\delta t H_\text{else} \otimes \Id} e^{-i\delta t H^\prime} - e^{-i\delta t (H_\text{else}\otimes \Id + H^\prime)} := E \notag \\
&\quad\quad = -i \int_0^{\delta t} \diff \tau e^{-i(\delta t - \tau)( H_\text{else}\otimes \Id + H^\prime)} e^{-i\tau H_\text{else}\otimes \Id} \big( H^\prime - e^{i\tau H_\text{else} \otimes \Id} H^\prime e^{-i\tau H_\text{else} \otimes \Id} \big) e^{-i\tau H^\prime} \ .\label{eq:Einitialexpression}
\end{align}

We can write $E$ in block form in the basis of the ancillary space,
\begin{equation}
E = \begin{pmatrix}
    (\Id\otimes \bra{0})E (\Id\otimes \ket{0}) & (\Id\otimes \bra{0})E (\Id\otimes \ket{1}) \\
    (\Id\otimes \bra{1})E (\Id\otimes \ket{0}) & (\Id\otimes \bra{1})E (\Id\otimes \ket{1})
\end{pmatrix}\ ,
\end{equation}
and we will focus on individually bounding these blocks. Notice that, commuting projectors on the ancillary space past $H_\text{else} \otimes \Id$ and applying \cref{lem:lrthing}, we have
\begin{align}
    & \|(\Id\otimes \bra{0} ) (H^\prime - e^{i\tau H_{\text{else}}\otimes \Id} H^\prime e^{-i\tau H_{\text{else}}\otimes \Id}) (\Id\otimes \ket{0} ) \| \notag\\
    &\quad\quad = \|(\Id\otimes \bra{0} ) H^\prime (\Id\otimes \ket{0} ) - e^{i\tau H_{\text{else}}}(\Id\otimes \bra{0} ) H^\prime (\Id\otimes \ket{0} )  e^{-i\tau H_{\text{else}}}\| \notag\\
    &\quad\quad = \| H_{\Id} -  e^{i\tau H_{\text{else}}} H_{\Id}  e^{-i\tau H_{\text{else}}} \| = O(\delta t)\ .
\end{align}
With a similar process for the other blocks, we obtain the following bounds:
\begin{subequations}
\begin{align}
 \|(\Id\otimes \bra{0} ) (H^\prime - e^{i\tau H_{\text{else}}\otimes \Id} H^\prime e^{-i\tau H_{\text{else}}\otimes \Id}) (\Id\otimes \ket{0} ) \| &= O(\delta t)\ , \label{eq:lrblockbound00}\\
 \|(\Id\otimes \bra{1} ) (H^\prime - e^{i\tau H_{\text{else}}\otimes \Id} H^\prime e^{-i\tau H_{\text{else}}\otimes \Id}) (\Id\otimes \ket{0} ) \| &= O(\delta t^{1/2})\ , \label{eq:lrblockbound10}\\
 \|(\Id\otimes \bra{0} ) (H^\prime - e^{i\tau H_{\text{else}}\otimes \Id} H^\prime e^{-i\tau H_{\text{else}}\otimes \Id}) (\Id\otimes \ket{1} ) \| &= O(\delta t^{1/2})\ , \label{eq:lrblockbound01}\\
 \|(\Id\otimes \bra{1} ) (H^\prime - e^{i\tau H_{\text{else}}\otimes \Id} H^\prime e^{-i\tau H_{\text{else}}\otimes \Id}) (\Id\otimes \ket{1} ) \| &= O(1) \label{eq:lrblockbound11}\ .
\end{align}
\end{subequations}

Since our initial state is of the form $\ket{\psi} \otimes \ket{0}$, we need only bound the magnitudes of the blocks $(\Id\otimes \bra{0}) E (\Id\otimes \ket{0})$ and $(\Id\otimes \bra{1}) E (\Id\otimes \ket{0})$. To this end, we need to describe the action of the operator $e^{-i\tau H^\prime}$ on the operators $(\Id \otimes \ket{0})$ and $(\Id\otimes \ket{1})$, for $0\leq \tau \leq \delta t$. By considering the series expansion as in \eqref{eq:zenogadgetexpansion}, and noting that the ancillary qubit can only be flipped by a $H_X\otimes X$ term of order $O((\delta t)^{1/2})$, we see that
\begin{subequations}
\begin{align}
    e^{-i\tau H^\prime} (\Id\otimes \ket{0}) &= O(1) \otimes \ket{0} + O((\delta t)^{1/2}) \otimes \ket{1}\ ,\label{eq:projwobble0} \\
    e^{-i\tau H^\prime} (\Id\otimes \ket{1}) &= O((\delta t)^{1/2}) \otimes \ket{0} + O(1) \otimes \ket{1}\ .\label{eq:projwobble1}
\end{align}
\end{subequations}
Notice that here we abuse big-O notation for matrices; for example, in the above expression $O((\delta t)^{1/2})$ should be interpreted as a matrix with operator norm bounded by $O(\delta t^{1/2})$. We can also crudely upper bound $E$ as follows:
\begin{align}
    \| E \| &\leq \int_0^{\delta t} \diff \tau \| H^\prime - e^{i\tau H_{\text{else}} \otimes \Id} H^\prime e^{-i\tau H_{\text{else}}\otimes \Id} \| \notag\\
    &\leq \int_0^{\delta t} \diff \tau O(1) = O(\delta t)\ ,
\end{align}
using \eqref{eq:lrblockbound11} to give the most pessimistic bound. Therefore in particular
\begin{equation}
e^{i\tau H_\text{else} \otimes \Id} e^{i(\delta t - \tau) (H_\text{else} \otimes \Id + H^\prime)} = e^{i\delta t H_\text{else} \otimes \Id} e^{i(\delta t - \tau) H^\prime} + O(\delta t)\ ,
\end{equation}
so, using \eqref{eq:projwobble0},
\begin{align}
    e^{i\tau H_\text{else}\otimes \Id} e^{i(\delta t - \tau)(H_\text{else} \otimes \Id + H^\prime)} (\Id\otimes \ket{0}) &= e^{-i\delta t H_\text{else} \otimes \Id} e^{i(\delta t - \tau)H^\prime} (\Id\otimes \ket{0}) + O(\delta t)\notag \\
    &= O(1) \otimes \ket{0} + O((\delta t)^{1/2}) \otimes \ket{1} \ ,\label{eq:trotterthingwithzero}
\end{align}
and similarly
\begin{equation}
e^{i\tau H_\text{else}\otimes \Id} e^{i(\delta t - \tau)(H_\text{else} \otimes \Id + H^\prime)} (\Id\otimes \ket{1}) = O((\delta t)^{1/2}) \otimes \ket{0} + O(1) \otimes \ket{1}\ .\label{eq:trotterthingwithone}
\end{equation}

We can now obtain the necessary bounds on the blocks of $E$. Firstly, we have
\begin{align}
    & \| (\Id\otimes\bra{0} ) E (\Id\otimes \ket{0}) \| \notag \\
    &\quad\quad \leq \int_0^{\delta t} \diff \tau \| (\Id\otimes \bra{0} ) e^{-i(\delta t - \tau) (H_{\text{else}} \otimes \Id + H^\prime)} e^{-i\tau H_{\text{else}} \otimes \Id} (H^\prime - e^{i\tau H_{\text{else}} \otimes \Id} H^\prime e^{-i\tau H_{\text{else}} \otimes \Id} ) e^{-i\tau H^\prime} (\Id\otimes \ket{0}) \| \notag \\
    &\quad\quad \leq \int_0^{\delta t} \diff \tau \Big[
        O(1) \| (\Id\otimes \bra{0} ) (H^\prime - e^{i\tau H_{\text{else}}\otimes \Id} H^\prime e^{-i\tau H_{\text{else}}\otimes \Id} ) (\Id\otimes \ket{0} )\| \notag \\
    &\quad\quad\quad + O((\delta t)^{1/2}) \| (\Id\otimes \bra{1} ) (H^\prime - e^{i\tau H_{\text{else}}\otimes \Id} H^\prime e^{-i\tau H_{\text{else}}\otimes \Id} ) (\Id\otimes \ket{0} )\| \notag \\
    &\quad\quad\quad + O((\delta t)^{1/2}) \| (\Id\otimes \bra{0} ) (H^\prime - e^{i\tau H_{\text{else}}\otimes \Id} H^\prime e^{-i\tau H_{\text{else}}\otimes \Id} ) (\Id\otimes \ket{1} )\| \notag \\
    &\quad\quad\quad + O(\delta t) \| (\Id\otimes \bra{1} ) (H^\prime - e^{i\tau H_{\text{else}}\otimes \Id} H^\prime e^{-i\tau H_{\text{else}}\otimes \Id} ) (\Id\otimes \ket{1} )\|
    \Big] \notag \\
    &\quad\quad = O((\delta t)^2)\ , \label{eq:E00bound}
\end{align}
using (\ref{eq:lrblockbound00}-\ref{eq:lrblockbound11}). Similarly, we can bound
\begin{align}
    & \| (\Id\otimes\bra{1} ) E (\Id\otimes \ket{0}) \| \notag \\
    &\quad\quad \leq \int_0^{\delta t} \diff \tau \| (\Id\otimes \bra{1} ) e^{-i(\delta t - \tau) (H_{\text{else}} \otimes \Id + H^\prime)} e^{-i\tau H_{\text{else}} \otimes \Id} (H^\prime - e^{i\tau H_{\text{else}} \otimes \Id} H^\prime e^{-i\tau H_{\text{else}} \otimes \Id} ) e^{-i\tau H^\prime} (\Id\otimes \ket{0}) \| \notag \\
    &\quad\quad \leq \int_0^{\delta t} \diff \tau \Big[
        O((\delta t)^{1/2}) \| (\Id\otimes \bra{0} ) (H^\prime - e^{i\tau H_{\text{else}}\otimes \Id} H^\prime e^{-i\tau H_{\text{else}}\otimes \Id} ) (\Id\otimes \ket{0} )\| \notag \\
    &\quad\quad\quad + O(1) \| (\Id\otimes \bra{1} ) (H^\prime - e^{i\tau H_{\text{else}}\otimes \Id} H^\prime e^{-i\tau H_{\text{else}}\otimes \Id} ) (\Id\otimes \ket{0} )\| \notag \\
    &\quad\quad\quad + O(\delta t) \| (\Id\otimes \bra{0} ) (H^\prime - e^{i\tau H_{\text{else}}\otimes \Id} H^\prime e^{-i\tau H_{\text{else}}\otimes \Id} ) (\Id\otimes \ket{1} )\| \notag \\
    &\quad\quad\quad + O((\delta t)^{1/2}) \| (\Id\otimes \bra{1} ) (H^\prime - e^{i\tau H_{\text{else}}\otimes \Id} H^\prime e^{-i\tau H_{\text{else}}\otimes \Id} ) (\Id\otimes \ket{1} )\|
    \Big] \notag \\
    &\quad\quad = O((\delta t)^{3/2})\ .\label{eq:E10bound}
\end{align}
With the bounds \eqref{eq:E00bound} and \eqref{eq:E10bound} on the blocks of the Trotter error we can now conclude that
\begin{align}
    e^{-i\delta t (H^\prime + H_\text{else} \otimes \Id)} (\ket{\psi} \otimes \ket{0}) &= e^{-i\delta t H_\text{else}\otimes \Id} e^{-i\delta t H^\prime} (\ket{\psi} \otimes \ket{0}) + E(\ket{\psi} \otimes \ket{0}) \notag \\
    &= e^{-i\delta t H_\text{else} \otimes \Id} (e^{-i\delta t H} \ket{\psi} + O((\delta t)^2) ) \otimes \ket{0} + O((\delta t)^{3/2}) \otimes \ket{1} \notag  \\
    &\quad\quad + O((\delta t)^2) \otimes \ket{0} + O((\delta t)^{3/2}) \otimes \ket{1} \notag \\
    &= \big(e^{-i\delta t H_\text{else}} e^{-i\delta t H} \ket{\psi} + O((\delta t)^2) \big)\otimes \ket{0} + O((\delta t)^{3/2}) \otimes \ket{1}\ ,\label{eq:zenonearlythere}
\end{align}
where in the second inequality we invoke \cref{prop:zenogadget1}. It remains only to bound the Trotter error in the product $e^{-i\delta t H_{\text{else}}} e^{-i\delta t H}$, which we can accomplish similarly. Using \eqref{eq:trottererrorexpression} with $t = \delta t$, $A = -iH_\text{else}$, $B=-iH$, we obtain
\begin{align}
    & e^{-i\delta t H_\text{else}} e^{-i\delta t H} - e^{-i\delta t (H + H_\text{else})} \notag \\
    &\quad \quad = -i \int_0^{\delta t} \diff \tau e^{-i(\delta t - \tau)(H + H_\text{else})} e^{-i\tau H_\text{else}} (H - e^{i\tau H_\text{else}} H e^{-i\tau H_\text{else}} ) e^{-i\tau H}\ .\label{eq:babytrottererror}
\end{align}
So by \cref{lem:lrthing} we have
\begin{equation}
e^{-i\delta t H_\text{else}} e^{-i\delta t H} - e^{-i\delta t (H + H_\text{else})}  \leq O((\delta t)^2) \label{eq:babytrotterbound}
\end{equation}
Combining \eqref{eq:babytrotterbound} with \eqref{eq:zenonearlythere} completes the proof.
\end{proof}

\end{document}